\tikzstyle{vertex}=[circle, draw, inner sep=2pt, fill=white]
\renewcommand{\d}{\mathrm{d}}
\newcommand{\e}{{\varepsilon}}
\newcommand{\E}{{\mathbb E}}
\newcommand{\Var}{{\mathbb V}\textup{ar}}
\renewcommand{\P}{{\mathbb P}}
\newcommand{\Q}{{\mathbb Q}}
\newcommand{\R}{{\mathbb R}}
\renewcommand{\S}{{\mathbb S}}
\newcommand{\N}{{\mathbb N}}
\newcommand{\W}{{\mathbb W}}
\newcommand{\X}{{\mathbb X}}
\newcommand{\Y}{{\mathbb Y}}
\newcommand{\Acal}{{\mathcal A}}
\newcommand{\Dcal}{{\mathcal D}}
\newcommand{\Fcal}{{\mathcal F}}
\newcommand{\Ical}{{\mathcal I}}
\newcommand{\Scal}{{\mathcal S}}
\newcommand{\Xx}{{X}}
\newcommand{\Ss}{{S}}
\newcommand{\halfshuffle}{{\widetilde \shuffle}}
\newcommand{\Span}{\textup{span}}
\newcommand{\fdot}{{\,\cdot\,}}
\DeclareMathOperator{\argmin}{argmin}
\DeclareMathOperator{\supp}{supp}
\newtheorem{theorem}{Theorem}
\newtheorem{assumption}[theorem]{Assumption}
\newtheorem{corollary}[theorem]{Corollary}
\theoremstyle{definition}
\newtheorem{definition}[theorem]{Definition}
\newtheorem{remark}[theorem]{Remark}
\newtheorem{example}[theorem]{Example}
\newtheorem{lemma}[theorem]{Lemma}
\newtheorem{notation}[theorem]{Notation}
\newtheorem{proposition}[theorem]{Proposition}
\numberwithin{equation}{section}
\numberwithin{theorem}{section}
\begin{document}

\title{Signature-based models: theory and calibration}
\author{Christa Cuchiero\thanks{Vienna University, Department of Statistics and Operations Research, Data Science Uni Vienna,
		Kolingasse 14-16 1, A-1090 Wien, Austria, christa.cuchiero@univie.ac.at}
	\and Guido Gazzani\thanks{Vienna University, Department of Statistics and Operations Research,
		Kolingasse 14-16 1, A-1090 Wien, Austria, guido.gazzani@univie.ac.at} 
	\and Sara Svaluto-Ferro\thanks{University of Verona, Department of Economics,
		Via Cantarane 24, 37129 Verona, Italy, sara.svalutoferro@univr.it\newline
		The authors gratefully acknowledge financial support by the FWF project I 3852
and through grant Y 1235 of the FWF START-program.
}}

\maketitle
\begin{abstract}

    We consider asset price models whose  dynamics are described by linear functions of the (time extended) signature of a primary underlying process, which can range from a (market-inferred) Brownian motion to a general multidimensional continuous semimartingale. The framework is universal in the sense that classical models can be approximated arbitrarily well and that the model's parameters can be learned from all sources of available data by simple methods. We provide conditions guaranteeing absence of arbitrage as well as tractable option pricing formulas for so-called sig-payoffs, exploiting the polynomial nature of generic primary processes. One of our main focus lies on  calibration, where we consider both  time-series and implied volatility surface data, generated from classical stochastic volatility models and also from S\&P~500 index market data. For both tasks the linearity of the model turns out to be the crucial tractability feature  which allows to get fast and accurate calibrations results.

\end{abstract}

\noindent\textbf{Keywords:}  signature methods, calibration of financial models, Monte Carlo methods, linear (infinite dimensional) systems, polynomial processes
\\
\noindent \textbf{MSC (2010) Classification:} 91B70, 62P05, 65C20.

\tableofcontents
\section{Introduction}

In the past few years data-driven models have successfully entered the area of stochastic modeling and mathematical finance. The paradigm of calibrating a few well interpretable parameters has changed to learning the model's characteristics as a whole, thereby exploiting all available sources of data. Thus highly parametric and overparametrized models  have gained more and more importance. 
On the one hand  this has opened the door to robust and more data-driven model
selection mechanisms, while on the other hand  model classes  still have to be chosen in a way to guarantee first principles from finance like ``no arbitrage''.
Relying on different universal approximation theorems  leads to different well-suited classes of dynamic processes that can serve both purposes.	

One class of such financial models are so-called \emph{neural stochastic differential equations} (SDEs) which are defined as It\^o-diffusions where the drift and the volatility function are parameterized via neural networks (see e.g.~\cite{GSSSZ:20, CKT:20, CRW:21}).
Another class of models, considered in \cite{PSS:20} and inspiring the current work,  are so-called \emph{Sig-SDEs}.  These are again It\^{o}-diffusions, however in this case the characteristics are  linear functions of the \emph{signature} (more precisely introduced below) of some (properly extended) driving Brownian motion.

We consider here a related approach, where the asset price model itself is parameterized as a linear function of the signature of a primary underlying process. This primary process can either be a classical driving signal, e.g.~a Brownian motion, but also a more general tractable stochastic model describing well observable quantities. 

Before going into the details of the current model framework, let us first explain  the mathematical significance of  \emph{signature}, a notion which 
goes back to \cite{C:77, C:57} and plays a particular important role in the context of rough path theory initiated by \cite{L:98}. Indeed, we consider here the \emph{time extended signature} of an $\mathbb{R}^d$-valued path which serves as linear regression basis for continuous path functionals, since
\begin{itemize}
\item it is \emph{point-separating}, as its final value uniquely determines the underlying path;
\item linear functions on the signature form an algebra of continuous functions (with respect to a certain variation distance) that contains 1. More precisely,  
every polynomial on the signature may be realized as a linear function via
the so-called shuffle product $\shuffle$.
\end{itemize}

The Stone-Weierstrass theorem therefore yields a Universal Approximation Theorem (UAT), telling that continuous  path functionals on compact sets 
can be uniformly approximated by a linear function
of the time extended signature.

Therefore signature-based methods provide
a non-parametric way to extract characteristic features (linearly) from time series data, which is essential in machine learning tasks in finance. This explains why these techniques become more and more popular in econometrics and mathematical finance, see e.g., \cite{BHLPW:20, KLP:20, PSS:20, LNP:20, NSSBWL:21, BHRS:21,  MH:21, AGTZ:22} and the references therein.

We consider here signature-based models with the goal to provide a data-driven, universal, tractable and easy to calibrate model
for a set of traded assets $S$. For the sake of exposition we  shall assume throughout that $S$ is one-dimensional. 
As already mentioned above the main ingredient of our modeling framework is  a $d$-dimensional \emph{primary  process} $X=(X_{t}^{1},\dots,X_{t}^{d})_{t \geq 0}$, in most case augmented with time $t$, where $X$ is supposed to be a continuous semimartingale.
We shall denote the time augmented process via
$(\widehat{X}_{t})_{t \geq 0}=(t,X_{t}^{1},\dots,X_{t}^{d})_{t \geq 0}$  and assume that its signature denoted by $\widehat{\mathbb{X}}$ serves a linear regression basis for $S$.
Indeed, $S$ is modeled\slash approximated via a process $S_{n}(\ell)$ defined as
\begin{align}\label{eq:Sn}
S_n(\ell)_t:=\ell(\widehat{\mathbb{X}}_t),
	\end{align}
	where $\ell$ is  a linear map of the signature of  $\widehat{X}$ up to degree 
	$n\in \mathbb{N}$ that has to be inferred from data (see Definition \ref{def:model} for further details). The attractiveness of this model class arises from several important features that we summarize in the sequel.

\begin{description}

\item[No arbitrage:] As shown in Section \ref{sec1}, the model of form \eqref{eq:Sn} can also be expressed in terms of stochastic integrals. From this representation 
conditions guaranteeing no-arbitrage can in turn be easily deduced.

\item[Universality:]  
As we argue in Example \ref{example:tilde:sv:1d}
classical stochastic volatility models driven by Brownian motions with sufficiently regular  coefficients
can be arbitrarily  well approximated
by models of form \eqref{eq:Sn}, when choosing  the driving Brownian motions (possibly modified with some drift) as primary process $X$. Locally in time this can  be inferred from the stochastic Taylor expansion  that  we state in Proposition \ref{prop214}, which gives quantitative estimates. It can also seen as a consequence of the Stone-Weierstrass theorem in the way formulated e.g.~in Lemma 5.2 in~\cite{BHRS:21} (see also \cite{CM:22}), however without a convergence rate.

\item[Tractable option pricing formulas for sig-payoffs:] 
By relying on the above UAT and in turn on approximations via so-called sig-payoffs of the form $L(\widehat{\mathbb{S}}_T ({\ell}) )$ (see also \cite{LNP:20}), where  $\widehat{\mathbb{S}}$ denotes the signature of $(\widehat{S}_t)_t:=(t, S_t)_t$ and $L$ a linear map,
this kind of approximate options pricing  reduces to the computation of the expected signature of $\widehat{X}$. Indeed, in Section \ref{sec32} we first show how to express  $\widehat{\mathbb{S}}_T ({\ell}) $ in terms of the signature of $\widehat{X}$ and then translate this to the computation of $\mathbb{E}_{\mathbb{Q}}[L(\widehat{\mathbb{S}}_T ({\ell}) \rangle]$ under some pricing measure $\mathbb{Q}$. These formulas can then be applied whenever $\mathbb{E}_{\mathbb{Q}}[\widehat{\mathbb{X}}_T]$ can be easily computed. This is the case for
highly generic primary processes  of the form
\begin{align*}
\d\widehat{X}_{t}=\mathbf{b}(\widehat{\mathbb{X}}_t) \d t + \sqrt{\mathbf{a}(\widehat{\mathbb{X}}_t)} \d B_{t},  
\end{align*}
where $\mathbf{b}$ and $\mathbf{a}$ are linear maps.
Indeed, as shown in \cite{CST:22} these processes can be seen 
as projections of  extended tensor algebra valued polynomial processes (introduced in \cite{CKT:12, FL:16}), which implies that 
the expected signature  can  be computed by solving  a  linear ODE. This ODE is usually infinite dimensional, but if $\widehat{X}$ is itself a polynomial process it becomes finite dimensional. We exemplify this polynomial process point of view by means of time extended correlated Brownian motions, a particular simple polynomial process, in Section \ref{secBM}.

Note that similarly to polynomial approximations the approximation of vanilla call and put options via sig-payoffs is not straightforward and can involve several problems  as we point out in Section \ref{sec:calib-sigpayoffs}. Nevertheless sig-payoffs can  be used for variance reduction techniques (see Section \ref{sec:MCvar}) and are interesting in their own right as  certain path dependent options like Asian forwards fall into this class.
 
\item[Calibration to time series data:] 
The tractability of the model class given by \eqref{eq:Sn} becomes particularly clear in view of calibration tasks. Indeed, when the goal is to calibrate to times series data of prices $(S_{t_i})_{i=1}^N$ or spot volatility\slash spot variance $(V_{t_i})_{i=1}^N$,
this task reduces to a simple linear regression. We illustrate this well working procedure on out-of-sample data generated from Heston, SABR and multivariate Black-Scholes models  in Section \ref{sec:calibration_ts}.

\item[Calibration to options:] 
When calibrating to the market's volatility surface (see Section~\ref{sec:calibration_options}), we exploit again the linearity of the model by precomputing Monte Carlo samples of $\widehat{\mathbb{X}}$ and then performing a standard optimization to find the parameters of the linear map $\ell$. By initializing the parameters of $\ell$ appropriately
this optimization task actually becomes a convex problem, which makes it particularly tractable.
On simulated and real market data (S$\&$P 500 index) we show that
 a full calibration to the  volatility surface, in particular when using time dependent parameters, is  highly accurate  and very fast.
 Note that \emph{sampling from the calibrated model} is also particularly easy since it just means computing a scalar product between the parameters and the trajectory of the signature.\footnote{The codes which we used to generate the results in Section \ref{sec:calibration}  are available at  \href{https://github.com/GuidoGazzani-ai/sigsde_calibration}{https://github.com/GuidoGazzani-ai/sigsde$\_$calibration}.}

\end{description}

Let us remark that in \eqref{eq:Sn} the signature process of the time extended primary process $\widehat{X}$ could be replaced by other  processes representing the features of $S$. Indeed, instead of $\widehat{\mathbb{X}}$ one could for instance use randomized signature as introduced in \cite{CGGOT:21a, CGGOT:21b}  and work with a linear model of this randomized signature. Not all of the above properties, in particular the pricing formulas of the sig-payoffs, could then be preserved, but the tractability properties for calibration would remain valid and could potentially be speeded up even further, if the analog of $\widehat{\mathbb{X}}$ can be precomputed faster. We shall exploit this in a follow-up paper on VIX-calibration.

\subsection{Relation to the literature}

Let us here briefly summarize some relevant and recent literature on calibration of (data-driven) asset price models.
Closely related to the current work is the inspiring paper by \cite{PSS:20} (see also \cite{P:20}). Indeed, the class of Sig-SDEs
considered there can be embedded
in our framework by choosing a properly extended one-dimensional Brownian
motion (namely a lead-lag transformation) as primary  process. This is due to the fact that \eqref{eq:Sn} can also be expressed 
in terms of stochastic integrals as stated in Proposition \ref{prop1}.
While in \cite{PSS:20} calibration was considered for certain options within the Black-Scholes model, we here provide calibration results to both, time-series data and volatility surfaces generated from classical stochastic volatility models and also from real market data. A crucial difference is that the model in \cite{PSS:20} is calibrated to sig-payoffs that approximate vanilla call and puts. As we encountered some problems with this procedure (as outlined in Appendix \ref{sec:calib-sigpayoffs}), we rely on Monte Carlo pricing which is fast due to the possibility to precompute all samples of $\widehat{\mathbb{X}}$ in advance.

Other promising calibration results, also in view of joint calibration of SPX and VIX options, have for instance been achieved in the following recent papers \cite{GSSSZ:20, GLOW:22, GJ:21, GJ:22, JMP:21, RZ:21, R:22, V:22}, see also the references therein.

The remainder of the article is structured as follows: in Section \ref{sec:sig} we recall important concepts concerning the signature of continuous semimartingales, while Section \ref{sec:model} builds on these results to introduce our model framework. Section \ref{sec:calibration} is then dedicated to present our calibration results.

\section{Signature: definition and properties}\label{sec:sig}

We start by introducing basic notions related to the definition of the  signature of an $\mathbb{R}^{d}$-valued continuous semimartingale. For similar introductions to the concept of signature we refer e.g.~to
Section~2.2. in \cite{BHRS:21}.

\subsection{Basic notions}
For each $n \in \mathbb{N}_0$ consider the $n$-fold tensor product of $\mathbb{R}^{d}$ given by
\begin{equation*}
(\mathbb{R}^{d})^{\otimes 0}:=\mathbb{R}, \qquad (\mathbb{R}^{d})^{\otimes n}:=\underbrace{\mathbb{R}^{d}\otimes\cdots\otimes\mathbb{R}^{d}}_{n}.
\end{equation*}
	For $d\in \N$, we define the extended tensor algebra on $\mathbb{R}^{d}$ as 
	\begin{equation*}
		T((\mathbb{R}^{d})):=\{\textbf{a}:=(a_{0},\dots,a_{n},\dots) : a_{n}\in(\mathbb{R}^{d})^{\otimes n}\}.
	\end{equation*}
	Similarly we introduce the truncated tensor algebra of order $N \in \mathbb{N}$ 
		\begin{equation*}
		T^{(N)}(\mathbb{R}^{d}):=\{\textbf{a}\in T((\mathbb{R}^{d})) : a_{n}=0, \forall n>N\},
	\end{equation*}
	and the tensor algebra
$
		T(\mathbb{R}^{d}):=\bigcup_{N\in \N}T^{(N)}(\mathbb{R}^{d}).
$
Note that $T^{(N)}(\R^{d})$ has dimension $\sum_{i=0}^{N}d^{i}=(d^{N+1}-1)/(d-1)$.

	For each $\textbf{a},\textbf{b}\in T((\mathbb{R}^{d}))$ and  $\lambda\in\R$ we set 
	\begin{align*}
		\textbf{a}+\textbf{b}&:=(a_{0}+b_{0},\dots,a_{n}+b_{n},\dots),\\
		\lambda \textbf{a}&:= (\lambda a_{0},\dots, \lambda a_{n},\dots),\\
		\textbf{a}\otimes \textbf{b}&:=(c_{0},\dots, c_{n},\dots),
	\end{align*}
where $c_{n}:=\sum_{k=0}^{n}a_{k}\otimes b_{n-k}$. Observe that $(T((\mathbb{R}^{d})),+,\cdot,\otimes)$ is a real non-commutative algebra with neutral element \textbf{\textup{1}}$=(1,0,\dots,0,\dots)$.

	 For a multi-index $I:=(i_1,\ldots,i_n)$  we set $|I|:=n$. We also consider the empty index $I:=\emptyset$ and set $|I|:=0$. If $n\geq 1$ or $n\geq 2$ we set $I':=(i_1,\ldots,i_{n-1})$, and $I'':=(i_1,\ldots,i_{n-2})$, respectively. We also use the notation
	$$\{I\colon|I|=n\}:=\{1,\ldots,d\}^n,$$
	omitting the parameter $d$ whenever this does not introduce ambiguity. Observe that multi-indices can be identified with words, as it is done for instance in \cite{LNP:20}.

Next, for each $|I|\geq 1$ we set
	$$e_I:=e_{i_1}\otimes\cdots\otimes e_{i_n},$$
	where $e_1, \ldots, e_d$ denote the canonical basis vectors of $\mathbb{R}^d$.
	Observe that the set $\{e_I\colon |I|=N\}$ is an orthonormal basis of $(\mathbb{R}^{d})^{\otimes N}$.
	Denoting by $e_\emptyset$ the basis element corresponding to $(\R^d)^{\otimes 0}$, each element of $\textbf{a}\in T((\R^d))$ can thus be written as
	$$\textbf{a}=\sum_{|I|\geq 0}a_Ie_I,$$
	for some $a_I\in \R$.
	Finally, for each $\textbf{a}\in T(\R^d)$ and each $\textbf{b}\in T((\R^d))$ we set
		$$\langle \textbf{a},\textbf{b}\rangle:=\sum_{|I|\geq 0}\langle a_{I},b_{I}\rangle.$$
Observe in particular that $b_I=\langle e_I,\textbf{b}\rangle$.	

Throughout the paper we consider a filtered probability space $(\Omega, \Fcal, (\Fcal_t)_{t\geq0},\P)$. Whenever not specified, stochastic processes are always supposed to be defined there.
We are now ready to define the signature of an $\mathbb{R}^d$-valued continuous semimartingale.

\begin{definition}
	Let $(X_t)_{t\in[0,T]}$ be a continuous $\mathbb{R}^{d}$-valued semimartingale. The \emph{signature of $X$} is the $T((\R^d))$-valued process  $(s,t)\mapsto \X_{s,t}$ whose components are recursively defined as

\begin{equation*}
	\langle e_{\emptyset},\mathbb{X}_{s,t}\rangle:=\textup{1}, \qquad \langle e_{I}, \mathbb{X}_{s,t}\rangle:=\int_{s}^{t}\langle e_{I'},\mathbb{X}_{s,r}\rangle\circ \mathrm{d} X_{r}^{i_{n}},
\end{equation*}
for each $I=(i_1,\ldots, i_n)$ , $I'=(i_1,\ldots, i_{n-1})$ and $0\leq s\leq t\leq T$, where $\circ$ denotes the Stratonovich integral. 
Its projection $\X^N$ on $T^{(N)}(\mathbb{R}^{d})$ is given by
\begin{equation*}
	\mathbb{X}_{s,t}^{N}=\sum_{|I|\leq N} \langle e_{I}, \mathbb{X}_{s,t}\rangle e_{I}
\end{equation*}
and is called \emph{signature of $X$ truncated at level $N$}. If $s=0$, we use the notation $\X_t$ and $\X_t^N$, respectively.
\end{definition}
Observe that the signature of $X$ and the signature of $X-c$ coincides for each $c\in \R$. Moreover, with an equivalent notation we can write
\begin{align*}
    \X_t&=\bigg(1,\int_0^t1\circ \d X_s^1,\ldots, \int_0^t1\circ \d X_s^d,\int_0^t\bigg(\int_0^s 1 \circ \d X_r^1\bigg)\circ \d X_s^1,\\
&\qquad\int_0^t\bigg(\int_0^s 1 \circ \d X_r^1\bigg)\circ \d X_s^2,
\ldots,
\int_0^t\bigg(\int_0^s 1 \circ \d X_r^d\bigg)\circ \d X_s^d,\ldots\bigg).
\end{align*}
Using It\^o integrals this can thus be rewritten as 
\begin{align*}
    \X_t&=\bigg(1,X_t^1-X^1_0,\ldots, X_t^d-X_0^d,\int_0^t (X_s^1-X^1_0)  
\d X_s^1+\frac 1 2 [X^1]_t,\\
&\qquad\int_{0}^{t} (X_s^1-X^1_0)  
\d X_s^2+\frac 1 2 [X^1,X^2]_t,
\ldots,
\int_{0}^{t} (X_s^d-X^d_0)  
\d X_s^d+\frac 1 2 [X^d]_t,\ldots\bigg),
\end{align*}
where the square brackets denote the quadratic variation and covariation processes. In particular, by  the definition of the Stratonovich integral and It\^o's formula we can write
$$\langle e_{(i,i)},\X_t\rangle
=\int_0^t\bigg(\int_0^s 1\circ \d X^i_r\bigg)\circ \d X^i_s
=\frac 1 2 (X_t^i-X_0^i)^2=\langle e_i,\X_t\rangle^2,$$
showing that the quadratic expression on the right hand side has a linear representation. This property generalizes to every polynomial function. For the precise statement we first need to introduce the shuffle product. 

\begin{definition}\label{shuffle-product}
	For every two multi-indices $I:=(i_1,\ldots,i_n)$ and $J:=(j_1,\ldots,j_m)$ the \emph{shuffle product} is defined recursively as
	\begin{align*}
		e_{I}\shuffle e_{J}:= (e_{I'}\shuffle e_{J})\otimes e_{i_n}+(e_{I}\shuffle e_{J'})\otimes e_{j_m},
	\end{align*}
	with $e_{I}\shuffle e_{\emptyset}:= e_{\emptyset}\shuffle e_{I}= e_{I}$. It extends to $\textbf{a},\textbf{b}\in T(\R^d)$ as
	$$\textbf{a}\shuffle\textbf{b}=\sum_{|I|,|J|\geq0}a_Ib_J (e_I\shuffle e_J).$$
\end{definition}
Observe that $(T(\R^d),+,\shuffle)$ is a commutative algebra, which in particular means that the shuffle product is associative and commutative.

The proof of the rough paths version of the next result for  can be found for instance in \cite{R:58} or \cite{LCL:07}. 
\begin{proposition}[Shuffle property]\label{shuffle-property}
		Let $(X_t)_{t\in[0,T]}$ be a continuous $\mathbb{R}^{d}$-valued semimartingale and $I,J$ be two multi-indices. Then
	\begin{equation}\label{eqn5}
		\langle e_{I},\mathbb{X}\rangle \langle e_{J}, \mathbb{X}\rangle =\langle e_{I}\shuffle e_{J}, \mathbb{X}\rangle.
	\end{equation}
\end{proposition}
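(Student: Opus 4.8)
The plan is to prove the shuffle property by induction on the total length $|I| + |J|$, using the recursive definitions of both the signature and the shuffle product, together with the integration-by-parts formula for Stratonovich integrals (which, unlike It\^o integration, obeys the classical Leibniz rule with no quadratic-variation correction term). This last point is the reason the statement is clean: had we used It\^o integrals, extra bracket terms would appear and the shuffle product would have to be replaced by a deformed (quasi-shuffle / half-shuffle) product.

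\emph{Base case.} If $|I| + |J| = 0$, then $I = J = \emptyset$ and both sides equal $1$. More generally if either index is empty, say $J = \emptyset$, then $\langle e_I, \X\rangle \langle e_\emptyset, \X\rangle = \langle e_I, \X\rangle \cdot 1 = \langle e_I, \X\rangle = \langle e_I \shuffle e_\emptyset, \X\rangle$, so we may assume $n := |I| \geq 1$ and $m := |J| \geq 1$.

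\emph{Inductive step.} Write $I = (I', i_n)$ and $J = (J', j_m)$. By the recursive definition of the signature,
\begin{align*}
\langle e_I, \X_{s,t}\rangle &= \int_s^t \langle e_{I'}, \X_{s,r}\rangle \circ \d X_r^{i_n}, \qquad
\langle e_J, \X_{s,t}\rangle = \int_s^t \langle e_{J'}, \X_{s,r}\rangle \circ \d X_r^{j_m}.
\end{align*}
Apply Stratonovich integration by parts to the product $t \mapsto \langle e_I, \X_{s,t}\rangle \langle e_J, \X_{s,t}\rangle$: since both factors vanish at $t = s$, we get
\begin{align*}
\langle e_I, \X_{s,t}\rangle \langle e_J, \X_{s,t}\rangle
= \int_s^t \langle e_I, \X_{s,r}\rangle \,\langle e_{J'}, \X_{s,r}\rangle \circ \d X_r^{j_m}
+ \int_s^t \langle e_J, \X_{s,r}\rangle \,\langle e_{I'}, \X_{s,r}\rangle \circ \d X_r^{i_n}.
\end{align*}
Now the integrand $\langle e_I, \X\rangle \langle e_{J'}, \X\rangle$ has index-length total $n + (m-1) < n + m$, so the induction hypothesis gives $\langle e_I, \X\rangle \langle e_{J'}, \X\rangle = \langle e_I \shuffle e_{J'}, \X\rangle$, and similarly $\langle e_{I'}, \X\rangle \langle e_J, \X\rangle = \langle e_{I'} \shuffle e_J, \X\rangle$. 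Substituting and using linearity of $\ell \mapsto \int_s^t \langle \ell, \X_{s,r}\rangle \circ \d X_r^{k}$ together with the signature recursion $\langle e_{K} \otimes e_k, \X_{s,t}\rangle = \int_s^t \langle e_K, \X_{s,r}\rangle \circ \d X_r^k$ (extended by linearity to $e_K \in T(\R^d)$), the right-hand side becomes
\begin{align*}
\langle (e_I \shuffle e_{J'}) \otimes e_{j_m}, \X_{s,t}\rangle + \langle (e_{I'} \shuffle e_J) \otimes e_{i_n}, \X_{s,t}\rangle
= \langle (e_{I'} \shuffle e_J)\otimes e_{i_n} + (e_I \shuffle e_{J'})\otimes e_{j_m}, \X_{s,t}\rangle,
\end{align*}
which is exactly $\langle e_I \shuffle e_J, \X_{s,t}\rangle$ by the recursive definition of the shuffle product. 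This closes the induction, and the general statement for $\textbf{a}, \textbf{b} \in T(\R^d)$ follows by bilinearity.

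\emph{Main obstacle.} The only genuinely delicate point is the justification of Stratonovich integration by parts in this setting: one must check that the Stratonovich integrals $\int_s^t \langle e_{I'}, \X_{s,r}\rangle \circ \d X_r^{i_n}$ are well-defined continuous semimartingales (so that iterated Stratonovich integrals make sense and the product rule applies), and that the classical Leibniz rule $\d(YZ) = Y \circ \d Z + Z \circ \d Y$ holds with no correction. This is standard — Stratonovich calculus obeys ordinary first-order calculus — but it is the structural fact doing all the work, and it is worth noting that this is precisely the place where the choice of Stratonovich (rather than It\^o) integration in the definition of the signature is essential. Alternatively, one can cite the rough-path version of the statement from \cite{R:58} or \cite{LCL:07} and invoke the fact that the Stratonovich signature of a continuous semimartingale coincides almost surely with the rough-path signature of its canonical (Stratonovich) rough path lift; this is the route the authors allude to in the line preceding the proposition.
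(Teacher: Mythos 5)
Your proof is correct and follows exactly the route the paper indicates: induction on $|I|+|J|$ combined with the Stratonovich product/chain rule, with the base case handled by $e_I\shuffle e_\emptyset=e_I$. The paper's proof is a one-line sketch of precisely this argument, so your write-up simply supplies the details it omits.
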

\begin{proof}
	 The result follows by induction using the chain rule for Stratonovich integrals.
\end{proof}

\begin{example}
	Let $(X_t)_{t\in[0,T]}$ be a continuous $\mathbb{R}^{d}$-valued semimartingale with 
	$X_{0}=0$. Then the (Stratonovich) integration by parts formula yields, for any $i,j\in\{1,\dots,d\}$
	\begin{align*}
	\langle e_{i},\mathbb{X}_{T}\rangle \langle e_{j},\mathbb{X}_{T}\rangle
	=X_{T}^{i}X_{T}^{j}&=\int_{0}^{T}X_{t}^{i}\circ{\d}X_{t}^{j}+\int_{0}^{T}X_{t}^{j}\circ{\d}X_{t}^{i},\\
	&= \langle e_{i}\otimes e_{j},\mathbb{X}_{T}\rangle +  \langle e_{j}\otimes e_{i},\mathbb{X}_{T}\rangle\\
	&= \langle e_{i}\shuffle e_{j}, \mathbb{X}_{T}\rangle.
	\end{align*}
\end{example}

\begin{example}\label{ex2}
	Let $(X_t)_{t\in[0,T]}$ be a continuous $\mathbb{R}$-valued semimartingale with 
	$X_{0}=0$. Then $\langle e_1,\X_T\rangle=X_T-X_0$. Since
	$$\underbrace{e_1\shuffle\ldots\shuffle e_1}_{k\text{-times}}=k! \underbrace{e_1\otimes\ldots\otimes e_1}_{k\text{-times}},$$
	  Proposition~\ref{shuffle-property} yields
	\begin{align*}
	\X_T=(1,X_T-X_0,\frac 1 2(X_T-X_0),\frac 1 {3!}(X_T-X_0)^3,\ldots,\frac 1 {k!}(X_T-X_0)^k,\ldots).
	\end{align*}
\end{example}
We recall now an important property of the signature. The result is known in the rough paths literature (see for instance \cite{BL:2016}), but can also be proved directly in the simpler situation of a continuous semimartingale  that contains time as strictly monotone component.

\begin{lemma}[Uniqueness of the signature]\label{uniqueness_sig}
	Let $(X_t)_{t\in[0,T]}$ and $(Y_t)_{t\in[0,T]}$ be two continuous $\R^{d}$-valued semimartingales with $X_0=Y_0=0$. Set $\widehat X_t:=(t,X_t)$, $\widehat Y_t:=(t,Y_t)$ and let $\widehat \X$ and $\widehat \Y$ be the corresponding signature processes. Then $\widehat\X_T=\widehat\Y_T$  if and only if $X_t=Y_t$ for each $t\in[0,T]$. 
\end{lemma}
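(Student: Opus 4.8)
The "if" direction is trivial: if $X_t = Y_t$ for all $t\in[0,T]$ then $\widehat X = \widehat Y$ as paths, so their signatures agree, in particular at the endpoint. The content is in the "only if" direction, so assume $\widehat\X_T = \widehat\Y_T$ and aim to conclude $X_t = Y_t$ for every $t$. The key structural observation is that the first coordinate of $\widehat X$ is $t$, a strictly increasing $C^1$ path of finite variation, so we can use it as a "clock" and recover the other coordinates by reading off suitable iterated integrals.

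\textbf{Step 1: reduce to recovering increments of $\widehat X$ over $[0,t]$ from $\widehat\X_T$.} The plan is to show that the full signature path $(\widehat\X_t)_{t\in[0,T]}$ is determined by the single endpoint $\widehat\X_T$. Since the $0$-th component of $\widehat X$ is time, for $s\le t$ the shuffle/integration identities give $\langle e_0^{\shuffle k},\widehat\X_{s,t}\rangle = (t-s)^k$ (up to the factorial, exactly as in Example~\ref{ex2}), so the increment of the time component between any two "signature times" is visible. More importantly, I would show that for a fixed index $I$ the function $t\mapsto \langle e_{(I,0)},\widehat\X_t\rangle = \int_0^t \langle e_I,\widehat\X_s\rangle\,\d s$ (integration against $\d\widehat X^0_s = \d s$ is an ordinary integral, no Stratonovich correction since time has zero quadratic variation against everything). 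Hence $\langle e_I,\widehat\X_t\rangle = \frac{\d}{\d t}\langle e_{(I,0)},\widehat\X_t\rangle$, and iterating, every coordinate $t\mapsto\langle e_I,\widehat\X_t\rangle$ can be obtained from the coordinates $t\mapsto\langle e_{(I,0^m)},\widehat\X_t\rangle$ by differentiation in $t$. This is the mechanism that turns knowledge at the endpoint into knowledge of the whole path.

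\textbf{Step 2: upgrade endpoint knowledge to path knowledge.} Using Step~1 I would argue that the map $t\mapsto \widehat\X_t$ is the unique solution, as a path, to its own defining system of (Stratonovich) ODEs/SDEs driven by $\widehat X$, but the crucial point is rather that we can reconstruct the \emph{real-analytic-in-time} structure: since $\langle e_{(0^m)},\widehat\X_t\rangle = t^m/m!$ and more generally the coordinates indexed by powers of $0$ interlaced with a fixed word encode the Taylor coefficients in $t$ of the path $t\mapsto\langle e_I,\widehat\X_t\rangle$ near $t=0$... Actually, the cleanest route: the signature $\widehat\X_{0,T}$ of the time-augmented path determines $\widehat\X_{0,t}$ for every $t$ because the truncated tuple $(\langle e_{(0)},\widehat\X_{0,t}\rangle, \langle e_I,\widehat\X_{0,t}\rangle,\dots)$ can be recovered via Chen's identity $\widehat\X_{0,T} = \widehat\X_{0,t}\otimes\widehat\X_{t,T}$ once we know $t$ as a function of the path parameter — and $t$ is itself the first coordinate $\langle e_0,\widehat\X_{0,t}\rangle$. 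So I would invoke Chen plus the fact that $t\mapsto \langle e_0,\widehat\X_{0,t}\rangle = t$ is a bijection $[0,T]\to[0,T]$ to conclude that knowing $\widehat\X_{0,T}$, together with knowing the signature is that of a time-augmented semimartingale, pins down $\widehat\X_{0,t}$ for each $t$; in particular $\langle e_j,\widehat\X_{0,t}\rangle = X^{j-1}_t - X^{j-1}_0 = X^{j-1}_t$ for $j=1,\dots,d$. Applying the same to $\widehat Y$ and using $\widehat\X_T=\widehat\Y_T$ gives $X_t = Y_t$ for all $t$.

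\textbf{Main obstacle.} The delicate part is making Step~2 rigorous without circularity: a priori the endpoint $\widehat\X_T$ is just an element of $T((\R^d))$, and I must extract from it the \emph{function} $t\mapsto \widehat\X_t$. The honest way is to show that the signature path is determined by its endpoint within the class of time-extended semimartingale signatures — e.g., by showing that the coordinates $\langle e_{(0^{m})}\shuffle e_I,\widehat\X_T\rangle$ and the polynomial identities among them force, via a power-series / Taylor argument in the time variable, the value of $t\mapsto\langle e_I,\widehat\X_t\rangle$ at every $t\in[0,T]$. This is precisely the place where strict monotonicity of the time component is used, and it is the technical heart of the proof; the rest is bookkeeping with Chen's relation and the shuffle property (Proposition~\ref{shuffle-property}). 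An alternative, if one is willing to quote it, is the rough-path tree-reduced-path uniqueness theorem of Boedihardjo--Geng--Lyons--Yang, but the point of the lemma is to give the elementary time-augmented argument.
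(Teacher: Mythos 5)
Your proposal correctly isolates the hard direction and even names the right general mechanism (reading off time-weighted moments of $X^i$ from the endpoint signature), but the concrete steps you offer do not close the gap, and the paper's actual argument is a single identity that is absent from your write-up.

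The circularity you flag in your ``Main obstacle'' paragraph is real and is not resolved by Steps 1 and 2. Step 1 differentiates $t\mapsto\langle e_{(I,0)},\widehat\X_t\rangle$ in $t$, which presupposes knowledge of the signature \emph{path}, not just the endpoint $\widehat\X_T$. Step 2 invokes Chen's identity $\widehat\X_{0,T}=\widehat\X_{0,t}\otimes\widehat\X_{t,T}$ to ``recover'' $\widehat\X_{0,t}$, but Chen's identity only factorizes the endpoint; it gives no way to extract either factor from the product when both are unknown. Moreover, the specific coordinates you point to in the final paragraph, $\langle e_0^{\otimes m}\shuffle e_I,\widehat\X_T\rangle$, carry no information beyond the endpoint: by the shuffle property (Proposition~\ref{shuffle-property}) and Example~\ref{ex2} they equal $\tfrac{T^m}{m!}\langle e_I,\widehat\X_T\rangle$.

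The missing ingredient is the trailing $\otimes\, e_0$, which converts a pointwise product into an integral over $[0,T]$. The paper's proof consists of the single observation
\begin{equation*}
\big\langle (e_i\shuffle e_0^{\otimes k})\otimes e_0,\ \widehat\X_T\big\rangle
=\int_0^T \langle e_i,\widehat\X_t\rangle\,\langle e_0^{\otimes k},\widehat\X_t\rangle\,\d t
=\int_0^T X_t^i\,\frac{t^k}{k!}\,\d t ,
\end{equation*}
valid for every $k\in\N_0$ and $i\in\{1,\dots,d\}$. These are the moments of the continuous path $X^i$ against the polynomials $t^k/k!$; since polynomials are dense in $L^2([0,T],\d t)$ (equivalently, by Weierstrass and continuity of $X^i$), they determine $X^i$ pointwise on $[0,T]$. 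Applying the same to $Y$ and using $\widehat\X_T=\widehat\Y_T$ yields $X\equiv Y$. This is exactly the ``power-series argument in the time variable'' you allude to, but it needs to be stated and used explicitly; as written, your proof does not contain a working derivation of the only-if direction.
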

\begin{proof}
Denote by $0,\ldots, d$ the indices of $\widehat X$ and $\widehat Y$, where 0 corresponds  to the time component.
	The claim follows by noticing that $\langle (e_i\shuffle (e_0^{\otimes k}))\otimes e_0,\widehat \X_T\rangle=\int_0^TX_t^i \frac {t^k}{k!} \d t$, which implies that $\widehat \X_T$ uniquely determines $\widehat{X}_t$ for every $t \in [0,T]$.
\end{proof}

\begin{remark}
  Observe that $(t^k/k!)_{k\in \N_0}$ is a basis of $L^2([0,T],\d t)$ and 
$$\langle (e_i\shuffle (e_0^{\otimes k}))\otimes e_0,\widehat\X_T\rangle
=\int_0^TX_t^i \frac {t^k}{k!} \d t$$
is the corresponding projection of $X^i$ on $t^k/k!$. This property can be used to explicitly construct a sequence of polynomials with random coefficients on $[0,T]$ converging almost surely to  $X^i$ on $L^2([0,T],\d t)$. It also establishes a link between signature-based methods and quantization methods (see for instance \cite{LP:02,PP:05} and \cite{BCJ:21, T:21} for recent advances).
\end{remark}

Let us now introduce the definition of the half-shuffle product (see also \cite{EP:15}, \cite{DLP:20}, where it is usually denoted with $\prec$). Given an $\R^d$-valued continuous semimartingale, this operation permits to write the signature of $\X^N$  as a linear map of $\X$. This result will be useful later when we will need to compute the signature of a model given by a linear combination of terms of $\widehat \X^N$ (see Section~\ref{sig-payoffs}).

\begin{definition}\label{def:halfshuffle}
Set $I=(i_1,\ldots,i_n)$ and $J=(j_1,\ldots,j_m)$ for some $n\in \N$ and $m>0$. We define the half-shuffle $\halfshuffle$ as $e_I\halfshuffle e_\emptyset:=0$ and 
\begin{equation*}
	e_{I}\halfshuffle e_{J}:=(e_{I}\shuffle e_{J'})\otimes e_{j_m}.
\end{equation*}
\end{definition}

\begin{lemma}\label{lem1}
For each multi-indices $I=(i_1,\ldots,i_n)$ and $J=(j_1,\ldots,j_m)$ it holds
\begin{equation}\label{eqn6}
\int_0^t \langle e_{I}, \X_s\rangle \circ \d\langle e_{J},\X_s\rangle
=\langle e_I\halfshuffle e_{J},\X_t\rangle,
\end{equation}
for each $t\geq0$.
\end{lemma}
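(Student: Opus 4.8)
The plan is to prove \eqref{eqn6} by induction on $m=|J|$, reducing the Stratonovich integral on the left to the recursive definition of the signature via integration by parts, and then matching the resulting terms to the half-shuffle $e_I\halfshuffle e_J$ rewritten through the (already established) shuffle property.

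First I would treat the base case $m=1$, i.e.\ $J=(j_1)$. Here $\langle e_J,\X_s\rangle=\int_0^s 1\circ\d X^{j_1}_r$, so $\circ\,\d\langle e_J,\X_s\rangle=\circ\,\d X^{j_1}_s$, and the left-hand side becomes $\int_0^t\langle e_I,\X_s\rangle\circ\d X^{j_1}_s=\langle e_I\otimes e_{j_1},\X_t\rangle$ by the defining recursion of the signature. On the right, Definition~\ref{def:halfshuffle} gives $e_I\halfshuffle e_{(j_1)}=(e_I\shuffle e_\emptyset)\otimes e_{j_1}=e_I\otimes e_{j_1}$, so the two sides agree.

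For the inductive step, assume \eqref{eqn6} holds for all multi-indices $I$ and all $J$ of length $<m$, and let $J=(j_1,\ldots,j_m)$ with $m\geq 2$. The key observation is that by the definition of the signature, $\d\langle e_J,\X_s\rangle=\langle e_{J'},\X_s\rangle\circ\d X^{j_m}_s$ (Stratonovich), so
\begin{align*}
\int_0^t\langle e_I,\X_s\rangle\circ\d\langle e_J,\X_s\rangle
=\int_0^t\langle e_I,\X_s\rangle\,\langle e_{J'},\X_s\rangle\circ\d X^{j_m}_s.
\end{align*}
By the shuffle property (Proposition~\ref{shuffle-property}), $\langle e_I,\X_s\rangle\langle e_{J'},\X_s\rangle=\langle e_I\shuffle e_{J'},\X_s\rangle$, which is itself a finite linear combination of coordinates $\langle e_K,\X_s\rangle$ of the signature. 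Applying the defining recursion of the signature coordinate by coordinate, $\int_0^t\langle e_K,\X_s\rangle\circ\d X^{j_m}_s=\langle e_K\otimes e_{j_m},\X_t\rangle$, and summing by linearity yields $\langle (e_I\shuffle e_{J'})\otimes e_{j_m},\X_t\rangle$, which is exactly $\langle e_I\halfshuffle e_J,\X_t\rangle$ by Definition~\ref{def:halfshuffle}. This closes the induction.

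The main obstacle — really the only subtlety — is justifying the manipulations with Stratonovich differentials: that $\d\langle e_J,\X_s\rangle$ can be replaced by $\langle e_{J'},\X_s\rangle\circ\d X^{j_m}_s$ inside the integral, and that this ``chain rule'' composes correctly with the outer integrand. Since all the processes involved are continuous semimartingales and the signature coordinates are defined precisely as iterated Stratonovich integrals, this follows from the associativity of Stratonovich integration (equivalently, the classical chain rule for Stratonovich integrals used already in the proof of Proposition~\ref{shuffle-property}); alternatively one can pass to It\^o form, pick up the compensating quadratic-covariation terms, and check they reassemble into the Stratonovich expression, though the Stratonovich route keeps the bookkeeping cleanest. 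Note that the induction is genuinely on $|J|$ only; $I$ is arbitrary throughout, which is what makes the shuffle-property substitution in the inductive step legitimate.
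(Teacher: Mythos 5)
Your argument is correct and is essentially the paper's proof: replace $\circ\,\d\langle e_J,\X_s\rangle$ by $\langle e_{J'},\X_s\rangle\circ\d X^{j_m}_s$ via the defining recursion, convert the product of integrands with Proposition~\ref{shuffle-property}, and integrate term by term to land on $\langle(e_I\shuffle e_{J'})\otimes e_{j_m},\X_t\rangle$. The only cosmetic remark is that your induction on $|J|$ is vacuous --- the inductive hypothesis is never invoked in the ``inductive step'', which is already a complete direct computation valid for all $m\geq 1$ --- so the wrapper can simply be dropped.
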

\begin{proof}
By definition of the Stratonovich integral, Proposition~\ref{shuffle-property}, and by the definition of the signature  we know that 
$$
	\int_0^t \langle e_{I}, \X_s\rangle \circ \d\langle e_{J},\X_s\rangle
	=\int_0^t \langle e_{I}, \X_s\rangle  \langle e_{J'},\X_s\rangle\circ \d\langle e_{j_m},\Xx_s\rangle
	=\langle (e_{I}\shuffle e_{J'})\otimes e_{j_m},\X_t\rangle.
$$
\end{proof}
In order to combine the value of the signature on different time intervals  Chen's identity going back to \cite{C:57, C:77} turns out to be fundamental.  For the reader convenience, we propose here a direct proof using the definition of Stratonovich integrals.
\begin{lemma}[Chen's identity]\label{lem5}
    Let $(X_t)_{t\in[0,T]}$ be an $\R^d$-valued semimartingale. Then 
    \begin{equation*}
        \X_{s,t}=\X_{s,u}\otimes\X_{u,t}
    \end{equation*}
    for each $s\leq u\leq t\leq T$. This can equivalently be written as
    $$\langle e_I,\X_{s,t}\rangle=\sum_{e_{I_1}\otimes e_{I_2}=e_I}\langle e_{I_1},\X_{s,u}\rangle\langle e_{I_2},\X_{u,t}\rangle,$$
    for each multi-index $I$.
\end{lemma}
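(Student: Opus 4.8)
The plan is to prove Chen's identity by induction on the length $|I|$ of the multi-index, using the recursive definition of the signature together with the additivity of the Stratonovich integral over the concatenation of time intervals. For the base case $|I|=0$ both sides equal $1$ by definition of $\langle e_\emptyset,\X\rangle$, and for $|I|=1$ the claim reduces to $X_t^{i}-X_s^{i}=(X_u^i-X_s^i)+(X_t^i-X_u^i)$.

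For the inductive step, write $I=(i_1,\ldots,i_n)$ with $n\geq 2$ and $I'=(i_1,\ldots,i_{n-1})$. By definition,
\begin{equation*}
\langle e_I,\X_{s,t}\rangle=\int_s^t\langle e_{I'},\X_{s,r}\rangle\circ\d X_r^{i_n}
=\int_s^u\langle e_{I'},\X_{s,r}\rangle\circ\d X_r^{i_n}+\int_u^t\langle e_{I'},\X_{s,r}\rangle\circ\d X_r^{i_n}.
\end{equation*}
The first summand is exactly $\langle e_{I'},\X_{s,u}\rangle$. For the second summand, I would first apply Chen's identity at level $n-1$ (the induction hypothesis) to rewrite, for $r\in[u,t]$,
\begin{equation*}
\langle e_{I'},\X_{s,r}\rangle=\sum_{e_{J_1}\otimes e_{J_2}=e_{I'}}\langle e_{J_1},\X_{s,u}\rangle\langle e_{J_2},\X_{u,r}\rangle,
\end{equation*}
where the decompositions $e_{J_1}\otimes e_{J_2}=e_{I'}$ range over the $n$ ways of splitting $I'$ into a prefix $J_1$ and suffix $J_2$ (including the empty prefix and empty suffix). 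Since $\langle e_{J_1},\X_{s,u}\rangle$ is constant in $r$, it pulls out of the $\d X_r^{i_n}$ integral over $[u,t]$, giving
\begin{equation*}
\int_u^t\langle e_{I'},\X_{s,r}\rangle\circ\d X_r^{i_n}
=\sum_{e_{J_1}\otimes e_{J_2}=e_{I'}}\langle e_{J_1},\X_{s,u}\rangle\int_u^t\langle e_{J_2},\X_{u,r}\rangle\circ\d X_r^{i_n}
=\sum_{e_{J_1}\otimes e_{J_2}=e_{I'}}\langle e_{J_1},\X_{s,u}\rangle\langle e_{J_2}\otimes e_{i_n},\X_{u,t}\rangle.
\end{equation*}
Combining the two summands, the first gives the term with $e_{I_2}=e_\emptyset$ (and $e_{I_1}=e_I$), while the second ranges over all decompositions $e_{I_1}\otimes e_{I_2}=e_I$ with $|I_2|\geq 1$, since every such $I_2$ is uniquely of the form $J_2$ extended by $i_n$ with $e_{J_1}\otimes e_{J_2}=e_{I'}$. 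Summing yields $\langle e_I,\X_{s,t}\rangle=\sum_{e_{I_1}\otimes e_{I_2}=e_I}\langle e_{I_1},\X_{s,u}\rangle\langle e_{I_2},\X_{u,t}\rangle$, which is exactly the componentwise form of $\X_{s,t}=\X_{s,u}\otimes\X_{u,t}$ (recalling $c_n=\sum_{k}a_k\otimes b_{n-k}$ for the tensor product in $T((\R^d))$).

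The main obstacle, such as it is, is bookkeeping rather than analysis: one must be careful that the combinatorial decompositions of $I$ match up correctly, that the induction hypothesis is applied only to the prefix $I'$ (of strictly smaller length), and that pulling the $\F_u$-measurable constant $\langle e_{J_1},\X_{s,u}\rangle$ out of the Stratonovich integral over $[u,t]$ is legitimate — the latter is immediate since a deterministic-in-$r$ (indeed constant) factor commutes with the Stratonovich integral just as with the It\^o integral, the Stratonovich correction term being unaffected by a constant multiplier. No continuity or integrability issues arise beyond those already needed to define the signature. An alternative, essentially equivalent route would be to integrate the defining SDE for $\X_{\cdot,t}$ over $[s,t]$ and recognize $\X_{u,t}$ as the solution started from $\oo$ at time $u$, but the direct inductive argument above is the cleanest.
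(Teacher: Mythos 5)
Your proof is correct and takes essentially the same route as the paper's: induction on $|I|$, splitting the defining Stratonovich integral at $u$, applying the induction hypothesis to $\langle e_{I'},\X_{s,r}\rangle$ on $[u,t]$, and pulling the $\Fcal_u$-measurable factors $\langle e_{J_1},\X_{s,u}\rangle$ out of the integral. The only blemish is the sentence claiming the first summand equals $\langle e_{I'},\X_{s,u}\rangle$ --- by the recursive definition of the signature it equals $\langle e_{I},\X_{s,u}\rangle$ --- but since you later correctly assign it to the decomposition $e_{I_1}=e_I$, $e_{I_2}=e_\emptyset$, the argument is unaffected.
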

\begin{proof}
We proceed by induction on the length of the multi-index $I$. For $|I|=0$ we have $I=\emptyset$ and the statement is clear. Suppose now that the claim holds for each $|J|<n$ and set $I=(i_1,\ldots,i_n)$.
Then, applying Chen's identity to $\langle e_{I'},\mathbb{X}_{s,r}\rangle $ yields
\begin{align*}
    \langle e_{I},\mathbb{X}_{s,t}\rangle&=\int_{s}^{t}\langle e_{I'},\mathbb{X}_{s,r}\rangle\circ \d {X}_{r}^{i_n}\\
    &=\int_{s}^{u}\langle e_{I'},\mathbb{X}_{s,r}\rangle\circ \d  {X}_{r}^{i_n}+\int_{u}^{t}\sum_{e_{I_{1}'}\otimes e_{I_{2}'}=e_{I'}}\langle e_{I_{1}'},\mathbb{X}_{s,u}\rangle \langle e_{I_{2}'},\mathbb{X}_{u,r}\rangle\circ  \d {X}_{r}^{i_n}\\
        &=\langle e_{I},\mathbb{X}_{s,u}\rangle
        +\sum_{e_{I_{1}'}\otimes e_{I_{2}'}=e_{I'}}\langle e_{I_{1}'},\mathbb{X}_{s,u}\rangle \langle e_{I_{2}'}\otimes e_{i_n},\mathbb{X}_{u,t}\rangle\\
&=\sum_{e_{I_{1}}\otimes e_{I_{2}}=e_{I}} \langle e_{I_1},\X_{s,u}\rangle\langle e_{I_2},\X_{u,t}\rangle,
\end{align*}
whence the claim follows.
\end{proof}

\subsection{Universal approximation theorem}

This section is devoted to the universal approximation result mentioned in the introduction.
Loosely speaking, it states that every quantity of the form
$$f\Big((\widehat\X_t^2)_{t\in[0,T]}\Big)$$
for some continuous map $f$ and some $T>0$
can be approximated arbitrarily well on compact sets by 
linear functions of the signature of the form $\langle \ell,\widehat \X_T\rangle$ where $\ell \in T(\mathbb{R}^d)$. 
Observe that the latter just involves the \emph{final value} $\widehat \X_T$ of $\widehat \X$, instead of its whole trajectory. Different versions of this result, e.g.~for finite variation paths or for continuous functions depending on the whole signature (instead of level 2), are available in the literature (see for instance Theorem~3.1 in \cite{LLN:13}, Theorem~1 in \cite{KO:19}, Proposition~4.5 in \cite{LNP:20} and Section~3 in \cite{CPS:22}). For completeness and to keep the paper self-contained we prove it here in the current context of continuous semimartingales and continuous functions of the respective paths lifted up to order 2.

Fix $T>0$, consider a continuous $\R^d$-valued semimartingale $(X_t)_{t\in[0,T]}$, and let $\widehat X_t:=(t, X_t)$. Denote by $0,\ldots, d$ the indices of $\widehat X$, where 0 corresponds  to the time component.
For each $N \in \mathbb{N}$ define the set 
$$\Scal^{(N)}:=\{(\widehat \X_t^N)_{t \in [0,T]}(\omega)\colon \omega\in \Omega\},$$
which, without loss of generality (passing to a subset of $\Omega$ of measure $1$), corresponds to a set of signature paths of $\widehat X$ up to time $T$.

The next lemma states that the signature of a continuous semimartingale coincides with the so-called Lyons lift (see for instance Theorem~2.2.1 in \cite{L:98}). This in particular implies that higher order terms of the signature can be defined pathwise starting from the trajectories of $\widehat\X^2$.
\begin{lemma}
For each $N\in \N$ there exists a map $S^{(N)}:\Scal^{(2)}\to \Scal^{(N)}$ 
 such that
\begin{equation}\label{eqn8}
\widehat\X^N=S^{(N)}(\widehat\X^2)
\end{equation}
almost surely.
\end{lemma}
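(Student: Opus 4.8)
The plan is to show that the higher-level terms of the signature are determined pathwise by the level-2 truncated signature, using the fact that the time component is a strictly increasing $C^1$ path. The key observation is that for the time-extended process $\widehat X = (t, X_t)$, the integral with respect to the time component is a genuine Riemann--Stieltjes (Lebesgue) integral, since $t \mapsto t$ has finite variation; the only ``rough'' integration involved is that against the $X$-components, and the cross term $\int X \,\d X$ (equivalently $[X]$) is already encoded at level $2$.

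First I would recall, as in the proof of Lemma~\ref{uniqueness_sig}, that from $\widehat\X^2_T$ — and more precisely from the whole path $(\widehat\X^2_t)_{t\in[0,T]}$ — one can recover the trajectory $(\widehat X_t)_{t\in[0,T]}$ itself: indeed $\langle e_i, \widehat\X_t\rangle = X^i_t$ is already a level-1 entry, so the path of $\widehat X$ (hence of $X$) is literally read off from the level-1 part of $\widehat\X^2$. Thus there is a well-defined map sending the level-2 signature path back to the semimartingale path. Composing with the signature map $X \mapsto \widehat\X^N$ gives a candidate $S^{(N)}$; the content of the lemma is that this is well-defined on $\Scal^{(2)}$, i.e.\ that two semimartingales with the same level-2 signature path have the same level-$N$ signature path. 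But by the recovery just described they are in fact the \emph{same path}, so the signature agrees almost surely.

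Alternatively — and this is closer to the ``Lyons lift'' phrasing in the statement — I would proceed by induction on $N$, constructing $S^{(N)}$ explicitly from $S^{(N-1)}$. Given the path $(\widehat\X^{N-1}_t)_t$, every level-$N$ component $\langle e_I,\widehat\X_t\rangle$ with $I=(i_1,\dots,i_n)$, $n=N$, is obtained by the recursion $\langle e_I,\widehat\X_t\rangle = \int_0^t \langle e_{I'},\widehat\X_s\rangle \circ \d \widehat X^{i_n}_s$. If $i_n = 0$ (time component) this is $\int_0^t \langle e_{I'},\widehat\X_s\rangle\,\d s$, a plain integral of an already-known continuous function of $s$, hence expressible as a continuous functional of $(\widehat\X^{N-1}_t)_t$. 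If $i_n \neq 0$, one first recovers $X^{i_n}$ from the level-1 entry and then the Stratonovich integral $\int_0^t \langle e_{I'},\widehat\X_s\rangle \circ \d X^{i_n}_s$; here one uses that the integrand $\langle e_{I'},\widehat\X_s\rangle$ is itself a polynomial in lower signature entries via the shuffle property, so by It\^o--Stratonovich conversion the integral is a sum of Riemann--Stieltjes-type integrals against the already-known level-$2$ data (the quadratic (co)variations live at level $2$). In either case the new coordinate is a measurable/continuous function of $\widehat\X^2$ restricted to $[0,t]$, which defines $S^{(N)}$.

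The main obstacle — and the reason the lemma is stated at the level of pathwise maps rather than proved by a slick one-liner — is making the ``recover the rough integral $\int\langle e_{I'},\widehat\X\rangle\circ\d X^{i_n}$ from lower-level data'' step rigorous: a priori a Stratonovich integral is only defined up to null sets and is not a pointwise function of the driving path. The clean way around this is exactly the first approach: one does not need to reconstruct the integral abstractly, because $\widehat\X^2$ already contains the level-1 entries, hence the full path of $X$, hence (by the standard existence/uniqueness of the signature of a continuous semimartingale, Definition~\ref{shuffle-product}ff.) the entire higher signature is determined almost surely. So I expect the cleanest proof to be short: define $S^{(N)}$ as ``read off $(\widehat X_t)_t$ from the level-1 part, then take its signature,'' and note \eqref{eqn8} holds by construction; the inductive/Lyons-lift description is then a remark explaining why this is genuinely a \emph{lift} and depends only on $\widehat\X^2$ and not on auxiliary probabilistic structure.
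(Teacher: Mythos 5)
There is a genuine gap, and it sits exactly at the obstacle you yourself identify. Your ``clean'' first approach --- read off the level-1 entries, recover the path $(\widehat X_t)_{t\in[0,T]}$, then ``take its signature'' --- does not circumvent the problem that the Stratonovich integral is only defined up to null sets: taking the signature of a single recovered path \emph{is} the ill-posed operation. The higher iterated integrals of a Brownian-type path are not canonical functions of its level-1 trajectory; this is the basic lesson of rough path theory. One could try to rescue the argument by noting that Stratonovich integrals are limits in probability of midpoint Riemann sums (which are pathwise functionals), extracting an a.s.\ convergent subsequence, and thereby exhibiting $\widehat\X^N$ as a.s.\ equal to a measurable function of the level-1 path --- but you do not carry out this step, and the map so obtained would depend only on the level-1 data and would not be continuous with respect to $d_{\Scal^{(2)}}$, which is precisely the property the paper imposes on $S^{(N)}$ immediately after the lemma and uses in the proof of Theorem~\ref{universality}. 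Your second, inductive approach contains a concrete error: the shuffle identity expresses \emph{products} of signature entries as linear combinations of signature entries, not a single entry $\langle e_{I'},\widehat\X\rangle$ as a polynomial in level-1 and level-2 data, and after the It\^o--Stratonovich correction the remaining term $\int_0^t\langle e_{I'},\widehat\X_s\rangle\,\d X^{i_n}_s$ is still a genuine stochastic integral, not a Riemann--Stieltjes integral against level-2 quantities.

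The paper's proof is a citation to \cite{FV:10}: Theorem~9.5 there (Lyons' extension theorem) provides a \emph{deterministic, continuous} map sending a level-2 (geometric, $p$-rough, $2<p<3$) rough path to its unique extension to all levels $N$, and Exercise~17.2 identifies the Stratonovich signature of a continuous semimartingale almost surely with this lift of its canonical level-2 enhancement. The essential point your proposal misses is that $S^{(N)}$ must be (and can be) built from the full level-2 object $\widehat\X^2$ --- increments \emph{together with} the second iterated integrals --- and that it is this extra level-2 information that makes the extension both canonical and continuous; the level-1 path alone does not suffice for the purposes of \eqref{eqn8} as it is used in the sequel.
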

\begin{proof}
The claim follows from Exercise~17.2 and Theorem~9.5  in \cite{FV:10}. 
\end{proof}
Without loss of generality (passing again to a subset of $\Omega$ of measure $1$) assume that  conditions \eqref{eqn5}, \eqref{eqn6},  \eqref{eqn8} and
$$\langle e_I\otimes e_0,\widehat \X_t(\omega)\rangle
=\int_0^t \langle e_I,\widehat \X_s(\omega)\rangle \d s $$
holds for each $\omega \in \Omega$ and each multi-index $I$.
Consider a generic distance $d_{\Scal^{(2)}}$ on the set of trajectories given by $\Scal^{(2)}$,
 with respect to which the map from $\Scal^{(2)}$ to $\R$ given by
$$\hat{\textbf x}^2\mapsto\langle e_I,S^{(|I|)}(\hat{\textbf x}^2)_t\rangle$$
is continuous for each multi-index $I$ and every $t \in [0,T]$.

\begin{theorem}[Universal approximation theorem]\label{universality}
	Let $K$ be a compact subset of $\Scal^{(2)}$ and  consider a continuous map $f: K\to \R$.\footnote{Compactness and continuity are defined with respect to $d_{\Scal^{(2)}}$.} Then for every $\varepsilon>0$ there exists some $\ell\in T(\R^{d})$ such that
	\begin{equation*}
	\sup_{(\widehat\X_t^2)_{t \in [0,T]}\in K}	\lvert f((\widehat \X^2_t)_{t \in [0,T]})-\langle \ell, \widehat \X_{T}\rangle\lvert <\varepsilon,
	\end{equation*}
	almost surely.
\end{theorem}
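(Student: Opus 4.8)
The plan is to apply the Stone--Weierstrass theorem on the compact set $K\subset \Scal^{(2)}$. The candidate algebra is
$$\Acal:=\Big\{\hat{\mathbf x}^2\mapsto \langle \ell,\widehat\X_T\rangle \ :\ \ell\in T(\R^{d+1})\Big\},$$
where, crucially, for a given trajectory $\hat{\mathbf x}^2\in K$ the endpoint $\widehat\X_T$ is to be read off from $\hat{\mathbf x}^2$ via the Lyons lift, i.e.\ $\langle e_I,\widehat\X_T\rangle=\langle e_I, S^{(|I|)}(\hat{\mathbf x}^2)_T\rangle$. First I would check that each such map is well-defined and continuous on $K$: this is exactly the continuity assumption imposed on $d_{\Scal^{(2)}}$ just before the statement, applied coordinatewise to the finitely many multi-indices $I$ appearing in a given $\ell$ (taking $t=T$), together with linearity. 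So $\Acal\subseteq C(K)$.

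Next I would verify the three hypotheses of Stone--Weierstrass. (i) $\Acal$ is a linear subspace containing the constants, since $\ell=e_\emptyset$ gives the constant function $\langle e_\emptyset,\widehat\X_T\rangle=1$, and linear combinations of linear functionals are linear functionals. (ii) $\Acal$ is closed under multiplication: this is precisely the shuffle property, Proposition~\ref{shuffle-property}, which gives $\langle e_I,\widehat\X_T\rangle\langle e_J,\widehat\X_T\rangle=\langle e_I\shuffle e_J,\widehat\X_T\rangle$, and extends by bilinearity to $\langle \ell_1,\widehat\X_T\rangle\langle \ell_2,\widehat\X_T\rangle=\langle \ell_1\shuffle \ell_2,\widehat\X_T\rangle$ with $\ell_1\shuffle\ell_2\in T(\R^{d+1})$. (iii) $\Acal$ separates points of $K$: if $\hat{\mathbf x}^2\neq \hat{\mathbf y}^2$ in $\Scal^{(2)}$, then the underlying paths of $\widehat X$ on $[0,T]$ differ (the level-$2$ signature path determines, via its level-$1$ part, the increments $\widehat X_t-\widehat X_0$, hence the path up to a constant, and the time component pins down the constant — the first coordinate of $\widehat X$ is $t$). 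By Lemma~\ref{uniqueness_sig} the full endpoints $\widehat\X_T$ then differ, so some coordinate $\langle e_I,\widehat\X_T\rangle$ differs, i.e.\ some element of $\Acal$ separates the two points. Concretely one can use the explicit functional $\langle (e_i\shuffle e_0^{\otimes k})\otimes e_0,\widehat\X_T\rangle=\int_0^T X_t^i\,\frac{t^k}{k!}\,\d t$ from the proof of Lemma~\ref{uniqueness_sig}: if these agree for all $i$ and all $k$ then, since $(t^k/k!)_{k\ge 0}$ spans a dense subspace of $L^2([0,T])$ and $X$ is continuous, $X_t=Y_t$ for all $t$, a contradiction.

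With (i)--(iii) in hand, Stone--Weierstrass yields that $\Acal$ is dense in $C(K)$ with respect to the uniform norm, so for the given continuous $f$ and $\varepsilon>0$ there is $\ell\in T(\R^{d+1})$ with $\sup_{\hat{\mathbf x}^2\in K}|f(\hat{\mathbf x}^2)-\langle \ell,\widehat\X_T\rangle|<\varepsilon$. Translating back: for a.e.\ $\omega$, if $(\widehat\X_t^2)_{t\in[0,T]}(\omega)\in K$ then the path-space inequality gives $|f((\widehat\X_t^2)_{t\in[0,T]})-\langle\ell,\widehat\X_T\rangle|<\varepsilon$, which is the claimed a.s.\ bound (the a.s.\ qualifier absorbing the earlier passage to a full-measure subset of $\Omega$ on which \eqref{eqn5}, \eqref{eqn6}, \eqref{eqn8} and the time-integral identity hold).

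I expect the only genuine subtlety — everything else being a clean application of a textbook theorem — to be the point-separation step, and more precisely the bookkeeping that $\langle\ell,\widehat\X_T\rangle$ really is a \emph{continuous} function on $K$ and that Lemma~\ref{uniqueness_sig} applies on the level of trajectories in $\Scal^{(2)}$ rather than on the original probability space. Both are handled by the standing assumptions fixed just before the statement (the Lyons-lift map $S^{(N)}$, the continuity of $\hat{\mathbf x}^2\mapsto\langle e_I,S^{(|I|)}(\hat{\mathbf x}^2)_t\rangle$, and the passage to a full-measure subset of $\Omega$), so the proof mostly amounts to invoking them in the right order; no hard estimates are needed.
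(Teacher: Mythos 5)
Your proof is correct and follows essentially the same route as the paper: Stone--Weierstrass applied to the span of the maps $\hat{\textbf{x}}^2\mapsto\langle e_I,S^{(|I|)}(\hat{\textbf{x}}^2)_T\rangle$, with the shuffle property (Proposition~\ref{shuffle-property}) giving the algebra structure, $I=\emptyset$ giving non-vanishing, and Lemma~\ref{uniqueness_sig} (via the functionals $\langle (e_i\shuffle e_0^{\otimes k})\otimes e_0,\widehat\X_T\rangle$) together with the pathwise identities of Lemma~\ref{lem1} giving point separation. The only cosmetic difference is that the paper invokes Lemma~\ref{lem1} explicitly to see that the level-two components of $\hat{\textbf{x}}^2$ are determined once the level-one paths agree, whereas you fold this into the standing full-measure assumptions; the substance is identical.
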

\begin{proof}
The result follows by an application of the Stone-Weierstrass theorem. To check the needed properties, observe that by Proposition~\ref{shuffle-property} the set 
$$A:=\Span\{\hat{\textbf{x}}^2\mapsto\langle e_I,S^{(|I|)}(\hat{\textbf{x}}^2)_T\rangle\colon I\in\{0,\ldots,d\}^{|I|}\}$$
is an algebra of continuous maps from $(K,d_{\Scal^{(2)}})$ to $\R$. Choosing $I=\emptyset$ we get that $A$ is vanishing nowhere and we just need to check that it is point separating in $\Scal^{(2)}$.    Lemma~\ref{uniqueness_sig} yields the result for $\langle e_1,\hat{\textbf{x}}^2\rangle,\ldots, \langle e_d,\hat{\textbf{x}}^2\rangle$. Due to Lemma~\ref{lem1}, this then applies also to the remaining components of $\hat{\textbf{x}}^2$.
\end{proof}

\begin{remark}\label{rem3}
\begin{enumerate}
\item 
Observe that the assumptions on $d_{\Scal^{(2)}}$ guarantee that every map of the form
$$f\Big((\widehat\X_t^2)_{t\in[0,T]}\Big):=\tilde f(\langle e_{I_1},\widehat \X_{t_1}\rangle,\ldots,\langle e_{I_n},\widehat \X_{t_n}\rangle),$$
for some $\tilde f \in C(\R^n)$, is continuous with respect to $d_{\Scal^{(2)}}$.
    \item \label{rem3ii}
The rough paths literature provides several metrics $d_{\Scal^{(N)}}$ with respect to which the maps $S^{(N)}:(\Scal^{(2)},d_{\Scal^{(2)}})\to(\Scal^{(N)},d_{\Scal^{(N)}})$ are almost surely continuous on bounded sets. Among them one can for instance consider
$$d_{\Scal^{(N)}}(\hat{\textbf{x}},\hat{\textbf{y}})
:=\max_{k\in \{1,\ldots,N\}}\sup_\Dcal\bigg(\sum_{t_i\in \Dcal}
|\pi_k(\hat{\textbf x}_{t_{i-1},t_i}-\hat{\textbf y}_{t_{i-1},t_i})|^{p/k}\bigg)^{k/p}.$$
where $p\in (2,3)$, $\pi_k(\hat{\textbf{x}}):=\sum_{|I|=k}\langle e_I,\hat{\textbf{x}}\rangle e_I$, and $\Dcal$ denotes the set of all partitions of $[0,T]$, 
see Corollary 9.11, Definition 8.6 and Theorem 8.10 in \cite{FV:10} for the result or  \cite{CPS:22} for a more detailed explanation. Note that
$$\pi_1(\hat{\textbf x}_{t_{i-1},t_i})
=\sum_{i=1}^d\langle e_i,\hat{\textbf{x}}_{t_i}\rangle-\langle e_i,\hat{\textbf{x}}_{t_{i-1}}\rangle$$ and $$\pi_2(\hat{\textbf x}_{t_{i-1},t_i})
=
\sum_{i, j=1}^d\langle e_{ij},\hat{\textbf{x}}_{t_{i}}\rangle-\langle e_{ij},\hat{\textbf{x}}_{t_{i-1}}\rangle
-\langle e_{i},\hat{\textbf{x}}_{t_{i-1}}\rangle(\langle e_{j},\hat{\textbf{x}}_{t_{i}}\rangle-\langle e_{j},\hat{\textbf{x}}_{t_{i-1}}\rangle).$$

\item The universal approximation theorem can also be stated without constructing the involved spaces based on a fixed semimartingale. Since this would require a more advanced knowledge of rough paths theory, we address the interested reader to  
 \cite{L:98} or \cite{CF:19} for uniqueness and continuity of $S^{(N)}:(\Scal^{(2)},d_{\Scal^{(2)}})\to(\Scal^{(N)},d_{\Scal^{(N)}})$  and to \cite{CPS:22} for a complete formulation and proof of the resulting statement in the case of càdlàg semimartingales.
 \item In other versions of the universal approximation theorem the map $f$ is defined on a subset of  $T((\R^d))$-valued paths, which includes the realisations of $\X$, see for instance Theorem~3.1 in \cite{LLN:13} and Proposition~4.5 in \cite{LNP:20}.
 Here, the critical issues are to define the involved metric explicitly and to identify compact subsets of $T((\R^d))$-valued paths.
 \end{enumerate}
\end{remark}

\subsection{Stochastic Stratonovich Taylor expansion}

In addition to the (global) universal approximation theorem formulated above, we now also state a well-known quantitative 
approximation result for solutions of SDEs based on the (stochastic) Stratonovich Taylor expansion, see \cite{KP:92}.
We also refer to \cite{LO:14}  for approximations of non-anticipative path functionals of SDEs via Chen-Fliess series of iterated stochastic integrals. In contrast to \cite{KP:92} and \cite{LO:14}, where the $L^2$-error is considered (see Remark~\ref{rem4} below), we here directly
provide a `convergence rate in probability' without requiring second moment estimates on the coefficients of the stochastic Taylor expansion. To this end we rely on the deterministic estimates provided for rough differential equation as in Section 10 in \cite{FV:10}.

\begin{proposition}\label{prop214}
Let $\widehat X$ be a time extended $d$-dimensional Brownian motion and  $(Y_t)_{t\in[0,T]}$ be an $D$-dimensional strong solution of the SDE 
$$\d Y_t=\sum_{j=0}^d\Psi_j(Y_t)^\top\circ\d \widehat X_t^j,$$
for some smooth $\Psi_j:\R^D\to\R^D$.
For each $I$ let $\Psi_I:\R^D\to\R^D$ be the map whose $k$-th component is given by 
$$\Psi_I^k(y)=\Psi_{i_1}^\top \nabla (\Psi_{i_2}^\top\nabla \cdots (\Psi_{i_{n}}^\top e_k))(y).$$
Then, for each $m\geq 2$ and  $\e>0$ there is a 
constant $C(m,\e)$ such that
$$\P\bigg(\bigg|Y_t-Y_0
-\sum_{0<|I|\leq m}
\Psi_I(Y_0)\langle e_I,\widehat  \X_t\rangle\bigg|
>  C(m,\e)t^{(m+1)/2}\bigg)<\e.$$
\end{proposition}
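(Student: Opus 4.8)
The plan is to reduce the statement to the deterministic rough-path Taylor remainder estimate and then use the known tail behaviour of the rough-path norm of Brownian motion. First I would recall that, since $\widehat X$ is a time-extended Brownian motion, the Stratonovich SDE $\d Y_t=\sum_{j=0}^d\Psi_j(Y_t)^\top\circ\d\widehat X_t^j$ is precisely the rough differential equation driven by the Brownian rough path $\mathbf{X}=(\widehat\X^1,\widehat\X^2)$ (Stratonovich lift); this identification is standard (e.g.\ \cite{FV:10}, or \cite{L:98}). Under the smoothness assumption on the $\Psi_j$, Section~10 of \cite{FV:10} gives a \emph{pathwise} Euler/Taylor estimate: there is a constant $C$, depending only on $m$, on bounds for finitely many derivatives of the $\Psi_j$ on a neighbourhood of the solution, and on the $p$-variation norm of the driving rough path restricted to $[0,t]$, such that
\begin{equation*}
\bigg|Y_t-Y_0-\sum_{0<|I|\leq m}\Psi_I(Y_0)\langle e_I,\widehat\X_t\rangle\bigg|\leq C\,\big(\|\mathbf{X}\|_{p\text{-var};[0,t]}\big)^{m+1},
\end{equation*}
where $\Psi_I$ is exactly the iterated-vector-field coefficient written in the statement (the $k$-th component being $\Psi_{i_1}^\top\nabla(\cdots(\Psi_{i_n}^\top e_k))$), and where we fix some $p\in(2,3)$.

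The second ingredient is the self-similarity and integrability of the Brownian rough path: on $[0,t]$ one has $\|\mathbf{X}\|_{p\text{-var};[0,t]}\stackrel{d}{=}t^{1/2}\|\mathbf{X}\|_{p\text{-var};[0,1]}$ by Brownian scaling (the time component scales as $t$, which is dominated by $t^{1/2}$ for small $t$, so up to adjusting constants the $t^{1/2}$ rate is correct), and $\|\mathbf{X}\|_{p\text{-var};[0,1]}$ has Gaussian-type tails, in particular finite moments of all orders — see e.g.\ \cite{FV:10}. Hence for the fixed $\e>0$ I can choose $R=R(\e)$ with $\P(\|\mathbf{X}\|_{p\text{-var};[0,1]}>R)<\e/2$. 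On the complementary event the rough-path norm on $[0,t]$ is bounded by $R\,t^{1/2}$ (for $t\le 1$; the case $t>1$ is trivial by enlarging $C(m,\e)$), so the deterministic estimate yields a bound $C\cdot R^{m+1}t^{(m+1)/2}$; setting $C(m,\e):=C R(\e)^{m+1}$ (and enlarging it to absorb the $t>1$ regime) gives the claim with probability at least $1-\e/2>1-\e$.

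One technical point to address is that the constant $C$ in the \cite{FV:10} Taylor estimate also depends on sup-norm bounds of the $\Psi_j$ and their derivatives \emph{along the solution path} $(Y_s)_{s\le t}$, which is itself random. The clean way around this is to note that, on the event $\{\|\mathbf{X}\|_{p\text{-var};[0,1]}\le R\}\cap\{t\le 1\}$, the solution stays in a compact neighbourhood of $Y_0$ whose size is controlled by $R$ (again by the a priori bounds of Section~10 in \cite{FV:10} relating $\|Y\|_{p\text{-var}}$ to $\|\mathbf{X}\|_{p\text{-var}}$ and the vector field bounds); restricting the $\Psi_j$ to that compact set makes the relevant derivative bounds finite and $R$-dependent only, hence $\e$-dependent only, which is exactly what the statement allows. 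The main obstacle, then, is not any single estimate but assembling these pieces carefully: making the passage from the Stratonovich SDE to the RDE rigorous, invoking the correct pathwise Taylor remainder bound from \cite{FV:10} with the right dependence of constants, and handling the localization of the (a priori unbounded) vector fields along the random solution — once that bookkeeping is done, the Brownian scaling plus tail bound finishes the proof immediately. I would also remark (cf.\ Remark~\ref{rem4}) that this route deliberately avoids $L^2$ moment estimates on the Taylor coefficients, trading them for the ``in probability'' conclusion, which is why the deterministic rough-path machinery is the natural tool here rather than the classical It\^o--Taylor analysis of \cite{KP:92}.
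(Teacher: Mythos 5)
Your strategy coincides with the paper's at every structural level: reduce to the deterministic Euler/Taylor remainder estimate of Section~10 of \cite{FV:10} applied to the Stratonovich (rough-path) lift of $\widehat X$, localize the unbounded vector fields, and conclude from the almost-sure finiteness (indeed Gaussian integrability) of the rough-path norm of Brownian motion. Within that common architecture you make two choices that differ from the paper; one is arguably an improvement and one needs repair. On the rate: the paper invokes Corollary~10.15 in H\"older form and reads off the factor $t^{\gamma/p}$ with $\gamma=m+1$ and $p>2$, which for $t\le 1$ only yields the exponent $(m+1)/p<(m+1)/2$; your Brownian-scaling step $\|\mathbf X\|_{p\text{-var};[0,t]}\overset{d}{=}t^{1/2}\|\mathbf X\|_{p\text{-var};[0,1]}$ (with the time component absorbed for $t\le1$, as you note) is what actually produces the claimed exponent $(m+1)/2$, and it suffices because the equality in law is used at a single fixed $t$ and the assertion is a probability bound at that fixed $t$. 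Keep that part.

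On the localization there is a genuine circularity: you justify that $Y$ stays in an $R$-controlled compact set on $\{\|\mathbf X\|_{p\text{-var};[0,1]}\le R\}$ ``by the a priori bounds of Section~10 relating $\|Y\|_{p\text{-var}}$ to $\|\mathbf X\|_{p\text{-var}}$ and the vector field bounds'', but those a priori bounds require global $\mathrm{Lip}^{\gamma}$ control of the vector fields --- precisely what you are trying to manufacture by restricting to a compact set. For merely smooth $\Psi_j$ with superlinear growth no such global bound exists and the RDE with the unmodified fields need not even be globally solvable, so this step does not close as written. The paper's fix costs nothing and you should adopt it: the hypothesis that $Y$ is a global strong solution on $[0,T]$ gives $\sup_{t\le T}|Y_t|<\infty$ a.s., hence the stopping times $\tau_K:=\inf\{t\colon|Y_t|\ge K\}$ satisfy $\P(\tau_K<T)\to0$ as $K\to\infty$; choose $K$ with $\P(\tau_K<T)<\e/2$, replace $\Psi_j$ by compactly supported $\Psi_j^K$ agreeing with $\Psi_j$ on $\{|y|\le K\}$, and apply the deterministic estimate to the modified equation on the event $\{\tau_K\ge T\}$, whose derivative bounds are then finite and depend only on $K=K(\e)$. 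With that substitution your argument is complete and matches the paper's, modulo your sharper treatment of the exponent.
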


\begin{proof}For each $K>0$ consider the stopping times
$\tau_K:=\inf\{t\geq0\colon |Y_t|\geq K\}.$ For each $j$ and $K$ fix smooth $\Psi_j^K:\R^D\to\R^D$ such that $\Psi_j^K(y)=\Psi_j(y)$ for each $|y|\leq K$ and $\supp(\Psi_j^K)\subseteq\{y\in\R^D\colon |y|\leq K+1\}$. Choose  $\gamma:=m+1$ and $2<p\leq \gamma$.
Exercise 17.2, Theorem~17.3  and
Corollary~10.15 in \cite{FV:10}\footnote{Observe that in \cite{FV:10} a vector field $\Psi_j$ is identified with the first order operator $f\mapsto \Psi_j\cdot\nabla f$ (see for instance the discussion on page 125). The Euler scheme $\mathcal E_\Psi$ is defined in Definition~10.1.}   applied to
$$\d Y_{t}=\sum_{j=0}^d \Psi_j^K(Y_t)\circ\d \widehat X_{t\land {\tau_k}}^j$$
yields the existence of a constant $C$ such that
$$
\bigg|Y_t-Y_0
-\sum_{0<|I|\leq m}
\Psi_I(y_0)\langle e_I,\widehat  \X_t\rangle\bigg|1_{\{\tau_K\geq T\}}
\leq   |\Psi|_{\text{Lip}^{\gamma-1}}^\gamma C\|\widehat \X^2\|^\gamma_{1/p-\text{Höl};[0,T]}t^{\gamma/p},
$$
almost surely, where
$|\Psi|_{\text{Lip}^{\gamma-1}}^\gamma$ and $\|\cdot\|^\gamma_{1/p-\text{Höl};[0,T]}
$ are
 defined in Definition~10.2 and Equation (8.2) of \cite{FV:10}, respectively. Observe that $\gamma/p\leq (m+1)/2$ and
$$|\Psi|_{\text{Lip}^{\gamma-1}}^\gamma\leq 
\sup_{k\leq m+1, |y|\leq K+1}|\Psi^{(k)}(y)|^\gamma
\leq
\bigg(1+\sup_{k\leq m+1, |y|\leq K+1}|\Psi^{(k)}(y)|\bigg)^{m+1}.$$
By the discussion at the beginning of Section~3 in \cite{FV:10} we also get that $\|\widehat \X^2\|^\gamma_{1/p-\text{Höl};[0,T]}$ is finite almost surely and the claim follows.
\end{proof}

\begin{remark}\label{rem4}
Alternatively, an $L^2$-error for the stochastic Taylor expansion is given in Proposition~5.10.1 in \cite{KP:92}. It provides (under some technical conditions, in particular second moments of $Y_t$) the following estimate 
$$
\E\bigg[\Big|Y_t-Y_0
-\sum_{0<|I|\leq m}
\Psi_I(Y_0)\langle e_I,\widehat  \X_t\rangle\Big|^2\bigg]
\leq  Ct^{m+1},$$
for some $C\geq 0$. An application of the Markov inequality then yields the previous estimate
$$
\P\bigg(\Big|Y_t-Y_0
-\sum_{0<|I|\leq m}
\Psi_I(Y_0)\langle e_I,\widehat  \X_t\rangle\Big|>\widetilde C^{1/2} t^{(m+1)/2}\bigg)\leq \e,$$
where $\widetilde C:=C/\e$, for each $\e>0$.
\end{remark}

\section{The model} \label{sec:model}

We now introduce a framework for signature based asset price models. 
To this end we fix a time horizon $T>0$, consider a $d$-dimensional continuous semimartingale $\Xx:=(\Xx^1,\ldots, \Xx^{d})$ and its  matrix-valued quadratic covariation $[\Xx]$. We suppose that $\Xx$ encodes all the  information to represent the market's assets $\Ss$. For notational convenience we only consider a single asset, i.e.~assume that $\Ss$ is one-dimensional.

We shall furthermore assume that $\Xx$ has some tractability properties which are made precise below and which are for instance satisfied by a
$d$-dimensional Brownian motion (see Section~\ref{secBM} and  Example~\ref{ex3}). Alternatively, one can choose $\Xx$ to be a collection of liquid financial products, whose time series  are easily accessible and can be interpreted as realizations of continuous semimartingales.

\subsection{Definition and first properties}

As the continuous semimartingale $\Xx$ shall serve as main modeling building block, we call it \emph{primary process} and  extend it appropriately as made precise in the next definition.

\begin{definition}\label{def1}
We refer to $\Xx$ as \emph{primary process}. Depending on the context, we then consider  one of the following two extensions of $\Xx$.
\begin{enumerate}
\item\label{it2} $\widehat \Xx_t:=(t,\Xx_t,[\Xx]_t)$, where $[\Xx]$ denotes the $d^2$-dimensional process given by the quadratic covariation of $\Xx$.
\item\label{it1} $\widehat \Xx_t:=(t, \Xx_t)$. This extension is always be paired with the assumption that $\Xx$ is an It\^o-semimartingale with absolutely continuous characteristics such that each element of its diffusion matrix can be written as linear combination of elements of its time extended signature.
\end{enumerate}
\end{definition}

\begin{remark}
Note that \ref{it2} of Definition \ref{def1} is more general than \ref{it1} in two respects: first the components of the diffusion matrix do not need to be  linear functions of the time extended signature, but could for instance be more general path-dependent functionals; second $t \mapsto [X]_t$ does not need to be absolutely continuous with respect to the Lebesgue measure, hence $X$ does not need to be an It\^o-semimartingale.
\end{remark}

Throughout the paper we will always denote with $\widehat{\X}_{t}$ the signature of the previous extensions of $X_{t}$. Our goal consists in describing\slash approximating the dynamics of $\Ss$ with a \emph{signature model}.

\begin{definition} \label{def:model}
A \emph{signature model} is a stochastic process of the form
\begin{equation}\label{eqn3}
\Ss_n(\ell)_t:=\ell_{\emptyset}
+\sum_{0<|I|\leq n} \ell_I \langle e_I,\widehat \X_t\rangle,
\end{equation}
where $n\in\N$ and
$\ell:=\{\ell_\emptyset, \ell_I\colon 0<|I|\leq n\}$. 
\end{definition} 

\begin{remark}
By Proposition \ref{prop1} below the class of Sig-SDEs models considered in \cite{PSS:20} can be embedded in our framework by choosing a properly extended one-dimensional Brownian motion as primary process.
\end{remark}

In the following we list several important properties which make signature models a tractable framework for stochastic finance.
\begin{itemize}
    \item For each $t \in [0,T]$, $\Ss_n(\ell)_t$ is linear in $\widehat\X_t$. This in particular implies that having pre-computed the signature of  $\widehat\Xx$ an update of the parameters $\ell$ boils down to computing \eqref{eqn3}, which is nothing else than a scalar product.
 \item The quadratic variation of processes of  form \eqref{eqn3} is again of the form \eqref{eqn3}. 
 \item Form \eqref{eqn3} remains invariant under polynomial transformations.
 
 \item It\^o-integrals of processes of form \eqref{eqn3} with respect to processes of form \eqref{eqn3} are again processes of form \eqref{eqn3}. This includes in particular the signature $\widehat\S_n(\ell)$ of $\widehat\Ss_n(\ell)_t:=(t,\Ss_n(\ell)_t)$ or expressions of the form
 $$\int_0^\cdot S_n(
 \ell)_s \d\Xx_s^i.$$
 \item The latter implies that the expected signature of $\widehat\Ss_n(\ell)$ is given by
  $$\E[\langle e_J,\widehat\S_n(\ell)_t\rangle]=P_J(\ell,\E[\widehat\X_t]),$$
 for some  $P_J$ such that $P_J(\fdot, \E[\widehat\X_t])$ is a polynomial of degree $|J|$ and $P_J(\ell, \fdot)$ is a linear map for each $\ell$ (see Theorem \ref{thm1} and Remark \ref{rem2} for more details).
 \item Due to Theorem~\ref{universality} this provides approximations  
 $$\E\Big[f\Big((\widehat\S^2_n(\ell)_t)_{t\in[0,T]}\Big)\Big]\approx P_f(\ell,\E[\widehat\X_T])$$
 for each map $f$, which is continuous with respect to $d_{\Scal^{(2)}}$, where $P_f$ is given by a finite linear combination of maps $P_J$ as above. This includes representations for
$$\E\Big[\tilde f(\widehat\Ss_n(\ell)_T)\Big]\text{\qquad and \qquad}
\E\bigg[\tilde f\bigg(\int_0^T\widehat\Ss_n(\ell)_t \d t\bigg)\bigg]$$
for  maps $\tilde f$ being payoff functions and where the expectation is taken with respect to a pricing measure.
\end{itemize}

We now prove several representation results for the signature model of form \eqref{eqn3}. Throughout  we shall apply the following notation.

\begin{notation}\label{rem1}
We use $e_0$ for the component of $\widehat\Xx$ corresponding to time, $e_k$ for its component corresponding to $\Xx^k$, and (if needed) $\e_{ij}$ for the component of $\widehat\Xx$ corresponding to $[\Xx^i,\Xx^j]$.
\end{notation}

Consider the following assumption, which includes in particular the case where $\Xx$ is a $d$-dimensional Brownian motion.
\begin{assumption}\label{ass1}
For all $i,j\in\{1,\ldots, d\}$ it holds 
$$\d[\Xx^i,\Xx^j]_t=\sum_{ |I|\leq m} a_{ij}^I\langle e_I,\widehat \X_t\rangle \d t$$ for some $m\in\N$ where $\widehat\Xx_t=(t,X_t)$. 
 \end{assumption}
 
 \begin{remark}
 Note that for $\widehat{X}$ as of  Definition~\ref{def1}\ref{it1}, Assumption~\ref{ass1} is satisfied by definition. 
 \end{remark}
 
 \begin{proposition}\label{prop2}
Suppose that  
$\Ss_n(\ell)$ has a representation of form \eqref{eqn3} with $\widehat \Xx$ as in Definition~\ref{def1}\ref{it2} and that Assumption~\ref{ass1} holds. Then $\Ss_n(\ell)$ 
 has a representation of the same form but with $\widehat \Xx$ as in~Definition~\ref{def1}\ref{it1}.
 \end{proposition}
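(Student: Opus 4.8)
The plan is to start from the representation of $\Ss_n(\ell)$ in form \eqref{eqn3} with the extension of Definition~\ref{def1}\ref{it2}, i.e.\ with signature $\widehat\X$ of $\widehat\Xx_t=(t,\Xx_t,[\Xx]_t)$, and then rewrite each signature coordinate $\langle e_I,\widehat\X_t\rangle$ that involves the components $\e_{ij}$ (those tracking $[\Xx^i,\Xx^j]$) in terms of the signature of the smaller extension $\widehat\Xx_t=(t,\Xx_t)$ from Definition~\ref{def1}\ref{it1}. The key observation is Assumption~\ref{ass1}: since $\d[\Xx^i,\Xx^j]_t=\sum_{|I|\leq m}a_{ij}^I\langle e_I,\widehat\X_t\rangle\,\d t$, the process $[\Xx^i,\Xx^j]$ is itself (up to its initial value, which is $0$) a finite linear combination of time-integrals of signature terms, and by the relation $\langle e_I\otimes e_0,\widehat\X_t\rangle=\int_0^t\langle e_I,\widehat\X_s\rangle\,\d s$ each such time-integral is again a single signature coordinate (now of the $(t,\Xx_t)$-signature). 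Hence $[\Xx^i,\Xx^j]_t$ has a linear-in-signature representation with respect to the smaller extension.

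The main work is then an induction on $|I|$ showing that \emph{every} coordinate $\langle e_I,\widehat\X_t\rangle$ of the signature of $(t,\Xx_t,[\Xx]_t)$ can be written as a finite linear combination of coordinates of the signature of $(t,\Xx_t)$. For the base case $|I|\le 1$ this is either constant, or $t$, or $\Xx^i_t-\Xx^i_0$, or $[\Xx^i,\Xx^j]_t-[\Xx^i,\Xx^j]_0$, and the last was just handled. For the inductive step, write $I=(I',i_n)$ and use the recursive definition $\langle e_I,\widehat\X_t\rangle=\int_0^t\langle e_{I'},\widehat\X_s\rangle\circ\d\widehat\Xx_s^{i_n}$. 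If $i_n$ corresponds to time or to a component $\Xx^k$, we are integrating the (by inductive hypothesis) linear-in-smaller-signature process $\langle e_{I'},\widehat\X_s\rangle$ against $\d s$ or $\circ\,\d\Xx^k_s$, which by Lemma~\ref{lem1} (the half-shuffle identity) and the time-integral identity stays linear in the smaller signature. If instead $i_n=\e_{ij}$, then by Assumption~\ref{ass1} we have $\circ\,\d[\Xx^i,\Xx^j]_s=\d[\Xx^i,\Xx^j]_s=\big(\sum_{|J|\le m}a_{ij}^J\langle e_J,\widehat\X_s\rangle\big)\d s$ (the Stratonovich correction vanishes since $[\Xx^i,\Xx^j]$ has finite variation), so using the shuffle property (Proposition~\ref{shuffle-property}) to linearize the product $\langle e_{I'},\widehat\X_s\rangle\langle e_J,\widehat\X_s\rangle$ and then the time-integral identity again, the result is once more a finite linear combination of coordinates of the $(t,\Xx_t)$-signature. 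This closes the induction.

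Finally, substituting these representations into \eqref{eqn3} for $\Ss_n(\ell)$ and collecting terms yields a finite linear combination of coordinates of the signature of $(t,\Xx_t)$, i.e.\ a representation of form \eqref{eqn3} with $\widehat\Xx$ as in Definition~\ref{def1}\ref{it1} — possibly of higher truncation order $n'$ than the original $n$, which is allowed by the definition. I expect the main obstacle to be purely bookkeeping: one must be careful that the induction genuinely terminates (each step strictly decreases $|I|$, while the auxiliary multi-indices $J$ coming from Assumption~\ref{ass1} have bounded length $m$ and feed back into shorter words), and that the shuffle/half-shuffle linearizations are applied to the correct integrands — but no analytic subtlety arises since the $[\Xx^i,\Xx^j]$ terms are of finite variation and the Stratonovich and It\^o integrals against them coincide.
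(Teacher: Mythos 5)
Your proposal is correct and takes essentially the same route as the paper: express $[\Xx^i,\Xx^j]_t=\sum_{|I|\leq m}a_{ij}^I\langle e_I\otimes e_0,\widehat\X_t\rangle$ via Assumption~\ref{ass1} and the time-integral identity, then use the half-shuffle identity of Lemma~\ref{lem1} to rewrite every signature coordinate involving the $\e_{ij}$-components as a linear combination of coordinates of the $(t,\Xx_t)$-signature. The paper compresses the inductive replacement into the single phrase ``the claim follows by Lemma~\ref{lem1}''; your explicit induction on $|I|$ (and your remark that the truncation order may increase, which Definition~\ref{def:model} permits) simply fills in the bookkeeping the paper leaves implicit.
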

 \begin{proof}
 Observe that by Assumption~\ref{ass1} it holds
 $$[\Xx^i,\Xx^j]_t=
 \int_0^t \sum_{ |I|\leq m} a_{ij}^I\langle e_I,\widehat \X_s\rangle \circ \d s=
 \sum_{ |I|\leq m} a_{ij}^I\langle e_I\otimes e_0,\widehat \X_t\rangle.$$
The claim follows by Lemma~\ref{lem1}.
 \end{proof}
 \begin{proposition}\label{prop1}
 Fix $n\in \N$ and suppose that $\widehat \Xx$ is given by  Definition~\ref{def1}\ref{it2}. Then 
 there is a one to one correspondence between representations of the form \eqref{eqn3} and representations of the form
\begin{align*}
&\Ss_n(\ell)_t=\ell_{\emptyset}+\int_0^t\Big(\ell^0_\emptyset+\sum_{0<|I|\leq n-1} \ell^0_I \langle e_I,\widehat \X_s\rangle\Big) \d s
+\sum_{k=1}^{d}\int_0^t\Big(\ell^{k}_\emptyset+\sum_{0<|I|\leq n-1} \ell^{k}_I \langle e_I,\widehat \X_s\rangle \Big)\d \Xx_s^k\\
&\qquad+\sum_{k_1,k_2=1}^{d}\int_0^t\Big(\ell^{k_1,k_2}_\emptyset+\sum_{0<|I|\leq n-1} \ell^{k_1,k_2}_I \langle e_I,\widehat \X_s\rangle \Big)\d [\Xx^{k_1},\Xx^{k_2}]_s,
\end{align*}
for
$\ell:=\{\ell_\emptyset, \ell^0_I,\ell_I^{k_1}, \ell_I^{k_1,k_2}\colon  |I|\leq n-1\text{ and } k_1,k_2\in\{1,\ldots,d\}\}.$

If $\widehat \Xx$ satisfies Assumption~\ref{ass1}, then the same is true with $\ell_I^{k_1,k_2}=0$ for each $ |I|\leq n-1\text{ and } k_1,k_2\in\{1,\ldots,d\}$.
 \end{proposition}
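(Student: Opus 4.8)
The plan is to prove both directions of the claimed bijection by explicit manipulation of the iterated Stratonovich integrals that define $\widehat\X$, using repeatedly the identity
$$\langle e_{I}\otimes e_{j},\widehat\X_t\rangle=\int_0^t\langle e_I,\widehat\X_s\rangle\circ\d\widehat X_s^{j},$$
which is just the recursive definition of the signature, together with Chen-type splitting only where needed. The key observation is that the components of $\widehat X$ in Definition~\ref{def1}\ref{it2} are exactly $\widehat X^0_t=t$, $\widehat X^k_t=\Xx^k_t$ for $k=1,\dots,d$, and $\widehat X^{ij}_t=[\Xx^i,\Xx^j]_t$, so that the stochastic-integral representation in the statement is precisely
$$\Ss_n(\ell)_t=\ell_\emptyset+\sum_{\kappa}\int_0^t\langle \ell^\kappa,\widehat\X_s^{(n-1)}\rangle\,\d\widehat X_s^\kappa,$$
where $\kappa$ ranges over the index set $\{0,1,\dots,d\}\cup\{(k_1,k_2)\}$, $\langle\ell^\kappa,\widehat\X_s^{(n-1)}\rangle:=\ell^\kappa_\emptyset+\sum_{0<|I|\le n-1}\ell^\kappa_I\langle e_I,\widehat\X_s\rangle$, and the integrals are Stratonovich (for the $\d s$ and $\d[\Xx^{k_1},\Xx^{k_2}]$ pieces Stratonovich and It\^o coincide since these are finite-variation integrators).

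\textbf{From \eqref{eqn3} to the integral form.} Given a representation of the form \eqref{eqn3}, I would group the multi-indices $I$ with $0<|I|\le n$ according to their last letter $i_{|I|}=:\kappa$: write $I=I'\kappa$ with $|I'|\le n-1$. By the recursive definition of the signature, $\langle e_I,\widehat\X_t\rangle=\langle e_{I'}\otimes e_\kappa,\widehat\X_t\rangle=\int_0^t\langle e_{I'},\widehat\X_s\rangle\circ\d\widehat X_s^\kappa$, and the degree-one terms $\ell_\kappa\langle e_\kappa,\widehat\X_t\rangle=\ell_\kappa\int_0^t 1\circ\d\widehat X_s^\kappa$ are absorbed into the $\ell^\kappa_\emptyset$ coefficient. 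Collecting, for each $\kappa\in\{0,\dots,d\}$ one sets $\ell^\kappa_\emptyset:=\ell_{(\kappa)}$ and $\ell^\kappa_{I'}:=\ell_{I'\kappa}$ for $0<|I'|\le n-1$; for $\kappa=(k_1,k_2)$ one does the same with the $\e_{k_1k_2}$-letter. This produces exactly the stated integral representation, and the map $\ell\mapsto(\ell_\emptyset,\ell^\kappa_{I'})$ so defined is visibly a linear bijection onto the coefficient set of the integral form, since every index $I$ with $0<|I|\le n$ has a unique decomposition $I=I'\kappa$.

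\textbf{From the integral form to \eqref{eqn3}.} Conversely, starting from the integral representation, I would expand each $\int_0^t\langle e_{I'},\widehat\X_s\rangle\circ\d\widehat X^\kappa_s=\langle e_{I'}\otimes e_\kappa,\widehat\X_t\rangle$ using the same identity in reverse, obtaining a linear combination of $\langle e_I,\widehat\X_t\rangle$ with $0<|I|\le n$; this is the inverse of the previous map, so \eqref{eqn3} is recovered. For the final sentence: if Assumption~\ref{ass1} holds, then as in the proof of Proposition~\ref{prop2} one has $[\Xx^{k_1},\Xx^{k_2}]_t=\sum_{|I|\le m}a^I_{k_1k_2}\langle e_I\otimes e_0,\widehat\X_t\rangle$, so $\d[\Xx^{k_1},\Xx^{k_2}]_s=\sum_{|I|\le m}a^I_{k_1k_2}\langle e_I,\widehat\X_s\rangle\,\d s$; substituting this into the third sum of the integral form rewrites every $\d[\Xx^{k_1},\Xx^{k_2}]$-integral as a $\d s$-integral against a linear functional of $\widehat\X$ (of degree at most $n-1+m$, but after re-truncating via the shuffle/half-shuffle identities it stays within the admissible form, exactly as in Proposition~\ref{prop2}). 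Hence one may take $\ell^{k_1,k_2}_I=0$.

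\textbf{Main obstacle.} The bookkeeping is routine; the only genuine subtlety is the bijectivity claim under Assumption~\ref{ass1}, where absorbing the $[\Xx^{k_1},\Xx^{k_2}]$-integrals into the $\d s$-integrals a priori raises the truncation level from $n-1$ to $n-1+m$. One must check that this does not actually enlarge the model class, i.e.\ that after the substitution the resulting process is still of form \eqref{eqn3} with the \emph{same} $n$ — this follows because the substitution is performed on a process already known to be of form \eqref{eqn3} with parameter $n$ (we are characterizing representations of a \emph{given} $\Ss_n(\ell)$), so the higher-degree terms must cancel; making this cancellation explicit, or invoking Proposition~\ref{prop2} to bypass it, is the one point requiring care.
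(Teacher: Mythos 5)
Your overall route --- strip the last letter from each multi-index and identify $\langle e_{I'}\otimes e_\kappa,\widehat\X\rangle$ with an integral of $\langle e_{I'},\widehat\X\rangle$ against the $\kappa$-th component of $\widehat X$ --- is the same as the paper's, but there is a genuine error in your treatment of the $\d\Xx^k$-integrals. You declare all integrals in the statement to be Stratonovich, justifying this only for the $\d s$ and $\d[\Xx^{k_1},\Xx^{k_2}]$ pieces. In this paper $\d\Xx^k_s$ without a $\circ$ is an It\^o integral, and that is not cosmetic: Proposition~\ref{prop1} feeds directly into Corollary~\ref{cor1}, where the representation containing only $\d\Xx^k$-integrals must yield a local martingale whenever $X$ is one, which fails for Stratonovich integrals. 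With It\^o integrals your explicit correspondence $\ell^{\kappa}_{I'}:=\ell_{I'\kappa}$ is incorrect, because for $|I|\ge 1$ with $i_{|I|}\in\{1,\dots,d\}$ one has
$$\int_0^t\langle e_{I},\widehat\X_s\rangle\,\d\Xx^k_s=\langle e_I\otimes e_k,\widehat\X_t\rangle-\tfrac12\,\langle e_{I'}\otimes\e_{i_{|I|}k},\widehat\X_t\rangle,$$
i.e.\ the It\^o--Stratonovich correction mixes in a word ending in the quadratic-variation letter $\e_{i_{|I|}k}$. This is exactly the content of Lemma~\ref{lem2} (the elements $\tilde e^k_I$), on which the paper's proof rests. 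The correspondence is still a linear bijection, but seeing this requires observing that the induced coefficient map is triangular with unit diagonal (each $e_I\otimes e_k$ appears with coefficient $1$ and the corrections only involve other admissible words), not the naive last-letter-stripping map you wrote down.

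A smaller issue concerns the final sentence. Substituting $\d[\Xx^{k_1},\Xx^{k_2}]_s=\sum_{|J|\le m}a^J_{k_1k_2}\langle e_J,\widehat\X_s\rangle\,\d s$ and shuffling does raise the degree of the $\d s$-integrand, but your assertion that the excess terms ``must cancel'' is unsupported: nothing forces cancellation, and the paper's own mechanism (the modified $\tilde e^k_I$ of Lemma~\ref{lem2} under Assumption~\ref{ass1}) likewise produces words of length up to $n-1+m$ rather than cancelling them. The honest statement is that the quadratic-variation integrals become redundant because they can be rewritten as $\d s$-integrals against linear functionals of the signature, at the cost of a larger truncation level in the integrand. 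Replacing your Stratonovich bookkeeping by the It\^o computations of Lemma~\ref{lem2} repairs the main point and makes this one transparent.
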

 Before presenting the proof of this proposition let us formulate the following technical lemma.

 \begin{lemma}\label{lem2}
Suppose that $\widehat \Xx$ is given by Definition~\ref{def1}\ref{it2}. Then 
\begin{align*}
\int_0^t \langle e_I,\widehat \X_s\rangle \d s
&= \langle e_I\otimes e_0,\widehat \X_t\rangle,\qquad
\int_0^t \langle e_I,\widehat \X_s\rangle \d \Xx_s^k= \langle \tilde e_I^k,\widehat \X_s\rangle, \text{\quad and\quad}\\
\int_0^t \langle e_I,\widehat \X_s\rangle \d [\Xx^{k_1},\Xx^{k_2}]_s
&= \langle e_I\otimes \e_{k_1k_2},\widehat \X_t\rangle,
\end{align*}
where $\tilde e_\emptyset^k=e_k$ and 
$\tilde e_I^k=e_I\otimes e_{k}-\frac 1 2 e_{I'}\otimes \e_{i_{|I|}k} $ for each $|I|>0$. If Assumption~\ref{ass1} is in force, then the same holds true with
$$\tilde e_I^k=e_I\otimes e_{k}- \sum_{|J|\leq m}\frac {a_{i_{|I|}k}^J} 2(e_{I'}\shuffle e_J)\otimes e_0 $$
 for each $|I|>0$. 
 \end{lemma}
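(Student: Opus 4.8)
The strategy is to reduce each of the three integrals to an application of the half-shuffle identity (Lemma \ref{lem1}), using the facts recorded just before Theorem \ref{universality}, namely that $\langle e_I\otimes e_0,\widehat\X_t\rangle=\int_0^t\langle e_I,\widehat\X_s\rangle\,\d s$ and the relation between Stratonovich and It\^o integrals. The cleanest route is: first handle the Stratonovich versions of all three integrals directly via Lemma \ref{lem1}, since the integrators $t\mapsto t$, $\Xx^k$ and $[\Xx^{k_1},\Xx^{k_2}]$ are all components of $\widehat\Xx$ (with $e_0$, $e_k$ and $\e_{k_1k_2}$ the corresponding tensor letters by Notation \ref{rem1}), so that $\int_0^t\langle e_I,\widehat\X_s\rangle\circ\d\langle e_j,\widehat\X_s\rangle=\langle e_I\halfshuffle e_j,\widehat\X_t\rangle=\langle e_I\otimes e_j,\widehat\X_t\rangle$ for any single letter $j$ (the half-shuffle against a length-one word is just the tensor concatenation, since $e_j'=e_\emptyset$ and $e_I\shuffle e_\emptyset=e_I$). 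For the first and third integrals the integrator $t$ and $[\Xx^{k_1},\Xx^{k_2}]$ have finite variation, so the Stratonovich and It\^o integrals coincide, giving the stated formulas immediately.

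For the middle integral the It\^o/Stratonovich correction is nontrivial. I would write $\int_0^t\langle e_I,\widehat\X_s\rangle\,\d\Xx_s^k = \int_0^t\langle e_I,\widehat\X_s\rangle\circ\d\Xx_s^k - \tfrac12[\langle e_I,\widehat\X\rangle,\Xx^k]_t$. The first term is $\langle e_I\otimes e_k,\widehat\X_t\rangle$ by the single-letter half-shuffle computation above. For the bracket term, recall from the definition of the signature that $\d\langle e_I,\widehat\X_s\rangle=\langle e_{I'},\widehat\X_s\rangle\circ\d\widehat X_s^{i_{|I|}}$; hence the quadratic covariation $[\langle e_I,\widehat\X\rangle,\Xx^k]_t$ equals $\int_0^t\langle e_{I'},\widehat\X_s\rangle\,\d[\widehat X^{i_{|I|}},\Xx^k]_s$. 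In the setting of Definition \ref{def1}\ref{it2}, $[\widehat X^{i_{|I|}},\Xx^k]$ is itself (up to the time letter having zero bracket) the component $\e_{i_{|I|}k}$ of $\widehat\Xx$ — more precisely, if $i_{|I|}=0$ the bracket vanishes and if $i_{|I|}\ge 1$ it is $[\Xx^{i_{|I|}},\Xx^k]$, which is the $\e_{i_{|I|}k}$ component; applying the third (already-proved) integral formula gives $[\langle e_I,\widehat\X\rangle,\Xx^k]_t=\langle e_{I'}\otimes\e_{i_{|I|}k},\widehat\X_t\rangle$. Combining, $\int_0^t\langle e_I,\widehat\X_s\rangle\,\d\Xx_s^k=\langle e_I\otimes e_k-\tfrac12 e_{I'}\otimes\e_{i_{|I|}k},\widehat\X_t\rangle=\langle\tilde e_I^k,\widehat\X_t\rangle$, and the $|I|=0$ case is just $\int_0^t 1\,\d\Xx_s^k=\Xx_t^k=\langle e_k,\widehat\X_t\rangle$.

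Under Assumption \ref{ass1} the only change is that we no longer track $[\Xx^{i_{|I|}},\Xx^k]$ as a separate coordinate of $\widehat\Xx$; instead $\d[\Xx^{i_{|I|}},\Xx^k]_s=\sum_{|J|\le m}a_{i_{|I|}k}^J\langle e_J,\widehat\X_s\rangle\,\d s$. Substituting this into the bracket term gives $[\langle e_I,\widehat\X\rangle,\Xx^k]_t=\sum_{|J|\le m}a_{i_{|I|}k}^J\int_0^t\langle e_{I'},\widehat\X_s\rangle\langle e_J,\widehat\X_s\rangle\,\d s=\sum_{|J|\le m}a_{i_{|I|}k}^J\int_0^t\langle e_{I'}\shuffle e_J,\widehat\X_s\rangle\,\d s=\sum_{|J|\le m}a_{i_{|I|}k}^J\langle(e_{I'}\shuffle e_J)\otimes e_0,\widehat\X_t\rangle$, using the shuffle property (Proposition \ref{shuffle-property}) and then $\langle e\otimes e_0,\widehat\X_t\rangle=\int_0^t\langle e,\widehat\X_s\rangle\,\d s$ applied to $e=e_{I'}\shuffle e_J$. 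This yields the stated formula for $\tilde e_I^k$ in that case. The main subtlety to get right is the bookkeeping in the bracket-of-signature computation — in particular, correctly identifying $[\widehat X^{i_{|I|}},\Xx^k]$ with the appropriate coordinate of the extension $\widehat\Xx$ (and handling the $i_{|I|}=0$ case where it vanishes) — everything else is a routine application of the already-established Stratonovich-to-tensor dictionary.
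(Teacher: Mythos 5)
Your proposal is correct and follows essentially the same route as the paper's proof: the first and third identities are read off from the definition of the signature (the integrators being finite-variation components of $\widehat\Xx$), the second is obtained via the It\^o--Stratonovich correction $-\tfrac12[\langle e_I,\widehat\X\rangle,\Xx^k]_t$ computed from the recursive definition of the signature, and the Assumption~\ref{ass1} case follows by substituting the linear bracket density and applying the shuffle property. Your explicit handling of the $i_{|I|}=0$ case (where the bracket vanishes) is a small but welcome addition of care that the paper's displayed formula glosses over.
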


 \begin{proof}
 The representations of $\int_0^t \langle e_I,\widehat \X_s\rangle \d s$ and $\int_0^t \langle e_I,\widehat \X_s\rangle \d [\Xx^{k_1},\Xx^{k_2}]_s$ follow by definition of the signature.
 We proceed with the proof of the representation of $\int_0^t \langle e_I,\widehat \X_s\rangle \d \Xx_s^k$. For $I=\emptyset$ the claim is clear.
 By definition of the Stratonovich integral we have that
\begin{align*}
\int_0^t\langle e_{I},\widehat \X_s\rangle \d \Xx^k_s
&=\int_0^t\langle e_{I},\widehat \X_s\rangle \circ \d X^k_s
-\frac 1 2   [\langle e_{I},\widehat \X\rangle,X^k]_t\\
&=\langle e_{I}\otimes e_k,\widehat \X_t\rangle
-\frac 1 2   \int_0^t\langle e_{I'},\widehat \X_s\rangle \d [X^{i_{|I|}},X^k]_s\\
&=\langle e_{I}\otimes e_k,\widehat \X_t\rangle
-\frac 1 2   \langle e_{I'}\otimes \e_{i_{|I|}k},\widehat \X_t\rangle\\
&=\langle \tilde e_I^k,\widehat \X_t\rangle.
\end{align*}
Suppose now that Assumption~\ref{ass1} is in force. Then Proposition~\ref{shuffle-property} and the definition of the signature yield
\begin{align*}
\int_0^t \langle e_I,\widehat \X_s\rangle \d [\Xx^{k_1},\Xx^{k_2}]_s
&=\int_0^t \sum_{|J|\leq m}{a_{k_1k_2}^J}  \langle e_I,\widehat \X_s\rangle \langle e_J,\widehat\X_s\rangle \d s\\
&=\sum_{|J|\leq m} {a_{k_1k_2}^J} \langle(e_I\shuffle e_J)\otimes e_0,\widehat\X_t\rangle,
\end{align*}
and the claim follows.
\end{proof}

We are now ready to provide the proof of Proposition~\ref{prop1}.
\begin{proof}[Proof of Proposition~\ref{prop1}]
Let $\Ss_n(\ell)$ be as in the statement of the proposition. By Lemma~\ref{lem2} it holds 
\begin{align*}
&\Ss_n(\ell)_t=\ell_{\emptyset}+\Big(\ell^0_\emptyset\langle e_0,\widehat \X_t\rangle+\sum_{0<|I|\leq n-1} \ell^0_I \langle e_I\otimes e_0,\widehat \X_t\rangle\Big) 
+\sum_{k=1}^{d}\Big(\ell^{k}_\emptyset\langle \tilde e ^k_\emptyset,\widehat \X_t\rangle+\sum_{0<|I|\leq n-1} \ell^{k}_I \langle \tilde e^k_I,\widehat \X_t\rangle \Big)\\
&\qquad+\sum_{k_1,k_2=1}^{d}\Big(\ell^{k_1,k_2}_\emptyset\langle \e_{k_1k_2},\widehat \X_t\rangle+\sum_{0<|I|\leq n-1} \ell^{k_1,k_2}_I \langle e_I\otimes\e_{k_1k_2},\widehat \X_s\rangle \Big),
\end{align*}
which is of the form given by \eqref{eqn3}.
Conversely, by Lemma~\ref{lem2} we also have
\begin{align*}
\langle e_I\otimes e_0,\widehat \X_t\rangle
&=\int_0^t \langle e_I,\widehat \X_s\rangle \d s,\qquad
\langle e_I\otimes \e_{k_1k_2},\widehat \X_t\rangle=\int_0^t \langle e_I,\widehat \X_s\rangle \d [\Xx^{k_1},\Xx^{k_2}]_s, \text{\quad and\quad}\\
\langle  e_I^k\otimes e_k,\widehat \X_s\rangle
&=\int_0^t \langle e_I,\widehat \X_s\rangle \d \Xx_s^k
+\frac 1 2 \int_0^t  \langle e_{I'},\widehat \X_s\rangle\d [\Xx^{i_{|I|}},\Xx^{k}]_s,
\end{align*}
and the claim follows.
\end{proof}

In the case of Brownian motion the formulas of Lemma~\ref{lem2} simplify as follows.

\begin{example}\label{lemma:tilde:multi}
Suppose that $\Xx$ is a vector of correlated Brownian motions with correlation matrix $\rho$ and $\widehat \Xx$ is given by Definition~\ref{def1}\ref{it1}. Then it holds
$$\tilde e^k_\emptyset=e_{k}\qquad\text{and}\qquad\tilde e^k_I=e_{I}\otimes e_{k}
-\frac {\rho_{i_{|I|},k}} 2  1_{\{i_{|I|}\neq 0\}}e_{I'}\otimes e_0.$$
\end{example}

\subsection{Absence of arbitrage and universality properties}\label{sec1}

The above representation results will allow us to 
establish conditions that guarantee absence of arbitrage. We shall therefore assume that \emph{no free lunch with vanishing risk} (\cite{DS:94}) holds, which is -- due to the continuity of sample paths of $\Ss_{n}(\ell)$ -- equivalent to the existence of an equivalent local martingale measure $\mathbb{Q}$, i.e.~a measure $\mathbb{Q} \sim \mathbb{P}$ under which the (discounted) asset price model $S_{n}(\ell)$ is a local martingale. We shall always assume here that interest rates are zero and that the asset is already discounted.
To formulate a precise no-arbitrage condition, the following corollary  which is 
 a direct consequence of Proposition~\ref{prop1} and Lemma \ref{lem2} is essential.
 
\begin{corollary}\label{cor1}
Suppose that $\Xx$ is a local martingale. Then $\Ss_{n}(\ell)$ is a local martingale if and only if it admits a representation of the form
\begin{equation}\label{eq:martingality}
\Ss_{n}(\ell)_t=\ell_{\emptyset}
+\sum_{k=1}^{D}\bigg(\ell_\emptyset^k\langle e_k, \widehat\X_t\rangle+\sum_{0<|I|\leq n-1} \ell^{k}_I \langle \tilde e^k_I,\widehat \X_t\rangle\bigg),
\end{equation}
for some $D\in\{1,\ldots, d\}$ and $\ell:=\{\ell_\emptyset, \ell_I^{k}\colon 0\leq|I|\leq n-1\text{ and } k\in\{1,\ldots,d\}\}.$
\end{corollary}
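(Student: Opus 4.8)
The plan is to leverage Proposition~\ref{prop1} together with the classical It\^o-Stratonovich structure to read off martingality directly from the integral representation. First I would invoke Proposition~\ref{prop1}: since $\widehat\Xx$ is given by Definition~\ref{def1}\ref{it2}, every signature model $\Ss_n(\ell)$ of form \eqref{eqn3} admits the stochastic-integral representation with a $\d s$-drift term, a $\d\Xx^k_s$-term, and $\d[\Xx^{k_1},\Xx^{k_2}]_s$-terms. Because $\Xx$ is a (continuous) local martingale, the term $\sum_k\int_0^t(\cdots)\d\Xx^k_s$ is automatically a local martingale (the integrands being continuous adapted processes, hence locally bounded), whereas the $\d s$-integral and the $\d[\Xx^{k_1},\Xx^{k_2}]_s$-integrals are continuous processes of finite variation. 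A continuous local martingale that is also of finite variation is constant; so for $\Ss_n(\ell)$ to be a local martingale the finite-variation part must vanish identically, i.e.\ all coefficients $\ell^0_I$ and $\ell^{k_1,k_2}_I$ (including the $I=\emptyset$ ones) must be zero. Conversely, if only the $\d\Xx^k$-terms survive, $\Ss_n(\ell)$ is manifestly a local martingale. This is the ``only if'' and ``if'' dichotomy.

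Next I would translate the surviving part back into the signature form \eqref{eq:martingality}. With $\ell^0_I=0$ and $\ell^{k_1,k_2}_I=0$, Proposition~\ref{prop1} gives
$$\Ss_n(\ell)_t=\ell_\emptyset+\sum_{k=1}^d\int_0^t\Big(\ell^k_\emptyset+\sum_{0<|I|\leq n-1}\ell^k_I\langle e_I,\widehat\X_s\rangle\Big)\d\Xx^k_s,$$
and applying the second identity of Lemma~\ref{lem2}, namely $\int_0^t\langle e_I,\widehat\X_s\rangle\d\Xx^k_s=\langle\tilde e^k_I,\widehat\X_t\rangle$ with $\tilde e^k_\emptyset=e_k$, this is exactly \eqref{eq:martingality} with $D=d$. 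One subtlety: the statement allows $D\in\{1,\dots,d\}$ rather than forcing $D=d$; this is harmless, since a representation with $D<d$ is a special case of one with $D=d$ (set the missing $\ell^k_I$ to zero), so the equivalence ``$\Ss_n(\ell)$ is a local martingale $\iff$ it admits a representation \eqref{eq:martingality} for some $D$'' holds — the freedom in $D$ only reflects that some assets may not enter.

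The main obstacle, and the only place requiring a little care, is the ``only if'' direction: one must argue that the drift-type part genuinely vanishes rather than merely cancels against something. The clean way is the finite-variation-versus-local-martingale decomposition argument above: the representation of Proposition~\ref{prop1} is already a semimartingale decomposition into a continuous local-martingale part $M_t:=\sum_k\int_0^t(\cdots)\d\Xx^k_s$ and a continuous finite-variation part $A_t$; uniqueness of the canonical decomposition of a continuous semimartingale forces $A\equiv 0$ when $\Ss_n(\ell)$ is a local martingale, whence $A_t=\int_0^t(\ell^0_\emptyset+\sum\ell^0_I\langle e_I,\widehat\X_s\rangle)\,\d s+\sum_{k_1,k_2}\int_0^t(\ell^{k_1,k_2}_\emptyset+\sum\ell^{k_1,k_2}_I\langle e_I,\widehat\X_s\rangle)\,\d[\Xx^{k_1},\Xx^{k_2}]_s\equiv0$. (Strictly, to conclude the individual coefficients vanish one also uses that the signature terms $\langle e_I,\widehat\X_s\rangle$ together with the constant are, on the support of $\Xx$, linearly independent as processes — or one simply notes that the corollary's claim is about existence of \emph{a} representation of form \eqref{eq:martingality}, so it suffices that $\Ss_n(\ell)$ equals $\ell_\emptyset+M_t$, which is of that form.) I would write it in this latter, lighter way to avoid any linear-independence technicalities, since the corollary only asserts existence of such a representation.
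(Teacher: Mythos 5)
Your proposal is correct and follows essentially the same route as the paper: apply Proposition~\ref{prop1} to obtain the stochastic-integral representation, observe that the local martingale property forces the finite-variation part (the $\d s$- and $\d[\Xx^{k_1},\Xx^{k_2}]$-integrals) to vanish, and convert the surviving $\d\Xx^k$-integrals back to signature form via Lemma~\ref{lem2}. Your extra care on the ``only if'' direction --- invoking uniqueness of the canonical decomposition of a continuous semimartingale, and noting that the corollary only requires existence of \emph{a} representation so no linear-independence argument is needed --- is a welcome sharpening of the paper's terser assertion, but not a different method.
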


\begin{proof}
Due to the local martingale property of $X$, $S_n(\ell)$ in the representation of Proposition~\ref{prop1} is a local martingale if and only if all integrals with respect to time and the quadratic variation process vanish. This means that $S_n(\ell)$ is of form
\begin{align}\label{eq:martingalecond}
S_n(\ell)=\ell_{\emptyset}+\sum_{k=1}^{d}\int_0^t\Big(\ell^{k}_\emptyset+\sum_{0<|I|\leq n-1} \ell^{k}_I \langle e_I,\widehat \X_s\rangle \Big)\d \Xx_s^k
\end{align}
and Lemma  \ref{lem2}  yields the assertion.
\end{proof}

We are now ready to formulate sufficient no-arbitrage conditions.

\begin{corollary}

Suppose that  there is an equivalent measure  $\mathbb{Q} \sim \mathbb{P}$ such that $X$ is a local $\mathbb{Q}$-martingale. Then the following holds.
\begin{enumerate}
\item 
The model  $S_{n}(\ell)$  is free of arbitrage if it admits a representation as of \eqref{eq:martingality}.
\item 
If $\mathbb{Q}$ is  an equivalent local martingale measure 
for $S_{n}(\ell)$,
then $S_{n}(\ell)$ is necessarily of form \eqref{eq:martingality}.

\end{enumerate}

\end{corollary}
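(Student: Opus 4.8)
The plan is to derive both statements directly from Corollary~\ref{cor1}, which already characterizes the local martingale property of $S_n(\ell)$. The two claims are essentially the two directions of an equivalence, repackaged under the slightly weaker hypothesis that the local martingale property of $X$ only holds under some equivalent measure $\mathbb{Q}$.

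For part (i): First I would observe that the entire representation theory of Section~\ref{sec1} (Proposition~\ref{prop1}, Lemma~\ref{lem2} and hence Corollary~\ref{cor1}) is purely pathwise/algebraic once the semimartingale structure is fixed, and in particular is invariant under passage to the equivalent measure $\mathbb{Q}$; the signature $\widehat\X$ is unchanged and $X$ remains a continuous semimartingale. So if $S_n(\ell)$ admits a representation of form \eqref{eq:martingality}, then by the computation in the proof of Corollary~\ref{cor1} (applied under $\mathbb{Q}$, where $X$ is a local martingale) the process $S_n(\ell)$ is a local $\mathbb{Q}$-martingale. Since $\mathbb{Q}\sim\mathbb{P}$ and $S_n(\ell)$ has continuous paths, this is precisely the NFLVR condition by the Delbaen--Schachermayer theorem \cite{DS:94}, i.e.~the model is free of arbitrage.

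For part (ii): Here I would argue the converse using the ``only if'' direction of Corollary~\ref{cor1}. Suppose $\mathbb{Q}$ is an equivalent local martingale measure for $S_n(\ell)$, so that in particular $X$ is a local $\mathbb{Q}$-martingale (as assumed) and $S_n(\ell)$ is a local $\mathbb{Q}$-martingale. Apply Proposition~\ref{prop1} to write $S_n(\ell)$ in its stochastic-integral form; under $\mathbb{Q}$ the stochastic integrals against $X^k$ are local martingales, while the integrals against $\d s$ and $\d[X^{k_1},X^{k_2}]_s$ are continuous finite-variation processes. A continuous local martingale of finite variation is constant, so the finite-variation part must vanish identically, which (by the uniqueness of the semimartingale decomposition) forces all the time- and quadratic-variation integrands to be zero. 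This puts $S_n(\ell)$ in the form \eqref{eq:martingalecond}, and then Lemma~\ref{lem2} rewrites it as \eqref{eq:martingality}, exactly as in the proof of Corollary~\ref{cor1}.

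The only genuinely delicate point is the measure-change bookkeeping: one must note that ``being of form \eqref{eq:martingality}'' is a statement about $\ell$ and the fixed path space, not about the measure, so that the characterization in Corollary~\ref{cor1}—proved there under the assumption that $X$ is a $\mathbb{P}$-local martingale—transfers verbatim to the measure $\mathbb{Q}$ under which $X$ actually is a local martingale. Beyond that, the proof is a routine concatenation of Corollary~\ref{cor1}, the Delbaen--Schachermayer characterization of NFLVR for continuous processes, and the standard fact that a continuous local martingale with paths of finite variation is constant; I would write it in four or five lines citing those three ingredients.
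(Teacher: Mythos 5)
Your proposal is correct and follows essentially the same route as the paper: both parts are read off from Corollary~\ref{cor1} applied under the measure $\mathbb{Q}$, combined for part (i) with the Delbaen--Schachermayer characterization of NFLVR for continuous processes. The only (harmless) overstatement is in part (ii), where uniqueness of the semimartingale decomposition forces the \emph{sum} of the time- and quadratic-variation integrals to vanish as a process rather than each integrand individually (cf.\ the paper's remark that drift terms in \eqref{eq:martingalecond} may cancel); this does not affect the conclusion, since the vanishing of the finite-variation part already puts $S_n(\ell)$ in form \eqref{eq:martingalecond} and hence, via Lemma~\ref{lem2}, in form \eqref{eq:martingality}.
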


\begin{proof}
The first assertion is a direct consequence of  Corollary \ref{cor1}, as form \eqref{eq:martingality} implies that $S_{n}(\ell)$ is a local $\mathbb{Q}$-martingale and thus $\mathbb{Q}$ is an equivalent local martingale measure.
The assumption of the second assertion
implies that $S_{n}(\ell)$ is a local martingale under $\mathbb{Q}$, whence by Corollary \ref{cor1} it has to be of form \eqref{eq:martingality}.
\end{proof}

\begin{remark}
\begin{enumerate}
\item 
Note that $S_{n}(\ell)$ could be free of arbitrage without being of form \eqref{eq:martingality}. This is the case if
the primary process $X$ is not a local martingale under any 
of the equivalent local martingale measures.

\item 
Observe also that if $S_n(\ell)$ is form \eqref{eq:martingality} and a local martingale under $\mathbb{Q}$, then $X$ does not necessarily need to be a local $\mathbb{Q}$-martingale.
This happens if  drift terms  in \eqref{eq:martingalecond} cancel out. 
Note however that in dimension $d=1$ this cannot occur and $X$ is thus necessarily a local $\mathbb{Q}$-martingale. In particular, in this case $\mathbb{Q}$ is unique and the model thus complete.
\end{enumerate}
\end{remark}

In the following example we show how classical stochastic volatilty models can be approximated by arbitrage-free signature models, thus making the announced universality properties precise.

\begin{example}\label{example:tilde:sv:1d}
We consider a generic stochastic volatility model driven by two correlated Brownian motions $(B_{t})_{t\in[0,1]}$ and $(W_{t})_{t\in[0,1]}$ with correlation coefficient $\rho\in[-1,1]$. More precisely, we assume that under some equivalent local martingale measure $\Q$
the dynamics of $S$  are given by
\begin{align*}
	{\d}S_{t}&=g(S_t,V_t){\d}B_{t}, \label{eq:stochvol}\\
	{\d}V_{t}&=h(S_t,V_t){\d}t+\sigma(S_t,V_{t}){\d}W_{t}, \notag
\end{align*} 
for some functions $g,h,\sigma:\mathbb{R}^2\to\mathbb{R}$. 
As a possible choice for the parameters we can for instance consider
\begin{itemize}
	\item  $g(s,v)=s\sqrt{v}$, $h(s,v)=\kappa(\theta-v)$ and $\sigma(s,v)=\sigma_{0}\sqrt{v}$, for some parameters $\kappa, \theta \geq 0$ and $\sigma_0 \in \mathbb{R}$ to retrieve the Heston model (see \cite{H:93});
	\item $g(s,v)=s^{\beta}v$, $h(s,v)=0$ and $\sigma(s,v)=\alpha v$ for $0 < \beta \leq 1 $ and $\alpha \in \mathbb{R}$ to retrieve the SABR model (see \cite{HKLW:02}).
\end{itemize}
Set now  $X_{t}:=(B_{t},W_{t})$ and let $\widehat X_{t}$ be given by the representation of Definition~\ref{def1}\ref{it1}, i.e., $\widehat{X}_t:=(t, B_{t},W_{t})$. For some fixed $n \in \mathbb{N}$, consider then the following model 
\begin{equation*}
\Ss_{n}(\ell)_t:=\ell_\emptyset+\int_{0}^{t}\Big(\ell_{\emptyset}^B+\sum_{0<\lvert I\lvert\le n-1}\ell_{I}^B\langle e_{I}^B,  \widehat\X_{s}\rangle\Big){\d}B_{s},
\end{equation*}
 which is  a local martingale under $\Q$
and thus absence of arbitrage is guaranteed. 
Moreover, by Proposition~\ref{prop1}, we know that this model has a representation of the form 
\begin{align*}
    \Ss_{n}(\ell)_t=\ell_\emptyset+\ell_\emptyset^B B_t+\sum_{0<|I|\leq n-1} \ell_I^B \langle \tilde e_I^B,\widehat\X_t\rangle,
\end{align*}
for each $t\in[0,1]$, where by Lemma~\ref{lem2}
\begin{align}\label{eq:SnHestSabr}
	\tilde e_I^{B}&:=e_{I}\otimes e_1 
	-\frac{1}{2}(1_{\{i_{|I|}=1\}}+\rho 1_{\{i_{|I|}=2\}})e_{I'}\otimes e_0, 
\end{align}
for each $I$ with $0<|I|\leq n-1$, implying that it is a signature model as of Definition \ref{def:model}.

Concerning universality note that if
the coefficients $\ell_I^B$ are chosen
such that \eqref{eq:SnHestSabr}
matches the signature approximation of $S$ up to order $n-1$  as specified in Proposition~\ref{prop214} (assuming appropriate regularity conditions on the coefficients $g, h$, $\sigma$), then at least locally in $t$, $S_n(\ell)$ is close to $S$ in probability.

Let us exemplify this by means of the SABR model for $\beta=1$ and, for simplicity, $\rho=0$. Set $\ell_\emptyset:=S_0$, 
$\ell_\emptyset^B:=y_1y_2$, 
$\ell_{(0)}^B:=-\frac 1 2 y_1y_2^3-\frac 1 2 \alpha^2y_1y_2$,
$\ell_{(1)}^B:= y_1y_2^2$, and
$\ell_{(2)}^B:=\alpha y_1y_2$.
By an application of Proposition~\ref{prop214} for $X_t=(t,W_t,B_t)$, $Y^1=S$, $Y^2=V$, $m=2$,
$$\Psi_0(y)=(1,-\frac 1 2y_1y_2^2,-\frac 1 2 \alpha^2y_2)^\top,$$
$\Psi_1(y)=(0,y_1y_2,0)^\top$, and $\Psi_2(y)=(0,0,\alpha y_2)^\top$
we can conclude that for each $\e>0$ there is a constant  $C(\e)$ such that 
$$\Q\Big(|S_t
-S_2(\ell)_t|
> C(\e)t^{3/2} \Big)<\e,$$
where $S_2(\ell)$ is given by Definition~\ref{def:model} for $\widehat X_t=(t,B_t,W_t)$. 
The same procedure for general $n$ leads to a similar result for an arbitrary speed of convergence.

Another possibility to establish universality is to use the fact that the solution map of a stochastic differential equation (with sufficiently regular coefficients) is a continuous map of the signature of the driving signal (see e.g.~Corollary 10.40 \cite{FV:10}).
The universal approximation theorem (Theorem~\ref{universality}, see also the formulation of Lemma 5.2 in~\cite{BHRS:21} or \cite{CM:22}) then yields the result, however without quantitative estimates.
\end{example}

\subsection{The expected signature of \texorpdfstring{$\Ss_n(\ell)$}{S(l)}}\label{sec32}

For pricing purposes of so-called \emph{sig-payoffs} treated in Section \ref{sig-payoffs} it will be  important to be able to compute the expected signature of $\widehat S_n(\ell)_t=(t,S_n(\ell)_{t})$.  In this section we thus provide formulas which trace this computation back to the calculation of the expected signature of $\widehat{X}_t$, which in many cases is well-known (see e.g.~\cite{FW:03} for Brownian motion) and can often be computed by techniques of polynomial processes (see \cite{CST:22}) or by solving an infinite dimensional system of linear PDEs corresponding to the Kolmogorov forward equation of the signature process (see \cite{N:12}). For a unified treatment of signature cumulants, i.e.~the logarithm of expected signature, we refer to \cite{FHT:21}.

Here, we first suppose  that $\widehat \Xx$ is given by  Definition~\ref{def1}\ref{it1}.

\begin{theorem}\label{thm1}
Fix $n\in \N$, a multi-index $J$, and  $D\in\{1,\ldots, d\}$ and denote by $\widehat \S_n(\ell)_{t}$ the signature of  $\widehat S_n(\ell)_{t}$. Let $e_0$ be the component of  $\widehat S_n(\ell)$ corresponding to time and $e_1$ its component corresponding to $S_n(\ell)$. Define $e(\emptyset,\ell):=\tilde e(\emptyset,\ell):=e_\emptyset$ and 
\begin{align*}
e(J,\ell)&=\halfshuffle_{i=1}^{|J|}\Big(e_01_{\{j_{i}=0\}}+\Big(\sum_{0<\lvert I\lvert\le n}\ell_{I}e_{I}\Big)1_{\{j_{i}=1\}}\Big),\\
\tilde e(J,\ell)&=\halfshuffle_{i=1}^{|J|}\Big( e_01_{\{j_{i}=0\}}+\Big(\sum_{k=1}^D\Big(\ell_\emptyset^k e_k+\sum_{0<\lvert I\lvert\le n-1}\ell_{I}^k\tilde e^k_{I}\Big)\Big)1_{\{j_{i}=1\}}\Big),
\end{align*}
for $|J|>0$ with $\halfshuffle$ the half-shuffle being introduced in Definition \ref{def:halfshuffle}.
Then the following representations hold.
\begin{itemize}
\item $
		\langle e_{J},{\widehat\S_n(\ell)}_{t}\rangle
		=\langle e(J,\ell),\widehat{\X}_{t}\rangle$, if $S_n(\ell)$ is given by \eqref{eqn3}, and
		\item		$\langle  e_{J},{\widehat\S_n(\ell)}_{t}\rangle
		=\langle \tilde e(J,\ell),\widehat{\X}_{t}\rangle$, if $\Ss_{n}(\ell)$ is given as in Corollary~\ref{cor1}.
\end{itemize}

\end{theorem}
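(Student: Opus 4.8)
The plan is to prove the identity $\langle e_J, \widehat{\mathbb{S}}_n(\ell)_t\rangle = \langle e(J,\ell), \widehat{\mathbb{X}}_t\rangle$ by induction on $|J|$, using the recursive definition of the signature together with Lemma~\ref{lem1} (the half-shuffle identity). The base case $|J|=0$ is immediate since both sides equal $1$. For the inductive step, write $J = (j_1,\ldots,j_{|J|})$ with $J' = (j_1,\ldots,j_{|J|-1})$, and use the defining recursion
\begin{equation*}
\langle e_J, \widehat{\mathbb{S}}_n(\ell)_t\rangle = \int_0^t \langle e_{J'}, \widehat{\mathbb{S}}_n(\ell)_s\rangle \circ \d \widehat{S}_n(\ell)_s^{j_{|J|}},
\end{equation*}
where $\widehat{S}_n(\ell)^0_s = s$ is the time component and $\widehat{S}_n(\ell)^1_s = S_n(\ell)_s$.

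The key observation is that $\widehat{S}_n(\ell)$ has components that are themselves linear functionals of $\widehat{\mathbb{X}}$: the time component is $\langle e_0, \widehat{\mathbb{X}}_s\rangle$ and, by \eqref{eqn3}, $S_n(\ell)_s = \ell_\emptyset + \langle \sum_{0<|I|\leq n}\ell_I e_I, \widehat{\mathbb{X}}_s\rangle$, so that $\d S_n(\ell)_s = \d\langle \sum_{0<|I|\leq n}\ell_I e_I, \widehat{\mathbb{X}}_s\rangle$ (the constant drops out under differentiation). Hence in both cases $j_{|J|}\in\{0,1\}$ the integrator can be written as $\circ\,\d\langle \eta_{j_{|J|}}, \widehat{\mathbb{X}}_s\rangle$ where $\eta_0 = e_0$ and $\eta_1 = \sum_{0<|I|\leq n}\ell_I e_I$. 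Applying the inductive hypothesis $\langle e_{J'}, \widehat{\mathbb{S}}_n(\ell)_s\rangle = \langle e(J',\ell), \widehat{\mathbb{X}}_s\rangle$ and then Lemma~\ref{lem1} (extended by linearity in both arguments from $e_I\tilde\shuffle e_J$ to general tensor-algebra elements), we get
\begin{equation*}
\int_0^t \langle e(J',\ell), \widehat{\mathbb{X}}_s\rangle \circ \d\langle \eta_{j_{|J|}}, \widehat{\mathbb{X}}_s\rangle = \langle e(J',\ell)\,\tilde\shuffle\,\eta_{j_{|J|}}, \widehat{\mathbb{X}}_t\rangle = \langle e(J,\ell), \widehat{\mathbb{X}}_t\rangle,
\end{equation*}
where the last equality is precisely the recursive structure of $e(J,\ell) = \tilde\shuffle_{i=1}^{|J|}(\ldots)$ unwound one step, i.e.\ $e(J,\ell) = e(J',\ell)\,\tilde\shuffle\,(e_0 1_{\{j_{|J|}=0\}} + (\sum_{0<|I|\le n}\ell_I e_I)1_{\{j_{|J|}=1\}})$. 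This closes the induction.

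The second representation, with $\tilde e(J,\ell)$, follows by the identical argument: when $S_n(\ell)$ is given as in Corollary~\ref{cor1}, we have $S_n(\ell)_s = \ell_\emptyset + \langle \sum_{k=1}^D(\ell_\emptyset^k e_k + \sum_{0<|I|\leq n-1}\ell_I^k \tilde e_I^k), \widehat{\mathbb{X}}_s\rangle$ by that corollary, so the integrator $\d S_n(\ell)_s$ equals $\circ\,\d\langle \tilde\eta_1, \widehat{\mathbb{X}}_s\rangle$ with $\tilde\eta_1 = \sum_{k=1}^D(\ell_\emptyset^k e_k + \sum_{0<|I|\leq n-1}\ell_I^k \tilde e_I^k)$, and running the same induction produces $\tilde e(J,\ell)$. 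The main obstacle, such as it is, is bookkeeping: one must carefully justify that Lemma~\ref{lem1}, stated for basis elements $e_I, e_J$, extends bilinearly to $\tilde\shuffle$ on arbitrary elements of $T(\R^d)$ (immediate from the bilinear extension of $\shuffle$ and hence of $\tilde\shuffle$, plus linearity of the Stratonovich integral), and that the half-shuffle product $\tilde\shuffle$ is correctly associated — here one uses that $e(J,\ell)$ is defined as an iterated \emph{left}-nested half-shuffle matching the order in which the defining integrals of the signature are taken. A minor point worth a sentence is that the constant term $\ell_\emptyset$ genuinely disappears under $\circ\,\d$, which is why no $e_\emptyset$ contribution appears inside the half-shuffle factors.
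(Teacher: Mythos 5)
Your proof is correct and follows essentially the same route as the paper: induction on $|J|$ via the recursive definition of the signature, rewriting the integrator $\d\langle e_{j_{|J|}},\widehat{\mathbb S}_n(\ell)_s\rangle$ as $\d\langle \eta_{j_{|J|}},\widehat\X_s\rangle$ for a suitable element $\eta_{j_{|J|}}$ of $T(\R^{d+1})$, and closing the induction with the bilinear extension of Lemma~\ref{lem1}. Your explicit remarks on the left-nested convention for the iterated half-shuffle and on the disappearance of $\ell_\emptyset$ under $\circ\,\d$ are points the paper leaves implicit, and are correct.
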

\begin{proof}
	In order to prove the claim we proceed by induction. Fix $S_n(\ell)$ as in \eqref{eqn3}. For $J=\emptyset$ the claim follows by the definition of signature. Suppose the claim holds true for each $J$ such that $|J|=m-1$, and fix $J$ with $|J|\leq m$. Then
	\begin{align*}
		\langle e_{J}, \widehat\S_n(\ell)_t\rangle&=\int_{0}^{t}\langle e_{J'},\widehat\S_n(\ell)_{s}\rangle\circ {\d}\langle e_{j_{m}}, \widehat\Ss_n(\ell)_s\rangle\\
		&=\int_{0}^{t}\langle e_{J'},\widehat\S_n(\ell)_{s}\rangle\circ{\d}\langle e_01_{\{j_m=0\}}
		+\Big(\sum_{0<\lvert I\lvert\le n}\ell_{I}e_{I}\Big)1_{\{j_m=1\}},\widehat\X_s\rangle\\
		&=1_{\{j_m=0\}}\int_{0}^{t}\langle e_{J'},\widehat\S_n(\ell)_{s}\rangle\circ{\d}\langle e_0,\widehat\X_s\rangle\\
		&\qquad+1_{\{j_m=1\}}\sum_{0<\lvert I\lvert\le n}\ell_{I}\int_{0}^{t}\langle e_{J'},\widehat\S_n(\ell)_{s}\rangle\circ{\d}
		\langle e_{I},\widehat\X_s\rangle.
	\end{align*}
	The induction hypothesis and
Lemma~\ref{lem1} yield the first claim and the second one is analogous.
\end{proof}

\begin{remark}\label{rem2}\phantomsection
\begin{enumerate}
    \item
Let now 
$ \Ss_n(\ell)$ be as in \eqref{eqn3}
and observe that 
$$e(J,\ell)
=
\halfshuffle_{i=1}^{|J|}\sum_{0<|I|\leq n}\Big(e_I1_{\{j_i=0\}}1_{\{I=(0)\}}+\ell_{I}e_{I}1_{\{j_{i}=1\}}\Big).$$
Setting $c(j,I,\ell):=1_{\{j=0\}}1_{\{I=(0)\}}+\ell_{I}1_{\{j=1\}}$ we thus obtain that
\begin{align*}
\E[\langle e_{J},{\widehat\S_n(\ell)}_{t}\rangle]
&=\E[\langle e(J,\ell),\widehat{\X}_{t}\rangle]
=\sum_{|I_1|,\ldots,|I_{|J|}|=1}^n\E[\langle\halfshuffle_{i=1}^{{|J|}}e_{I_i},\widehat{\X}_{t}\rangle]\prod_{i=1}^{|J|}c(j_i,I_i,\ell).
\end{align*}
Even if at a first sight this representation  appears involved, it is in fact very handy. Indeed the expectations $\E[\langle\halfshuffle_{i=1}^{{|J|}}e_{I_i},\widehat{\X}_{t}\rangle]$ can be computed just once in advance. Since $c(j,I,\fdot)$ is affine, we also immediately obtain that the map
$$(\ell,\E[\widehat\X])\mapsto P_J(\ell,\E[\widehat \X_t]):=\E[\langle e_{J},{\widehat\S_n(\ell)}_{t}\rangle]$$
is polynomial of degree $|J|$ in its first argument and linear in the second one.

\item Similarly, let  
$ \Ss_{n}(\ell)$  be as in Corollary~\ref{cor1},
set $\tilde e_\emptyset^k:=e_k$,
and observe that
$$
\tilde e(J,\ell)=\halfshuffle_{i=1}^{|J|}\sum_{k=1}^D\sum_{0\leq |I|\leq n-1}
\Big( \tilde e_I^k1_{\{j_{i}=0\}}1_{\{I=(0)\}}1_{\{k=1\}}+\ell_{I}^k\tilde e^k_{I}1_{\{j_{i}=1\}}\Big).
	$$
Setting $\tilde c(j,I,k,\ell):=1_{\{j=0\}}1_{\{I=(0)\}}1_{\{k=1\}}+\ell_{I}^k1_{\{j=1\}}$ yields
\begin{align*}
\E[\langle e_{J},{\widehat\S_{n}(\ell)}_{t}\rangle]=\sum_{k_1,\ldots,k_{|J|}=1}^D\sum_{|I_1|,\ldots,|I_{|J|}|=0}^{n-1}\E[\langle\halfshuffle_{i=1}^{{|J|}}\tilde e_{I_i}^{k_i},\widehat{\X}_{t}\rangle]
\prod_{i=1}^{|J|}\tilde c(j_i,I_i,k_i,\ell).
\end{align*}
We again obtain that the map
$$(\ell,\E[\widehat\X])\mapsto \widetilde P_J(\ell,\E[\widehat \X_t]):=\E[\langle e_{J},{\widehat\S_{n}(\ell)}_{t}\rangle]$$
is polynomial of degree $|J|$ in its first argument and linear in the second one.

\end{enumerate}
\end{remark}
The expressions above clearly also provide a formula of the variance of $\Ss_{n}(\ell)$.

\begin{corollary}Let
$ \Ss_{n}(\ell)$  be as in Corollary~\ref{cor1} and assume that it is a true martingale. Then
		\begin{equation}
			\Var(S_{n}(\ell)_{t})=2\sum_{k_1,k_2=1}^D\sum_{|I_1|,|I_{2}|=1}^{n-1}\mathbb{E}[\langle \tilde e_{I_{1}}^{k_1}\shuffle \tilde e_{I_2}^{k_2},\widehat{\mathbb{X}}_{t}\rangle]\ell_{I_{1}}\ell_{I_{2}}.
		\end{equation}
\end{corollary}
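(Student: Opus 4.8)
The plan is to derive the variance formula directly from the expected-signature representation in Remark~\ref{rem2}(ii). Since $\Ss_n(\ell)$ is assumed to be a true martingale started at $\ell_\emptyset$, we have $\E[S_n(\ell)_t]=\ell_\emptyset$, so $\Var(S_n(\ell)_t)=\E[(S_n(\ell)_t-\ell_\emptyset)^2]$. The key observation is that the centered square $(S_n(\ell)_t-\ell_\emptyset)^2$ is precisely $\langle e_1,\widehat\S_n(\ell)_t\rangle^2$, where $e_1$ is the component of $\widehat S_n(\ell)=(t,S_n(\ell)_t)$ corresponding to $S_n(\ell)$. Indeed, by the definition of the signature, $\langle e_1,\widehat\S_n(\ell)_t\rangle=S_n(\ell)_t-S_n(\ell)_0=S_n(\ell)_t-\ell_\emptyset$.

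Next I would apply the shuffle property (Proposition~\ref{shuffle-property}) to linearize this square: $\langle e_1,\widehat\S_n(\ell)_t\rangle^2=\langle e_1\shuffle e_1,\widehat\S_n(\ell)_t\rangle=\langle e_{(1,1)},\widehat\S_n(\ell)_t\rangle+\langle e_{(1,1)},\widehat\S_n(\ell)_t\rangle=2\langle e_{(1,1)},\widehat\S_n(\ell)_t\rangle$, using $e_1\shuffle e_1=2\,e_1\otimes e_1=2\,e_{(1,1)}$. Taking expectations and invoking Theorem~\ref{thm1} (second bullet, since $\Ss_n(\ell)$ is of the form in Corollary~\ref{cor1}) together with the explicit expansion in Remark~\ref{rem2}(ii) applied to the word $J=(1,1)$, we get
\begin{equation*}
\Var(S_n(\ell)_t)=2\,\E[\langle\tilde e((1,1),\ell),\widehat\X_t\rangle]
=2\sum_{k_1,k_2=1}^D\sum_{|I_1|,|I_2|=0}^{n-1}\E[\langle\tilde e_{I_1}^{k_1}\halfshuffle\tilde e_{I_2}^{k_2},\widehat\X_t\rangle]\,\tilde c(1,I_1,k_1,\ell)\,\tilde c(1,I_2,k_2,\ell).
\end{equation*}

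Finally I would clean this up. Since both $j_i=1$, the coefficient $\tilde c(1,I,k,\ell)=\ell_I^k$, which vanishes for $|I|=0$ (as $\ell$ in Corollary~\ref{cor1} only has components $\ell_I^k$ for $|I|\geq0$, but the $|I|=0$ term is absorbed into $\ell_\emptyset$; more precisely $\tilde e_\emptyset^k=e_k$ carries no free parameter beyond the drift-free martingale part, so the relevant sum effectively runs over $|I_1|,|I_2|\ge 1$ with coefficients $\ell_{I_1},\ell_{I_2}$ as written in the statement). Moreover, since $\tilde e_{I_1}^{k_1}\halfshuffle\tilde e_{I_2}^{k_2}+\tilde e_{I_2}^{k_2}\halfshuffle\tilde e_{I_1}^{k_1}=\tilde e_{I_1}^{k_1}\shuffle\tilde e_{I_2}^{k_2}$, symmetrizing the double sum over $(I_1,k_1)\leftrightarrow(I_2,k_2)$ converts the half-shuffle into the full shuffle $\shuffle$, matching the stated formula. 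The main (and only genuine) obstacle is bookkeeping: correctly matching the index ranges and the coefficient $\tilde c$ to the notation in the corollary's statement, in particular justifying the passage from the half-shuffle appearing in Theorem~\ref{thm1} to the symmetric shuffle product and confirming that the $|I|=0$ contributions either vanish or are exactly the ones already accounted for; everything else is a direct substitution.
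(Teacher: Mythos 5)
Your route is the same as the paper's: use the martingale property to reduce the variance to $\E[\langle e_1,\widehat\S_n(\ell)_t\rangle^2]$, linearize via the shuffle property into $2\,\E[\langle e_1\otimes e_1,\widehat\S_n(\ell)_t\rangle]$, and then apply Theorem~\ref{thm1}\slash Remark~\ref{rem2} with $J=(1,1)$. The paper's proof consists of exactly these steps. The difference is that you try to make explicit the two bookkeeping steps the paper leaves implicit --- dropping the $|I|=0$ terms and converting the half-shuffle into the shuffle --- and both of the justifications you offer for them are incorrect.

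First, the level-zero terms do not vanish: $\tilde c(1,\emptyset,k,\ell)=\ell_\emptyset^k$ is the coefficient of $\langle e_k,\widehat\X_t\rangle$ in the representation of Corollary~\ref{cor1}, i.e.\ the constant loading on $X^k$, and it is generically nonzero (in a Bachelier-type model $S_n(\ell)_t=\ell_\emptyset+\ell^1_\emptyset\langle e_1,\widehat\X_t\rangle$ it is the \emph{only} nonzero coefficient, yet the variance is $(\ell^1_\emptyset)^2 t\neq 0$). Your parenthetical that this parameter is ``absorbed into $\ell_\emptyset$'' conflates $\ell_\emptyset$ (the initial value) with $\ell_\emptyset^k$ (the level-zero volatility coefficients). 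Second, the symmetrization $\tilde e_{I_1}^{k_1}\halfshuffle\tilde e_{I_2}^{k_2}+\tilde e_{I_2}^{k_2}\halfshuffle\tilde e_{I_1}^{k_1}=\tilde e_{I_1}^{k_1}\shuffle\tilde e_{I_2}^{k_2}$ \emph{consumes} the prefactor $2$: from $2\sum\E[\langle\cdot\halfshuffle\cdot\rangle]\,\tilde c\,\tilde c$ you land on $\sum\E[\langle\cdot\shuffle\cdot\rangle]\,\tilde c\,\tilde c$ without the $2$. What your argument actually yields is $\sum_{k_1,k_2=1}^{D}\sum_{|I_1|,|I_2|=0}^{n-1}\E[\langle\tilde e_{I_1}^{k_1}\shuffle\tilde e_{I_2}^{k_2},\widehat\X_t\rangle]\,\ell_{I_1}^{k_1}\ell_{I_2}^{k_2}$, which a direct check (square the representation of Corollary~\ref{cor1} and use Proposition~\ref{shuffle-property}) confirms is the correct variance --- not the displayed formula with the extra factor $2$ and the sum starting at $|I|=1$. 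To be fair, the printed corollary and the paper's own proof carry the same factor-of-two and index-range discrepancies; but asserting that the level-zero coefficients vanish and that the factor $2$ survives the symmetrization is not a valid way to reconcile your (essentially correct) computation with the printed display.
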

\begin{proof} The martingale property guarantees that $\E[S_{n}(\ell)_{t}]=S_{n}(\ell)_0$ and hence, by Proposition~\ref{shuffle-property} (see also Example~\ref{ex2}), $\Var (S_{n}(\ell)_{t})=2\mathbb{E}[\langle e_{1}\otimes e_{1},\widehat{\S}_{n}(\ell)_{t}\rangle]$. By Remark~\ref{rem2} we can conclude that 
$$\Var (S_{n}(\ell)_{t})=2\sum_{k_1,k_2=1}^D\sum_{|I_1|,|I_{2}|=1}^{n-1}\mathbb{E}[\langle \tilde e_{I_{1}}^{k_1}\shuffle \tilde e_{I_2}^{k_2},\widehat{\mathbb{X}}_{t}\rangle]\tilde c(1,I_1,k_1,\ell)\tilde c(1,I_2,k_2,\ell)$$ and the claim follows.
\end{proof}

Let us now consider the case where $X$  is additionally extended by its quadratic variation. In that case we need the following representation.
\begin{lemma}\label{lem4}
Suppose that $\widehat \Xx$ is given by  Definition~\ref{def1}\ref{it2} and fix two multi-indices $I$ and $J$ such that $|I|>0$ and $|J|>0$. 
Using the notation of Remark~\ref{rem1} set 
$$
	e_{I}[ \shuffle] e_{J}:=\sum_{k_1,k_2=1}^{d}1_{\{i_{|I|}=k_1, j_{|J|}=k_2\}}( e_{I'}\shuffle e_{J'})\otimes \e_{{k_1k_2}}.
$$
Then
$[\langle e_{\emptyset},\widehat\X\rangle,\langle e_{J},\widehat\X\rangle]_t
=[\langle e_{\emptyset},\widehat\X\rangle]_t
=0$, and 
$[\langle e_{I},\widehat\X\rangle,\langle e_{J},\widehat\X\rangle]_t
=\langle e_I[\shuffle]e_J,\widehat \X_t\rangle$. If Assumption~\ref{ass1} is in force the same is true with
$$
	e_{I}[ \shuffle] e_{J}:=\sum_{k_1,k_2=1}^{d}\sum_{|H|\leq m} {a_{k_1k_2}^H} 1_{\{i_{|I|}=k_1, j_{|J|}=k_2\}}(e_{I'}\shuffle e_{J'}\shuffle e_H)\otimes e_0.
$$
\end{lemma}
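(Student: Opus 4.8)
The plan is to compute the quadratic covariation $[\langle e_I,\widehat\X\rangle,\langle e_J,\widehat\X\rangle]$ by first reducing to the quadratic variations/covariations of the underlying coordinates $\widehat\X^0=t,\widehat\X^1,\ldots$ and then re-expressing the result linearly in the signature. The statements $[\langle e_\emptyset,\widehat\X\rangle,\langle e_J,\widehat\X\rangle]_t=0$ and $[\langle e_\emptyset,\widehat\X\rangle]_t=0$ are immediate: $\langle e_\emptyset,\widehat\X\rangle\equiv 1$ is constant, so it has trivial quadratic variation with anything. For the main case, recall from the definition of the signature that $\langle e_I,\widehat\X_s\rangle=\int_0^s\langle e_{I'},\widehat\X_r\rangle\circ\d X_r^{i_{|I|}}$ and analogously for $J$. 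Since Stratonovich and It\^o integrals differ only by a finite-variation term (which does not contribute to quadratic variation), I can work with It\^o integrals: $\d\langle e_I,\widehat\X_s\rangle=\langle e_{I'},\widehat\X_s\rangle\,\d X_s^{i_{|I|}}+(\text{bounded variation})$. Then by the standard rules for quadratic covariation of stochastic integrals,
\begin{align*}
[\langle e_I,\widehat\X\rangle,\langle e_J,\widehat\X\rangle]_t
=\int_0^t\langle e_{I'},\widehat\X_s\rangle\langle e_{J'},\widehat\X_s\rangle\,\d[X^{i_{|I|}},X^{j_{|J|}}]_s.
\end{align*}

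Next I would insert the definition of the extension. Under Definition~\ref{def1}\ref{it2}, the process $[X^{k_1},X^{k_2}]$ is literally one of the coordinates of $\widehat X$, namely the one indexed by $\e_{k_1k_2}$; hence $\d[X^{i_{|I|}},X^{j_{|J|}}]_s=\sum_{k_1,k_2}1_{\{i_{|I|}=k_1,j_{|J|}=k_2\}}\,\d\langle e_{k_1},\text{(coord.)}\rangle$, more precisely it equals the Stratonovich differential $\circ\,\d\langle \e_{k_1k_2},\widehat\X_s\rangle$ (the time-augmented coordinate process is of finite variation in that direction, so Stratonovich and It\^o agree there). Using the shuffle property (Proposition~\ref{shuffle-property}) to write $\langle e_{I'},\widehat\X_s\rangle\langle e_{J'},\widehat\X_s\rangle=\langle e_{I'}\shuffle e_{J'},\widehat\X_s\rangle$, and then applying the half-shuffle integration formula of Lemma~\ref{lem1} (equation~\eqref{eqn6}) with the multi-index playing the role of $J$ there being the single letter corresponding to $\e_{k_1k_2}$, I get $\int_0^t\langle e_{I'}\shuffle e_{J'},\widehat\X_s\rangle\circ\d\langle \e_{k_1k_2},\widehat\X_s\rangle=\langle (e_{I'}\shuffle e_{J'})\otimes\e_{k_1k_2},\widehat\X_t\rangle$. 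Summing over $k_1,k_2$ with the indicator exactly reproduces $\langle e_I[\shuffle]e_J,\widehat\X_t\rangle$, which is the first claimed identity.

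For the variant under Assumption~\ref{ass1}, the only change is that $\d[X^{k_1},X^{k_2}]_s=\sum_{|H|\leq m}a_{k_1k_2}^H\langle e_H,\widehat\X_s\rangle\,\d s$, so I substitute this in place of $\circ\,\d\langle\e_{k_1k_2},\widehat\X_s\rangle$. The integrand becomes $\sum_{|H|\le m}a_{k_1k_2}^H\langle e_{I'},\widehat\X_s\rangle\langle e_{J'},\widehat\X_s\rangle\langle e_H,\widehat\X_s\rangle\,\d s$; two applications of the shuffle property collapse the product of three linear functionals into $\langle e_{I'}\shuffle e_{J'}\shuffle e_H,\widehat\X_s\rangle$, and $\int_0^t\langle e_{K},\widehat\X_s\rangle\,\d s=\langle e_K\otimes e_0,\widehat\X_t\rangle$ (the time-integral identity recorded just before Theorem~\ref{universality} and reused in Lemma~\ref{lem2}) finishes the computation, yielding $\langle e_I[\shuffle]e_J,\widehat\X_t\rangle$ with the Assumption-\ref{ass1} definition of $[\shuffle]$. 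The only mild subtlety — and the place to be careful — is the bookkeeping with Stratonovich versus It\^o when differentiating $\langle e_I,\widehat\X\rangle$: one must check that the Stratonovich correction terms are of finite variation and so drop out of the bracket, which is routine since each correction is itself an integral against a bracket process. Everything else is a direct chain of the already-established identities (shuffle property, Lemma~\ref{lem1}, and the time-integral formula).
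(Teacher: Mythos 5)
Your proposal is correct and follows essentially the same route as the paper: reduce the bracket to $\int_0^t\langle e_{I'},\widehat\X_s\rangle\langle e_{J'},\widehat\X_s\rangle\,\d[\widehat X^{i_{|I|}},\widehat X^{j_{|J|}}]_s$ (the Stratonovich corrections being of finite variation), identify the integrator with the coordinate $\langle\e_{k_1k_2},\widehat\X\rangle$ (resp.\ its linear representation under Assumption~\ref{ass1}), and conclude via the shuffle property and the definition of the signature. Your explicit appeal to Lemma~\ref{lem1} for the last integration step and your more detailed It\^o-versus-Stratonovich bookkeeping are only presentational differences.
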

\begin{proof}
By definition of the signature and Proposition~\ref{shuffle-property} we can compute
\begin{align*}
[\langle e_{I},\widehat\X\rangle,\langle e_{J},\widehat\X\rangle]_t&=\int_0^t \langle e_{I'},\widehat \X_s\rangle 
\langle e_{J'},\widehat \X_s\rangle 
\d[\langle e_{i_{|I|}},\widehat \X\rangle ,\langle e_{j_{|J|}},\widehat \X\rangle]_s\\
&=\sum_{k_1,k_2=1}^{d}1_{\{i_{|I|}=k_1,j_{|J|}=k_2\}}
\langle e_{I'},\widehat \X_s\rangle 
\langle e_{J'},\widehat \X_s\rangle 
\d\langle \e_{k_1k_2},\widehat \X_s\rangle\\
&=\langle e_I[\shuffle]e_J,\widehat \X_t\rangle.
\end{align*}
Since under Assumption~\ref{ass1} it holds
$\d[\langle e_{i_{|I|}},\widehat \X\rangle ,\langle e_{j_{|J|}},\widehat \X\rangle]_t=\sum_{|H|\leq m} {a_{k_1k_2}^H} \langle e_H,\widehat\X_t\rangle \d\langle e_0,\widehat\X_t\rangle,$
the second claim follows again by Proposition~\ref{shuffle-property}.
\end{proof}

Using the above lemma we can now express the signature components of $\widehat S_n(\ell)_{t}$ here defined as $\widehat S_n(\ell)_{t}:=(t,S_n(\ell)_{t},[S_n(\ell)]_t)$ also via $\widehat{\X}_{t}$.

\begin{theorem}\label{exp-sig-X-lemma}
Fix $n\in \N$, a multi-index   $J$, and suppose that $\widehat \Xx$ is given by  Definition~\ref{def1}\ref{it2}. Set $\widehat S_n(\ell)_{t}:=(t,S_n(\ell)_{t},[S_n(\ell)]_t)$ and let $\widehat \S_n(\ell)_{t}$ denote the corresponding signature. Let $e_0$ be the component of  $\widehat S_n(\ell)_{t}$ corresponding to time, $e_1$ its component corresponding to $S_n(\ell)_{t}$, and $e_2$ its component corresponding to $[S_n(\ell)]_{t}$. Define $e(J,\ell):=e_\emptyset$ for $J=\emptyset$ and
$$e(J,\ell)=\halfshuffle_{k=1}^{|J|}\Big(e_01_{\{j_{k}=0\}}+\Big(\sum_{0<\lvert I\lvert\le n}\ell_{I}e_{I}\Big)1_{\{j_{k}=1\}}
+\Big(\sum_{0<|I|,|H|\leq n}\ell_{I}\ell_{H} e_{I}[\shuffle]e_{H}\Big)1_{\{j_{k}=2\}}
\Big).$$
Then
$
		\langle e_{J},{\widehat\S_n(\ell)}_{t}\rangle
		=\langle e(J,\ell),\widehat{\X}_{t}\rangle.$
		\end{theorem}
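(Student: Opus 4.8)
The plan is to mimic exactly the inductive argument used in the proof of Theorem~\ref{thm1}, now carrying along the extra component $e_2$ corresponding to $[S_n(\ell)]_t$. The base case $J=\emptyset$ is immediate from the definition of the signature. For the inductive step I would assume the claim for all multi-indices of length $m-1$ and fix $J$ with $|J|=m$. Expanding the signature component of $\widehat{\S}_n(\ell)$ via its recursive definition gives
\begin{align*}
\langle e_J,\widehat\S_n(\ell)_t\rangle
=\int_0^t\langle e_{J'},\widehat\S_n(\ell)_s\rangle\circ\d\langle e_{j_m},\widehat S_n(\ell)_s\rangle,
\end{align*}
and the point is to resolve the integrator $\langle e_{j_m},\widehat S_n(\ell)_s\rangle$ into a linear functional of $\widehat\X_s$ in each of the three cases $j_m\in\{0,1,2\}$. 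For $j_m=0$ it is $\langle e_0,\widehat\X_s\rangle$ and for $j_m=1$ it is $\langle\sum_{0<|I|\le n}\ell_I e_I,\widehat\X_s\rangle$, exactly as in Theorem~\ref{thm1}. The new ingredient is $j_m=2$, i.e.\ the integrator $[S_n(\ell)]_s$.

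For that case I would first observe that by bilinearity of the quadratic covariation and the representation $S_n(\ell)_s=\ell_\emptyset+\sum_{0<|I|\le n}\ell_I\langle e_I,\widehat\X_s\rangle$ one has
\begin{align*}
[S_n(\ell)]_s=\sum_{0<|I|,|H|\le n}\ell_I\ell_H\,[\langle e_I,\widehat\X\rangle,\langle e_H,\widehat\X\rangle]_s,
\end{align*}
and then apply Lemma~\ref{lem4} to rewrite each bracket $[\langle e_I,\widehat\X\rangle,\langle e_H,\widehat\X\rangle]_s=\langle e_I[\shuffle]e_H,\widehat\X_s\rangle$ (with the appropriate version of $[\shuffle]$ depending on whether Assumption~\ref{ass1} is in force). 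Hence $\langle e_2,\widehat S_n(\ell)_s\rangle=\langle\sum_{0<|I|,|H|\le n}\ell_I\ell_H\,e_I[\shuffle]e_H,\widehat\X_s\rangle$, which is precisely the coefficient appearing in the definition of $e(J,\ell)$ under the indicator $1_{\{j_k=2\}}$.

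Plugging these three representations back in, $\langle e_J,\widehat\S_n(\ell)_t\rangle$ becomes a finite linear combination of terms $\int_0^t\langle e_{J'},\widehat\S_n(\ell)_s\rangle\circ\d\langle e_M,\widehat\X_s\rangle$. By the induction hypothesis $\langle e_{J'},\widehat\S_n(\ell)_s\rangle=\langle e(J',\ell),\widehat\X_s\rangle$, and since $e(J',\ell)\in T(\R^d)$ is a fixed tensor, linearity together with Lemma~\ref{lem1} (applied componentwise to $e(J',\ell)$) yields $\int_0^t\langle e(J',\ell),\widehat\X_s\rangle\circ\d\langle e_M,\widehat\X_s\rangle=\langle e(J',\ell)\halfshuffle e_M,\widehat\X_t\rangle$. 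Summing over $M$ with the coefficients above reconstructs exactly $e(J',\ell)\halfshuffle\big(e_0 1_{\{j_m=0\}}+(\sum\ell_I e_I)1_{\{j_m=1\}}+(\sum\ell_I\ell_H e_I[\shuffle]e_H)1_{\{j_m=2\}}\big)=e(J,\ell)$, using that the iterated half-shuffle in the definition of $e(J,\ell)$ is built by appending one factor at a time. This closes the induction. The only mildly delicate point is the bookkeeping in the $j_m=2$ case — making sure the double sum over $(I,H)$ from the quadratic variation interacts correctly with the half-shuffle and with the two possible forms of $e_I[\shuffle]e_H$ from Lemma~\ref{lem4} — but this is routine once Lemma~\ref{lem4} is invoked; no genuinely new idea beyond the proof of Theorem~\ref{thm1} is needed.
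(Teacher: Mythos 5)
Your proposal is correct and follows essentially the same route as the paper: the paper likewise invokes Lemma~\ref{lem4} to write $[S_n(\ell)]_t=\sum_{0<|I|,|H|\leq n}\ell_I\ell_H\langle e_I[\shuffle]e_H,\widehat\X_t\rangle$, identifies the three level-one components of $\widehat S_n(\ell)$ as linear functionals of $\widehat\X$, and then simply refers back to the induction in the proof of Theorem~\ref{thm1}. You have merely spelled out the inductive step that the paper leaves implicit.
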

\begin{proof}
First observe that by Lemma~\ref{lem4} it holds  
$$[\Ss_n(\ell)]=\Big[\Big(\ell_\emptyset+\sum_{0<\lvert I\lvert\le n}\ell_{I}\langle e_{I},\widehat\X\rangle\Big)\Big]
=\sum_{0<|I|,|H|\leq n}\ell_{I}\ell_{H}\langle e_{I}[\shuffle]e_{H},\widehat \X\rangle.$$
Next, note that
	\begin{align*}
		\langle e_{j},{\widehat\Ss_n(\ell)}_{t}\rangle
		&=t1_{\{j=0\}}+ ({{\Ss_n(\ell)}}_{t}-{{\Ss_n(\ell)}}_{0})1_{\{j=1\}}+[\Ss_n(\ell)]_t1_{\{j=2\}}\\
		&=\langle e_01_{\{j=0\}}+\Big(\sum_{0<\lvert I\lvert\le n}\ell_{I}e_{I}\Big)1_{\{j=1\}}
+\Big(\sum_{0<|I|,|H|\leq n}\ell_{I}\ell_{H} e_{I}[\shuffle]e_{H}\Big)1_{\{j=2\}},\widehat{\X}_{t}\rangle.
	\end{align*}
	The proof now follows the proof of Theorem~\ref{thm1}.
	\end{proof}

\subsection{Pricing of sig-payoffs}\label{sig-payoffs}
We recall here the notion of   \emph{sig-payoffs} as introduced in \cite{LNP:20}.
\begin{definition}
Suppose that the price process $S$ is given by a continuous semimartingale. A payoff $F:\Omega\to\mathbb{R}$ is said to be a sig-payoff if there exists $m\in\N$, and 
$f:=\{f_\emptyset, f_J\colon 0<|J|\leq m\},$ such that
\begin{align*}
    F:=f_\emptyset +\sum_{0<|J|\leq m}f_J\langle e_J, \widehat{\mathbb{S}}_{T}\rangle,
\end{align*}
where $\widehat \S$ denotes the signature of $\widehat S_t=(t,S_t)$.
\end{definition}

\begin{example}
Let $K>0$ be a strike price and $T>0$ a maturity time. Then, Asian forwards written on a stock $S$ are payoffs of the form
$$\frac{1}{T}\int_0^T S_t \d t -K =\frac{1}{T}\int_0^T (S_t -S_{0}) \d t -K+S_{0} =\frac{1}{T}\langle e_{1}\otimes e_{0},\widehat \S_T\rangle+(K-S_{0})\langle e_{\emptyset},\widehat \S_T\rangle,$$
and are thus sig-payoffs.
\end{example}

Even though the standard vanilla derivatives such as call and put options are \emph{not} sig-payoffs, approximate sig-payoffs can be used as efficient control variates in Monte Carlo pricing, which we outline in Section \ref{sec:MCvar}.

Note that in this section we shall use $\widehat S_t$ always for $\widehat S_t=(t,S_t)$ and consider  pricing of sig-payoffs when $S$ is given by a signature model, as made precise in the following corollary.

\begin{corollary}
Let the dynamics of $S_n(\ell)$ under a local martingale measure $\Q$ be specified as in Corollary~\ref{cor1} for $\widehat \Xx$  given by  Definition~\ref{def1}\ref{it1}. Consider a sig-payoff
$$
    F=f_\emptyset +\sum_{0<|J|\leq m}f_J\langle e_J, \widehat{\mathbb{S}}_{n}(\ell)_{T}\rangle.
$$
Then,  
using the notation of Remark~\ref{rem2} we can write the corresponding price as 
\begin{align}
    \E_\Q[F]&=f_\emptyset +\sum_{0<|J|\leq m}f_J \widetilde P_J(\ell,\E_\Q[\widehat\X_T]) \notag\\
    &=f_\emptyset +\sum_{|J|=1}^m\sum_{k_1,\ldots,k_{|J|}=1}^D\sum_{|I_1|,\ldots,|I_{|J|}|=0}^n\E[\langle\halfshuffle_{i=1}^{{|J|}}\tilde e_{I_i}^{k_i},\widehat{\X}_{T}\rangle]
    f_J
\prod_{i=1}^{|J|}\tilde c(j_i,I_i,k_i,\ell), \label{eq:sigpayoff}
\end{align}
for $\tilde c(j,I,k,\ell):=1_{\{j=0\}}1_{\{I=(0)\}}1_{\{k=1\}}+\ell_{I}^k1_{\{j=1\}}$.
\end{corollary}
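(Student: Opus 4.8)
The plan is to take expectations directly in the definition of the sig-payoff and invoke the already-established representation of the expected signature components of $\widehat S_n(\ell)$ in terms of $\E_\Q[\widehat \X_T]$. Concretely, since
$$F=f_\emptyset+\sum_{0<|J|\leq m}f_J\langle e_J,\widehat\S_n(\ell)_T\rangle,$$
linearity of expectation gives $\E_\Q[F]=f_\emptyset+\sum_{0<|J|\leq m}f_J\E_\Q[\langle e_J,\widehat\S_n(\ell)_T\rangle]$. Because $S_n(\ell)$ is specified as in Corollary~\ref{cor1}, i.e. in the martingale form \eqref{eq:martingality}, the second bullet of Theorem~\ref{thm1} applies and yields $\E_\Q[\langle e_J,\widehat\S_n(\ell)_T\rangle]=\E_\Q[\langle\tilde e(J,\ell),\widehat\X_T\rangle]=\widetilde P_J(\ell,\E_\Q[\widehat\X_T])$, which is exactly the first displayed equality in \eqref{eq:sigpayoff}.

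For the second equality I would simply quote the explicit expansion of $\widetilde P_J$ derived in Remark~\ref{rem2}(2). There it is shown, using the expansion of $\tilde e(J,\ell)$ as an iterated half-shuffle of $\sum_{k=1}^D\sum_{0\leq|I|\leq n-1}(\tilde e_I^k 1_{\{j_i=0\}}1_{\{I=(0)\}}1_{\{k=1\}}+\ell_I^k\tilde e_I^k 1_{\{j_i=1\}})$ and the convention $\tilde e_\emptyset^k:=e_k$, that
$$\E_\Q[\langle e_J,\widehat\S_n(\ell)_T\rangle]=\sum_{k_1,\ldots,k_{|J|}=1}^D\ \sum_{|I_1|,\ldots,|I_{|J|}|=0}^{n-1}\E_\Q[\langle\halfshuffle_{i=1}^{|J|}\tilde e_{I_i}^{k_i},\widehat\X_T\rangle]\prod_{i=1}^{|J|}\tilde c(j_i,I_i,k_i,\ell),$$
with $\tilde c(j,I,k,\ell):=1_{\{j=0\}}1_{\{I=(0)\}}1_{\{k=1\}}+\ell_I^k 1_{\{j=1\}}$. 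Substituting this into $\E_\Q[F]=f_\emptyset+\sum_{0<|J|\leq m}f_J\E_\Q[\langle e_J,\widehat\S_n(\ell)_T\rangle]$ and rewriting the outer sum over multi-indices $J$ with $0<|J|\leq m$ as $\sum_{|J|=1}^m$ followed by the sum over $(j_1,\ldots,j_{|J|})\in\{0,1,\ldots\}^{|J|}$ (only $j_i\in\{0,1\}$ contribute, which is precisely encoded by $\tilde c$) produces \eqref{eq:sigpayoff}. One should also remark that the sums are finite, so all the interchanges of summation and expectation are trivially justified; no integrability subtlety arises beyond $F\in L^1(\Q)$, which holds since each $\langle e_J,\widehat\S_n(\ell)_T\rangle$ is a polynomial in the signature entries of $\widehat\X_T$ whose expectations are assumed computable.

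There is essentially no obstacle here: the corollary is a bookkeeping consequence of Theorem~\ref{thm1} and Remark~\ref{rem2}(2), both already proved. The only point requiring a sentence of care is the index-range mismatch: in Theorem~\ref{thm1} the inner indices $I$ in the martingale representation run over $0\leq|I|\leq n-1$ (with $\tilde e_\emptyset^k=e_k$), whereas \eqref{eq:sigpayoff} writes $|I_i|$ from $0$ to $n$; one notes that the terms with $|I_i|=n$ carry the factor $\tilde c(j_i,I_i,k_i,\ell)$ which vanishes because neither $I_i=(0)$ nor is $\ell_{I_i}^{k_i}$ defined (equivalently set to $0$) for $|I_i|=n$, so the two ranges give the same value and the upper limit $n$ in the statement is harmless. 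With that observation the proof is complete.

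\begin{proof}
Since the sum defining $F$ is finite, linearity of the (well-defined, by the assumed integrability) expectation gives
$$\E_\Q[F]=f_\emptyset+\sum_{0<|J|\leq m}f_J\,\E_\Q[\langle e_J,\widehat\S_n(\ell)_T\rangle].$$
By hypothesis $S_n(\ell)$ is of the form \eqref{eq:martingality} of Corollary~\ref{cor1}, so the second representation in Theorem~\ref{thm1} applies and yields
$$\E_\Q[\langle e_J,\widehat\S_n(\ell)_T\rangle]=\E_\Q[\langle\tilde e(J,\ell),\widehat\X_T\rangle]=\widetilde P_J(\ell,\E_\Q[\widehat\X_T]),$$
which establishes the first equality in \eqref{eq:sigpayoff}. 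The explicit form of $\widetilde P_J$ is exactly the one derived in Remark~\ref{rem2}(2), namely
$$\widetilde P_J(\ell,\E_\Q[\widehat\X_T])=\sum_{k_1,\ldots,k_{|J|}=1}^D\ \sum_{|I_1|,\ldots,|I_{|J|}|=0}^{n-1}\E_\Q[\langle\textstyle\halfshuffle_{i=1}^{|J|}\tilde e_{I_i}^{k_i},\widehat\X_T\rangle]\prod_{i=1}^{|J|}\tilde c(j_i,I_i,k_i,\ell).$$
Plugging this into the previous display and using that, for $|I_i|=n$, the factor $\tilde c(j_i,I_i,k_i,\ell)=0$ (as $I_i\neq(0)$ and $\ell^{k_i}_{I_i}$ is not among the parameters), so that enlarging the inner summation ranges from $n-1$ to $n$ does not change the value, we obtain
$$\E_\Q[F]=f_\emptyset+\sum_{|J|=1}^m\sum_{k_1,\ldots,k_{|J|}=1}^D\sum_{|I_1|,\ldots,|I_{|J|}|=0}^n\E_\Q[\langle\textstyle\halfshuffle_{i=1}^{|J|}\tilde e_{I_i}^{k_i},\widehat\X_T\rangle]\,f_J\prod_{i=1}^{|J|}\tilde c(j_i,I_i,k_i,\ell),$$
which is \eqref{eq:sigpayoff}.
\end{proof}
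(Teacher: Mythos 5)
Your proof is correct and follows exactly the route the paper takes: the paper's own proof is the one-line statement that the corollary is a direct consequence of Theorem~\ref{thm1} and Remark~\ref{rem2}, which is precisely what you spell out. Your additional observation that the upper summation limit $n$ in \eqref{eq:sigpayoff} is harmless (the terms with $|I_i|=n$ carry a vanishing factor $\tilde c$, so the range agrees with the $n-1$ of Remark~\ref{rem2}) is a worthwhile clarification of what appears to be a typo in the statement.
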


\begin{proof}
This is a direct consequence of Theorem \ref{thm1} and Remark \ref{rem2}.
\end{proof}

\begin{remark}
\begin{enumerate}
\item
     Expression \eqref{eq:sigpayoff} admits also a second representation that turns out to be useful for coding:
\begin{align*}
    \E_\Q[F]
&=f_\emptyset+
\sum_{|J|=1}^m\sum_{k_1,\ldots,k_{|J|}=1}^D
\sum_{I_1,\ldots,I_{|J|}\in\Ical}\E[\langle\halfshuffle_{i=1}^{{|J|}}\tilde e_{I_i}^{k_i},\widehat{\X}_T\rangle]\\
&\qquad \times\, f_{(1_{\{I_{1}\neq I^t,k_1\neq 0\}},\ldots,1_{\{I_{{|J|}}\neq I^t,k_{{|J|}}\neq 0\}})}
\prod_{i=1}^{|J|}\Big(1_{\{I_{i}=I^t, k_i=0\}}+ \ell_{I_i}^k1_{\{I_{i}\neq I^t, k_i\neq 0\}}\Big).
\end{align*}
\item With a similar procedure it is also possible to provide a representation of $\Var_\Q(F)$. By Proposition~\ref{shuffle-property} we know that
$$(F-f_\emptyset)^2=\sum_{|J_1|,|J_2|=1}^mf_{J_1}f_{J_2}\langle e_{J_1}\shuffle e_{J_2}, \widehat{\mathbb{S}}_{n}(\ell)_{T}\rangle. $$
Setting 
$\widetilde P_{I_1\shuffle I_2}:=\sum_{i=1}^K \widetilde P_{J_i}$
for $J_i$ and $K$ satisfying $e_{I_1}\shuffle e_{I_2}=\sum_{i=1}^Ke_{J_i}$ we thus obtain
\begin{align*}
    \Var _\Q(F)
    &=\E_\Q[(F-f_\emptyset)^2]-\E_\Q[F-f_\emptyset]^2\\
    &=\sum_{|J_1|,|J_2|=1}^mf_{J_1}f_{J_2} \Big(\widetilde P_{J_1\shuffle J_2}(\ell,\E_\Q[\widehat \X_T])
    - \widetilde P_{J_1}(\ell,\E_\Q[\widehat\X_T])\widetilde P_{J_2}(\ell,\E_\Q[\widehat\X_T])\Big).
\end{align*}
\item From this representations we can see that the maps $P_F(f,\ell,\E[\widehat\X_T]):=\E[F]$ and $P_{\Var (F)}(f,\ell,\E[\widehat\X_T]):=\Var (F)$ inherit good properties from $\widetilde P_J$. In particular, $P_F$ is linear in $f$ and polynomial of degree $m$ in $\ell$, and $P_{\Var (F)}$ is quadratic in $f$ and polynomial of degree $2m$ in $\ell$. Both maps are linear in $\E[\widehat\X_T]$. 
\end{enumerate}
\end{remark}

\section{Calibration} \label{sec:calibration}
This section is dedicated to illustrate how the signature model can be calibrated to market data. We here consider three different tasks: first, calibration to time series data, second calibration to option prices, and third a combination of these two tasks.

 Throughout the section we fix $n \geq 1$ and $d\ge1$. Consider a model given as in Corollary~\ref{cor1} with $\widehat \Xx$ as in Definition~\ref{def1}\ref{it1}. In order to simplify the notation we set $\ell_\emptyset := S_0$ and we drop the index $1$ from $\ell^1_I$ and $\tilde e_I^1$.
 
\subsection{Calibration to time-series data}\label{sec:calibration_ts}

For the time-series calibration we investigate two different methods. In both cases we suppose that 
time series data for $\widehat X$ and $\Ss$ are available on a time grid $t_1,\ldots, t_N$, with $N>1$.

\subsubsection{Regression using price data}
Our first method consists in performing a \emph{linear} regression 
 where we directly regress -- according to the signature model as in Corollary~\ref{cor1} for $D=1$ --
 the trajectories of $S$ on the signature of $\widehat X$.
 For the computation of  $(\widehat\X_{t_i})_{i=1}^N$ based on the observations $(\widehat\Xx_{t_i})_{i=1}^N$  we can resort to several available  packages, e.g.~the package \verb|iisignature| developed by \cite{RG:18} or \verb|signatory| by \cite{KL:20} in Python.

Let  $d^\ast:=\frac{(d+1)^{n}-1}{d}$ denote the dimension of the signature vector of $\widehat{X}$ truncated at level $n-1$. Then the current calibration problem consists in finding $\ell^{\ast}\in \mathbb{R}^{d^\ast}$ such that
\begin{equation*}
    \ell^{\ast}\in  \displaystyle{\argmin_{\ell}} \ L_{\text{price},\alpha}(\ell),
\end{equation*}
where the loss function $L_{\text{price},\alpha}(\ell)$ is given by
\begin{equation}\label{calibration:price1}
\begin{aligned}
	L_{\text{price},\alpha}(\ell)&:= \sum_{i=1}^{N}\Big(\Ss_{n}(\ell)_{t_i}-S_{t_{i}}\Big)^{2}+\alpha(\ell)\\
	&=\sum_{i=1}^{N}\Big(S_0+\ell_{\emptyset}\langle{e}_{1},\widehat{\X}_{t_{i}}\rangle+\sum_{0 < |I|\leq n-1}\ell_{I}\langle \tilde{e}_{I},\widehat{\X}_{t_{i}}\rangle-S_{t_{i}}\Big)^{2}+\alpha(\ell),
\end{aligned}
\end{equation}
where $\alpha$ denotes a fixed penalization function. An example for $\alpha$ is given by a convex combination of $L^{1}$ and $L^{2}$ penalizations, i.e., $\alpha(\ell)=\lambda\lVert \ell \lVert_{1}+(1-\lambda)\lVert\ell \lVert_{2}$, where  $\lambda \in [0,1]$. Note here that the number of parameters  corresponds to $d^{\ast}$, i.e.~the dimension of the signature truncated at level $n-1$. This is due to the representation of Corollary~\ref{cor1}, as also seen from the second equality in \eqref{calibration:price1}.

\begin{example}\label{ex3} Depending on the choice of the primary process, the time series for $X$ may not be directly readable from the market. This is for instance the case  if we assume $X$ to be a vector of correlated Brownian motions, whose trajectories have to be extracted from the observations of $S$. We discuss such a procedure in the following within the class of classical stochastic volatility models.

Suppose that the dynamics of the asset price process are described by a stochastic volatility model under the physical measure $\P$, e.g.,
\begin{equation*}
\begin{aligned}
	{\d}S_{t}&=\mu(S_{t},V_{t}){\d}t+g(S_t,V_{t}){\d}B_{t}^{\P}\\
	{\d}V_{t}&=h(S_t,V_{t}){\d}t+\sigma(S_t,V_{t}){\d}W_{t}^{\P},
\end{aligned}
\end{equation*}
with  ${\d}[B^{\P},W^{\P}]_{t}=\rho {\d}t$ where $\rho\in[-1,1]$ and $\mu, h, g, \sigma$ functions from $\mathbb{R}^2$ to $\mathbb{R}$ such that $g$ and $\sigma$ are strictly positive. Choosing  $d=2$, we aim to estimate the trajectories of a primary process $X_{t}:=(B_{t}^\Q,W_{t}^\Q)$ specified by
\begin{equation}\label{eq:Wq}
	B_{t}^{\Q}=\int_{0}^{t}\frac{\mu(S_{s},V_{s})}{g(S_s,V_{s})}{\d}s+B_{t}^{\P},\qquad \qquad W_{t}^{\Q}=\int_{0}^{t}\frac{h(S_s,V_{s})}{\sigma(S_s,V_{s})}{\d}s+W_{t}^{\P}.
\end{equation}
Note that under sufficient integrability conditions on the market prices of risk $\mathbb{Q}$ is an equivalent local martingale measure, under which $(B^{\Q}, W^{\Q})$ are correlated $\Q$-Brownian motions and under which the dynamics of $(S,V)$ are of the following form
\begin{equation*}
\begin{aligned}
	{\d}S_{t}&=g(S_t,V_{t}){\d}B_{t}^{\Q},\\
	{\d}V_{t}&=\sigma(S_t,V_{t}){\d}W_{t}^{\Q}.
\end{aligned}
\end{equation*}
Note that other choices of equivalent local martingale measure $\widetilde{\mathbb{Q}}$ are of course possible. This then amounts to specify the second  Brownian motion $W_{t}^{\widetilde{\Q}} $ (under the assumption that we stay in a Markovian framework) as
\[
W_{t}^{\widetilde{\Q}}=\int_{0}^{t}\frac{h(S_s,V_{s})- h^{\widetilde{\Q}}(S_s,V_s)}{\sigma(S_s,V_{s})}{\d}s+W_{t}^{\P}.
\]
Under this measure the dynamics of $V$, then read as
\begin{equation*}
	{\d}V_{t}=h^{\widetilde{\Q}}(S_t,V_t)\d t +\sigma(S_t,V_{t}){\d}W_{t}^{\widetilde{\Q}}.
\end{equation*}
A simple choice is of course to choose $h^{\widetilde{\Q}}=h$ so that $W^{\widetilde{\Q}}=W^{\P}$.
We consider in the applications below always the case  $h^{\widetilde{\Q}}=0$, i.e. $W^\Q$ as defined in \eqref{eq:Wq}.  

In order to extrapolate  the trajectories of  $(B^\Q, W^\Q)$, we estimate the spot quadratic variation of the price and of the volatility process (see e.g.~\cite{JP:11} for different types of pathwise spot covariance estimators in general semimartingale contexts), to get an estimator of $\Sigma_{11}:=g(S_t,V_t)^2$ and $\Sigma_{22}:=\sigma(S_t,V_t)^2$, which in turn allows to compute
	\begin{equation}\label{eq:bm_estimation}
		{\d}B_t^{\Q} = \frac{{\d}S_t}{\sqrt{\Sigma_{11,t}}}, \qquad {\d}W_t^{\Q}= \frac{{\d}V_t}{\sqrt{\Sigma_{22,t}}}.
	\end{equation} 
	Observe that assuming $g$ and $\sigma$ to be strictly positive we get that $\sqrt{\Sigma_{11,t}}=g(S_t,V_t)$ and $\sqrt{\Sigma_{22,t}}=h(S_t,V_t)$. This is important to guarantee that $\d S_t=g(S_t,V_t) \d B_t^\Q$ (as requested in Corollary~\ref{cor1}) and $\d V_t=h(S_t,V_t) \d W_t^\Q$.

 Having extracted the trajectories of $B^\Q$ and $W^\Q$ we can readily compute the trajectories of their time extended signature.
If one wants to work with a different equivalent local martingale measure $\widetilde{\Q}$ parameterized via $h^{\widetilde{\Q}}(S_s,V_s)$ as above, then $W_{t}^{\widetilde{\Q}}$ can obtained via
\[
\mathrm{d} W_{t}^{\widetilde{\Q}}= \frac{{\d}V_t}{\sqrt{\Sigma_{22,t}}}
- \frac{ h^{\widetilde{\Q}}(S_t,V_t)}{\sqrt{\Sigma_{22,t}}}{\d}t.
\]
\end{example}

\subsubsection{Regression using spot volatility data}
Let us now discuss the second method, which works
when the primary process $X$ is given by a vector of correlated Brownian motions similarly as in Example \ref{ex3}. Indeed, we then can also calibrate to the time-series data of the spot volatility of the trajectories of $S$. The same task was tackled in \cite{PSS:20} for the one-dimensional case with one driving Brownian motion as primary process and with a lead-lag extension instead of a time extension. In our framework the calibration to the spot volatility consists in finding $\ell^{\ast}\in \mathbb{R}^{d^\ast}$ such that
\begin{equation*}
    \ell^{\ast}\in  \displaystyle{\argmin_{\ell}} \ L_{\text{vol},\alpha}(\ell),
\end{equation*}
where the loss function $L_{\text{vol},\alpha}(\ell)$ is now given by 
\begin{equation}\label{calibration:vol}
	L_{\text{vol},\alpha}(\ell):= \sum_{i=1}^{N}\Big(\ell_{\emptyset}+\sum_{0 < |I|\leq n}\ell_{I}\langle e_{I},\widehat{\X}_{t_{i}}\rangle-\sqrt{\frac{{\d}}{{\d}t}[S]_{t_{i}}}\Big)^{2}+\alpha(\ell),
\end{equation}
where $\frac{{\d}}{{\d}t}[S]$ denotes the spot quadratic variation process and  $\alpha$ a penalty function as in \eqref{calibration:price1}. Note that we still work with the signature model as of Corollary~\ref{cor1} (for $D=1$) exploiting its representation in form of \eqref{eq:martingalecond}.

\begin{remark}\label{rem42}
\begin{enumerate}
    \item As the spot volatility is not directly available but has to be estimated, the penalty function plays an important role in order to avoid overfitting to noisy data. Note also that not all  coefficients of the signature might be statistically significant. This includes in particular the case where  $d$ is large and the components of $\Xx$ are strongly correlated, and the case where the number of parameters to be calibrated is much higher  than the number of points we aim to fit.
    \item Let us stress that in the context of Example~\ref{ex3} the calibration to the price path and to the volatility process $g(S,V)$ are equivalent in the following sense. Once we have found $\ell^{\ast}$ solving \eqref{calibration:vol}, we obtain 
\[
[S]_t\approx \int_0^t\bigg(\ell_\emptyset^*+\sum_{0 < |I|\leq n} \ell_I^{\ast} \langle e_I,\mathbb{\widehat{X}}_s \rangle \bigg)^2\d s. \]
Since by definition
$S_t=\int_0^t g(S_s,V_s) \d B_s^\Q,$
$g(S_t,V_t)> 0$, and $[B^\Q]_t=t$, this yields
$$S_t\approx S_0+\int_0^t \bigg(\ell_\emptyset^*+\sum_{0 < |I|\leq n} \ell_I^{\ast} \langle e_I,\mathbb{\widehat{X}}_s \rangle \bigg)\d B^\Q_s,$$
which by Lemma~\ref{lem2} can be rewritten as
\[S_t\approx S_0 + \ell_\emptyset^* B_t^\Q+\sum_{0 < |I|\leq n} \ell_I^{\ast} \langle \tilde{e}_I,\mathbb{\widehat{X}}_t \rangle. \]
As we have the same expression for the parameters $\ell^*$ obtained via \eqref{calibration:price1}, the claim follows.
\end{enumerate}
\end{remark}

\subsubsection{Numerical examples} \label{num_ex_ts}

We now study the above regression tasks by means of several examples based on classical models. We start with a SABR-type and a Heston model.

\begin{example}
Consider two classical stochastic volatility models, namely a SABR-type model
$$
\begin{aligned}
	{\d}S_{t}&=\mu S_{t}{\d}t+S_{t}V_{t}{\d}B_{t}^{\P}\\
	{\d}V_{t}&=\kappa(\theta-V_{t}){\d}t+\sigma V_{t}{\d}W_{t}^{\P},
\end{aligned}
$$
and a Heston model
\begin{equation}\label{eqn30}
\begin{aligned}
	{\d}S_{t}&=\mu S_{t}{\d}t+S_{t}\sqrt{V_{t}}{\d}B_{t}^{\P}\\
	{\d}V_{t}&=\kappa(\theta-V_{t}){\d}t+\sigma \sqrt{V_{t}}{\d}W_{t}^{\P},
\end{aligned}
\end{equation}
where in both cases ${\d}[B^{\P},W^{\P}]_{t}=\rho {\d}t$ where $\rho\in[-1,1]$.

We now aim to approximate these models with signature models whose primary processes are $\Q$-Brownian motions extracted from the time series data of $S$ and $V$ as in Example~\ref{ex3}. The calibration is performed applying both procedures described above, namely the calibration to the price's trajectory and the calibration to the spot volatility trajectory. The corresponding loss functions are given by \eqref{calibration:price1} and \eqref{calibration:vol}, respectively. In both cases the truncation parameter is chosen to be $n=2$ and a Lasso's penalty function $\alpha(\ell)=10^{-5}\|\ell\|_1$ has been employed (a Ridge regression has led to similar
results). For the experiment, we select the models parameters
\begin{equation*}\label{parameters:sim}
    \{S_{0},V_{0},\mu,\kappa,\theta,\sigma,\rho\}:=\{1,0.08,0.001,0.5,0.15,0.25,-0.5\}.
\end{equation*}
Observe that in the  two models
the  process $V$ has a different meaning. Indeed, in the SABR it corresponds to the spot volatility while in the Heston it is the spot variance. As we choose the same parameters, this means that the volatility $V$ in the SABR model is smaller that the volatility $\sqrt{V}$ in the Heston model.

For both fit we extract 3 ticks of Brownian motions per day from trajectories that are assumed to be observable with a frequency of 4 seconds for 8 hours during 1 calendar year.

The calibrations have thus been performed over 1 calendar year of observations with 3 ticks per day ($T=1$ and $N=1095$). The obtained parameters $\ell^*$ are then tested simulating 0.5 years of new trajectories of the stochastic volatility models, extracting the corresponding $\Q$-Brownian motions, and computing the trajectories of $S_3(\ell^*)$ as in Corollary~\ref{cor1}. Results are illustrated in Figure~\ref{fig:regression_stochvol}.

\begin{figure}[H]
	\centering
	\begin{subfigure}[b]{0.64\textwidth}            
		\centerline{\includegraphics[width=\textwidth]{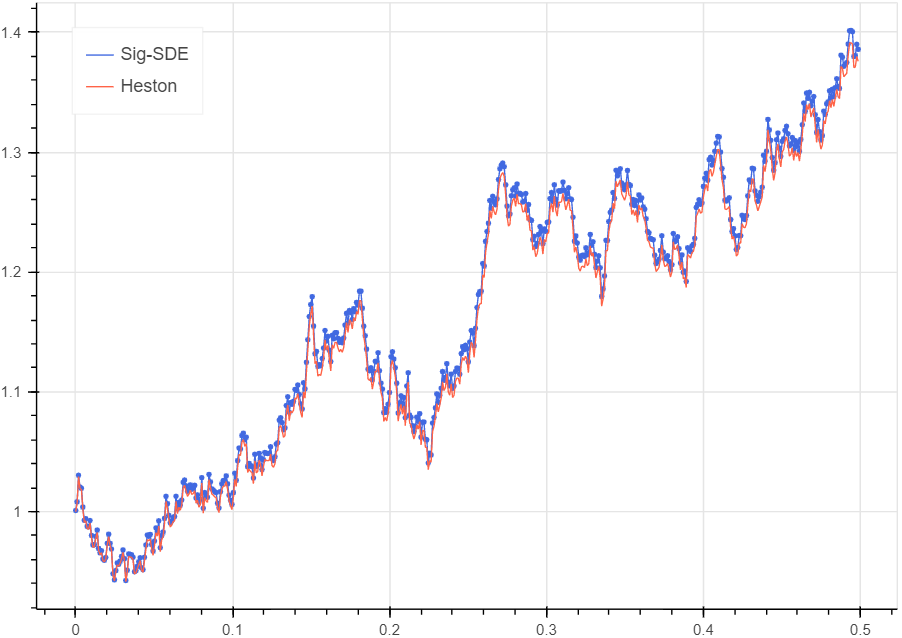}}
		\caption{Regression on the price of a Heston model as described in \eqref{calibration:price1}. }
		\label{fig:regression_price}
	\end{subfigure}

	\begin{subfigure}[b]{0.64\textwidth}
		\centering
		\centerline{\includegraphics[width=\textwidth]{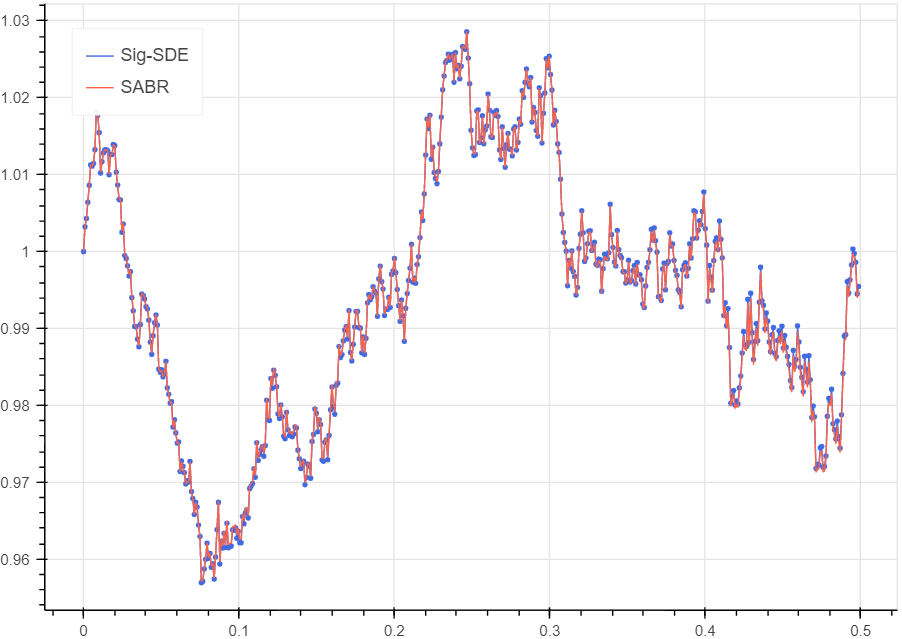}}
		\caption{Regression on the volatility of a SABR-type model as described in \eqref{calibration:vol}.}
		\label{fig:regression_vol}
	\end{subfigure}
	\caption{Out-of-sample of  half calendar year with training sample of one calendar year. }\label{fig:regression_stochvol}
\end{figure}
To assess the accuracy of the above calibration to time-series data we consider the mean squared error (MSE) between the predicted path and the observed path, i.e.
\[
\frac{1}{K}\sum_{i=1}^K (S_{n}(\ell^*)_{t_i}- S_{t_i})^2.
\]
In the following tables we state the in-sample MSE denoted  by $\text{MSE}_{\text{ex}}^{\text{in}}$ 
as well as the out-of-sample MSE denoted by 
$\text{MSE}_{\text{ex}}^{\text{out}}$. 
To justify the robustness of the estimated parameters the  out-of-sample MSEs are calculated as an average of the mean squared errors on 1000 different realizations of the simulated model.

\noindent Regression on the price as in \eqref{calibration:price1} under a Heston model:
\begin{center}
\renewcommand{\arraystretch}{2}
\begin{tabular}{||c | c||} 
 \hline
 $\text{MSE}_{\text{ex}}^{\text{in}}$ &
 $\text{MSE}_{\text{ex}}^{\text{out}}$\\ 
 \hline\hline
 $2.574\cdot 10^{-6}$ &
 $2.841\cdot 10^{-4}$ \\ 
 \hline
\end{tabular}
\end{center}
Regression on the volatility as in \eqref{calibration:vol} under the SABR-type model:
\begin{center}
\renewcommand{\arraystretch}{2}
\begin{tabular}{||c | c||} 
 \hline
 $\text{MSE}_{\text{ex}}^{\text{in}}$ &
 $\text{MSE}_{\text{ex}}^{\text{out}}$\\ 
 \hline\hline
 $3.204\cdot 10^{-6}$ &
 $1.060\cdot 10^{-6}$ \\ 
 \hline
\end{tabular}
\end{center}

It is worth mentioning that learning the volatility as in \eqref{calibration:vol} requires an estimation of the paths of  $(\sqrt{\frac{{\d}}{{\d}t}[S]_{t}})_{t\in[0,T]}$ from the price's trajectories. If such estimation is too noisy the procedure can yield unsatisfactory results.
One reason why this is here not visible in the out-of-sample MSEs  is that in the SABR model the volatility enters linearly in the price dynamics, and hence is easier to be learned via \eqref{calibration:vol}, which can be seen from the above tables. Note also that  the smaller volatility in the SABR models plays role in the order of magnitude of $\text{MSE}_{\text{ex}}^{\text{out}}$.
\end{example}

In the next example we consider a multivariate asset price process whose dynamics are given by a multi-dimensional Black-Scholes model.

\begin{example}[The multi-dimensional case]
Consider a $d$-dimensional Black-Scholes model under the physical probability measure $\P$, i.e., 
\begin{equation*}
	\d S_t=\mu^\top \operatorname{diag}(S_{t})\d t + \operatorname{diag}(S_{t})\sqrt{\Sigma}\d B_t^\P,
\end{equation*}
where $B^\P$ is a $d$-dimensional standard Brownian motion (with independent components), $\mu\in\mathbb{R}^{d}$,  $\Sigma \in \mathbb{S}_+^d$ and $\sqrt{\cdot}$ denotes the matrix square root. Similarly as in 
Example~\ref{ex3} we extract independent $\Q$-Brownian motions (i.e. $\mathbb{P}$ with the corresponding market price of risk)  which serve as primary process $X$ in the signature model, now  however of dimension $d > 1$.

In order to calibrate to the simulated price path $S$, we perform a Ridge regression using the prices' trajectories with penalty function $\alpha(\ell)=5\cdot10^{-5} \lVert \ell \lVert_{2}$. The truncation parameter is chosen to be $n=5$, corresponding to 46655 parameters to be calibrated. We considered the case of five positively correlated components, i.e.~$\Sigma_{ij}=\sigma_i\sigma_j\rho_{i,j}$ with $\sigma_i\in(0.15,0.34)$ and $\rho_{i,j}\in(0.4,0.9)$ for $i,j=1,\dots,5$,  with drifts $\mu_{i}\in(-0.003,0.001)$ for $i=1,\dots,5$. The calibration has been performed over 1 calendar year of observations with 3 ticks per day ($T=1$ and $N=1095$). The obtained parameters $\ell^*$ are then tested simulating 4 months of new trajectories of the Black-Scholes model, extracting the corresponding $\Q$-Brownian motions, and computing the trajectories of $S_6(\ell^*)$ as in Corollary~\ref{cor1}. Results are illustrated in Figure~\ref{fig:regression_multi}. The $\text{MSE}_{\text{ex}}^{\text{out}}$ on the single trajectories are of order $10^{-5}$.
\begin{center}
\begin{figure}
	    \centering	\includegraphics[width=0.64\textwidth,clip=false]{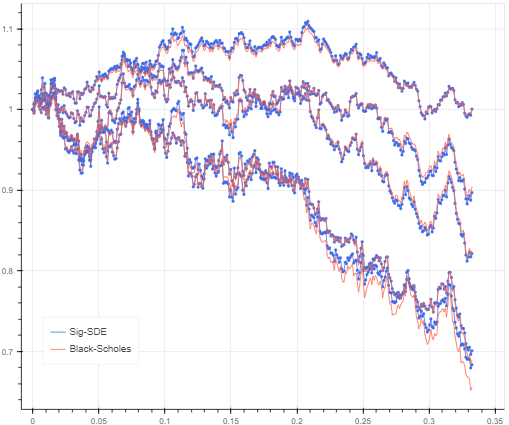}
	\caption{Regression on the price of a 5-dimensional Black-Scholes model as described in \eqref{calibration:price1}. Out-of-sample of 4 months with training sample of 1 calendar year.}
	\label{fig:regression_multi}
\end{figure}
\end{center}
\end{example}

\subsection{Calibration to option prices}\label{sec:calibration_options}

We now pass to the more common way of calibrating models in mathematical finance, namely to calibration to call (and put) option prices, with the goal to match their implied volatilities as well as possible.

Throughout we here assume that there exists a pricing measure $\Q$ and that the primary process $\Xx$ is a  vector of correlated $\Q$-Brownian motions.
The goal of this approach is to choose the parameters $\ell$ in order to fit the prices of options on $\Ss$ available on the market. In the context of Sig-SDEs, a similar approach has been considered  in \cite{PSS:20}. There the authors however only consider option prices generated from a Black-Scholes model  and  maturities longer than approximately 5 months. To our knowledge, in the context of signature based models for asset prices, we are the first ones dealing with market data, thus including shorter maturities and realistic volatility smiles. 
Suppose we are given prices of $N$ call options $C^{\ast}(T_{1},K_{1}),\dots,C^{\ast}(T_{N},K_{N})$  with maturities $T_{i}$ and strikes $K_{i}$. Then in spirit of \cite{CB:04} (see also \cite{CKT:20})  we translate the current calibration task  to finding $\ell$ such that
\begin{equation*}
    \ell\in \argmin_{\ell}\ L_{\text{option}}(\ell),
\end{equation*}
where 
\begin{equation}\label{calibration:options}
    L_{\text{options}}(\ell)=\sum_{i=1}^{N}\gamma_{i}\Big(C^{\ast}(T_{i},K_{i})-C^{\text{model}}(T_{i},K_{i},\ell)\Big)^{2},
\end{equation}
with $\gamma_{i}$ Vega weights and $C^{\text{model}}(T_{i},K_{i},\ell)$ denoting the price of the option under the signature model with parameters $\ell$, maturity $T_{i}$ and strike $K_{i}$.  

\begin{remark}
The loss function described in \eqref{calibration:options} can be adapted depending on the available data.
A common approach is for instance to weigh the options by their bid-ask spreads, see for instance \cite{CB:04}. This reflects the relative importance of reproducing different option prices precisely. In the present work we shall employ Vega weights since bid-ask spreads are not always provided in the data and since they are well-suited to match the implied volatility surface.

The same calibration procedure would apply to any other type of option, e.g. path-dependent exotic ones: as we shall see below, the pricing of the latter with a signature model does not require any additional effort. For liquidity reasons we however  decided to stick to vanilla options and calibrate to the corresponding implied volatility surface.
\end{remark}

To fix notation we briefly recall the definition of implied volatility, which is our goodness of fit criterium.

\begin{definition}
Let $C(T,K)$ be the price of a call option with maturity $T$ and strike price $K$ written on the asset $S$. The implied volatility of $C(T,K)$ is defined as the volatility $\sigma(T,K)$ that solves the equation
\begin{equation}\label{iv_equation}
    C^{BS}(T,K,\sigma(T,K))=C(T,K),
\end{equation}
where $C^{BS}$  Black-Scholes option price. Then $(\sigma(T,K))_{T,K}$ is called (implied) volatility surface, and  $(\sigma(T,K))_{K}$ is called (implied) volatility smile for each fixed maturity $T$.
\end{definition}

Once the optimal parameters $\ell^{\ast}$ are found via \eqref{calibration:options} we  then need to solve \eqref{iv_equation} numerically for $C(T,K)=C^\text{model}(T,K,\ell^\ast)$ to assess the  goodness of fit. Indeed we measure it in terms of the absolute error (in basis points) between the implied volatilities from the market and the model.

\subsubsection{Monte Carlo pricing}\label{mc-pricing}

We now address the problem of computing for a fixed strike $K\in\mathbb{R}$ and maturity $T>0$, the price $C^{\text{model}}(T,K,\ell)$ of the corresponding option under the signature model.

Before presenting our method based on Monte Carlo, recall that Section~\ref{sig-payoffs} provides a closed-form formula for the computation of sig-payoffs without the need to resort to Monte Carlo techniques. By the universal approximation theorem (Theorem~\ref{universality}), we also know that call options payoffs can be approximated arbitrarily well by sig-payoffs, or in this case even just by polynomials, on compacts. 
The combination of these two properties leads to the approach exploited by \cite{PSS:20} of splitting the computation of $C^{\text{model}}(T,K,\ell)$ in two parts: an approximation of the call (put) payoffs using sig-payoffs and pricing the latter with the closed formula proved in Section~\ref{sig-payoffs}. However, in order to achieve an acceptable error between the original call payoff and the sig-payoff one needs to choose a sig-payoff involving signature terms of quite high order. Indeed, such an approximation needs to be valid for every set of coefficients $\ell$ involved in the optimization procedure and thus on a very large compact set $K$.
From a computational point of view we observed that this sig-payoff or polynomial approximation is not tenable and we thus opted for abandoning this approach. 
Note however that sig-payoffs can still be used as control variates in Monte Carlo pricing techniques as we shall outline in Section \ref{sec:MCvar}.
For more details on the issues that can occur in view of sig-payoff or polynomial approximations  we refer to Section~\ref{sec:calib-sigpayoffs}.

For these reasons we therefore adopted a  Monte Carlo approach, which is due to the linearity of the model particularly tractable, making the whole calibration computationally feasible within a reasonable amount of time while providing highly accurate results.
For the Monte Carlo price we thus fix a number of samples $N_{MC}>0$  and approximate   $C^{\text{model}}(T,K,\ell)$ via
\begin{equation*}
    C^{\text{model}}(T,K,\ell)\approx \frac{1}{N_{MC}}\sum_{i=1}^{N_{MC}}(S_{n}(\ell)_{T}(\omega_i)-K)^{+}.
\end{equation*}
We stress again that this can be computed fast. Indeed,  by  the linearity of the model simulating $(S_{n}(\ell)_{T}(\omega_{i}))_{i=1}^{N_{MC}}$ boils down to the following steps:
\begin{itemize}
    \item simulate $(X_{t}(\omega_{i}))_{t\in[0,T]}$, which in the current setting are just trajectories for correlated Brownian motions, for each $i\in \{1,\ldots,N_{MC}\}$;
    \item compute $\langle e_{I},\widehat{\mathbb{X}}_{T}(\omega_{i})\rangle$ for all $i=1,\dots,N_{MC}$ and for all multi-indices $I$ such that $\lvert I\lvert \le n$;
    \item take linear combinations to compute $\langle \tilde{e}_{I},\widehat{\mathbb{X}}_{T}(\omega_{i})\rangle$ for all $i=1,\dots,N_{MC}$ and for all multi-indices $|I|\leq n-1$ as described in Lemma~\ref{lem2};
    \item retrieve $(S_{n}(\ell)_{T}(\omega_i))_{i=1}^{N_{MC}}$ via \eqref{eq:martingality}.
\end{itemize} 
Observe that 
the parameters $\ell$ only enter in the very last step which allows to precompute and store all other quantities. This is in contrast to other models where the parameters that need to be calibrated enter at each time point in the simulation steps of e.g.~an Euler scheme or more complicated schemes used for instance for rough volatility models. 

\begin{remark}
Since the calibration problem is usually ill-posed as inverse problem, there might exist more than one local minima of the loss function which is able to reproduce the prices on the market, and their implied volatility. We refer to \cite{CT:04} where the sensitivity, e.g.~in terms of initial starting points in gradient descent optimization algorithms, of the nonlinear least squares calibration problem in the case of exponential L\'evy models is discussed. In our setting it is worth mentioning  that since $S_{n}(\ell)_{T}$ is linear and the call payoffs are convex  we have a convex optimization problem whenever for all maturities and strikes
\begin{equation*}
    C^{\ast}(T,K)\le\frac{1}{N_{MC}}\sum_{i=1}^{N_{MC}}(S_{n}(\ell)_{T}(\omega_i)-K)^{+}.
\end{equation*}
Therefore the initial random parameter $\ell$ for the optimization can for instance be initialized according to this condition.
\end{remark}

\subsubsection{Variance reduction with sig-payoffs}\label{sec:MCvar}
Even though Monte Carlo pricing is fast since all essential quantities can be precomputed as explained above, we here discuss variance reduction techniques (see e.g.~\cite{G:04}) that can speed up the procedure even further. 
The idea is to introduce a control variate, i.e., a random variable $\Phi^{cv}$ such that:
\begin{equation*}
    \mathbb{E}_{\Q}[\Phi^{cv}]=0, \qquad \qquad \Var \big((S_{n}(\ell)_{T}-K)^{+}-\Phi^{cv}\big)< \Var \big((S_{n}(\ell)_{T}-K)^{+}\big).
\end{equation*}
An example of control variates used for pricing and calibrating neural SDE models can be found in  \cite{CKT:20,GSSSZ:20}, where $\Phi^{cv}$ is constructed from hedging strategies. A possible other choice of control variates for signature models are sig-payoffs. Indeed, one can use the pricing formula derived in Section \ref{sig-payoffs} to define:
\begin{equation*}
    \Phi^{cv}(T,K,\ell):=f_{\emptyset}+\sum_{0<\lvert J \lvert \le m} f_{J}\langle e_{J},\widehat{\mathbb{S}}_{n}(\ell)_{T}\rangle-\widetilde{P}_f(\ell,\E_\Q[\widehat \X_T]),
\end{equation*}
for some fixed $m>0$ where
\begin{equation*}
    (S_{n}(\ell)_{T}-K)^{+}\approx f_{\emptyset}+\sum_{0<\lvert J \lvert \le m} f_{J}\langle e_{J},\widehat{\mathbb{S}}_{n}(\ell)_{T}\rangle,
\end{equation*}
for a wide range of $\ell$ and with high probability. This can be done by performing a linear regression to obtain the coefficients $f$.
Alternatively, a polynomial approximation of the payoff's function can also be employed.

The properties of  $\Phi^{cv}$ then guarantee the accuracy of the approximation
\begin{equation*}
    C^{\text{model}}(T,K,\ell)\approx\frac{1}{N_{MC}}\Big(\sum_{i=1}^{N_{MC}}(S_{n}(\ell)_{T}(\omega_i)-K)^{+}-\Phi^{cv}(T,K,\ell)(\omega_i)\Big),
\end{equation*}
already for smaller values of $N_{MC}$.

\subsubsection{Model's performance} 
In the following we discuss the problem of minimizing the functional \eqref{calibration:options} using the Monte Carlo method as described in Section \ref{mc-pricing} to compute the model prices. We consider the model described in Corollary~\ref{cor1} for $\widehat \Xx_t=(t,B_t,W_t)$ as in Definition~\ref{def1}\ref{it1} for  two correlated Brownian motions $B$ and $W$ with correlation coefficient $\rho=-0,5$, and $D=1$.

As a first example we consider synthetic data, where the implied volatility surface we aim to fit is generated by a Heston model given by \eqref{eqn30}. We consider 7 maturities $(T_{k})_{k=1}^{7}$ ranging from 30 days to 2 years and 13 strikes $(K_{j})_{j=1}^{13}$ ranging from 80$\%$ to 120$\%$ of the spot price. The truncation parameter is fixed to  $n=3$ and the number of Monte Carlo samples to $N_{MC}=10^6$. 
The results for the following two sets of parameters under the risk neutral measure
\begin{center}
\begin{tabular}{||c c c c c||} 
 \hline
  $\kappa$ & $\theta$ & $\sigma$ &$\rho$&  $V_{0}$\\ 
 \hline\hline
  0.1 & 0.1 & 0.4 & -0,5&  0.08\\ 
 \hline\hline
  0.2 & 0.3 & 0.5 & -0,5& 0.08\\ 
 \hline
\end{tabular}
\end{center}
are displayed in the first and in the second row of Figure~\ref{fig:calibration_mc_he1}, respectively.

\begin{center}
\begin{figure}[H]
    \centering
    \captionsetup{justification=centering}
    \centerline{\includegraphics[width=0.9\textwidth]{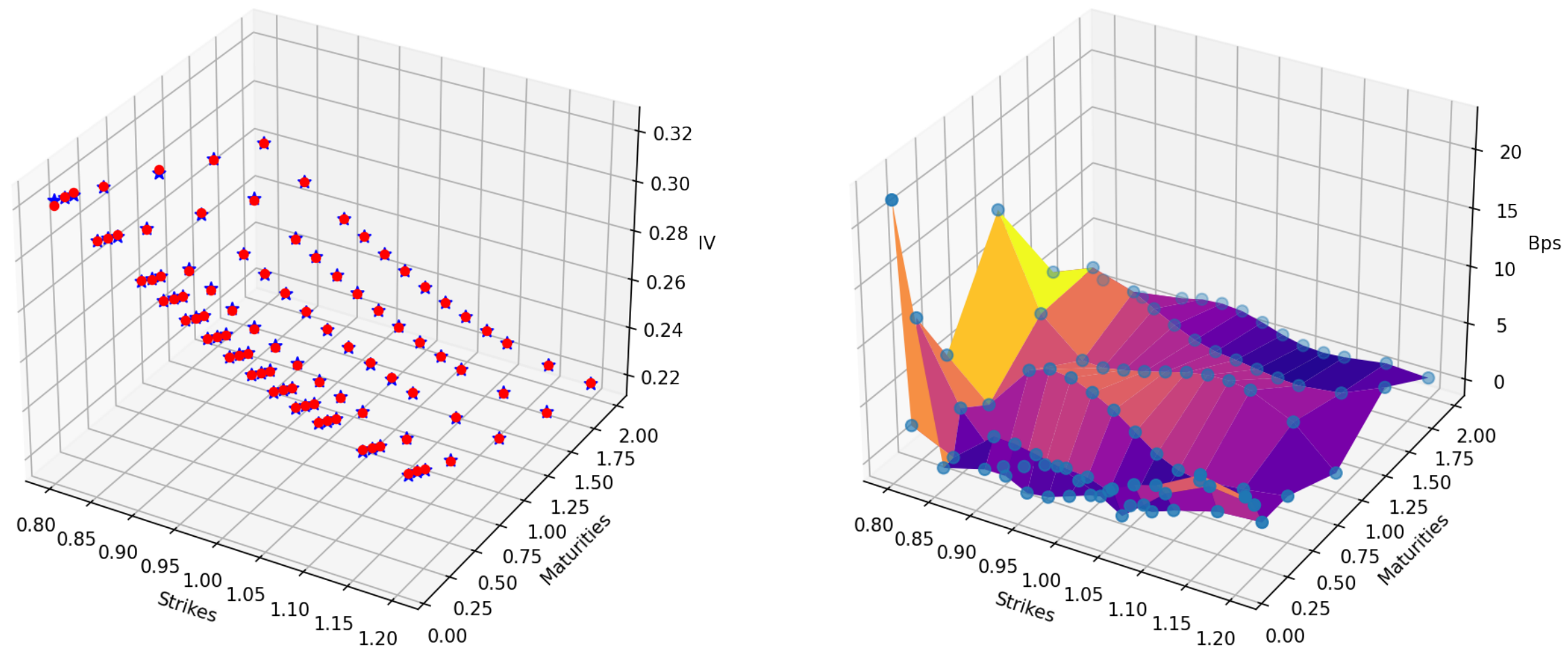}}

    \centerline{\includegraphics[width=0.9\textwidth]{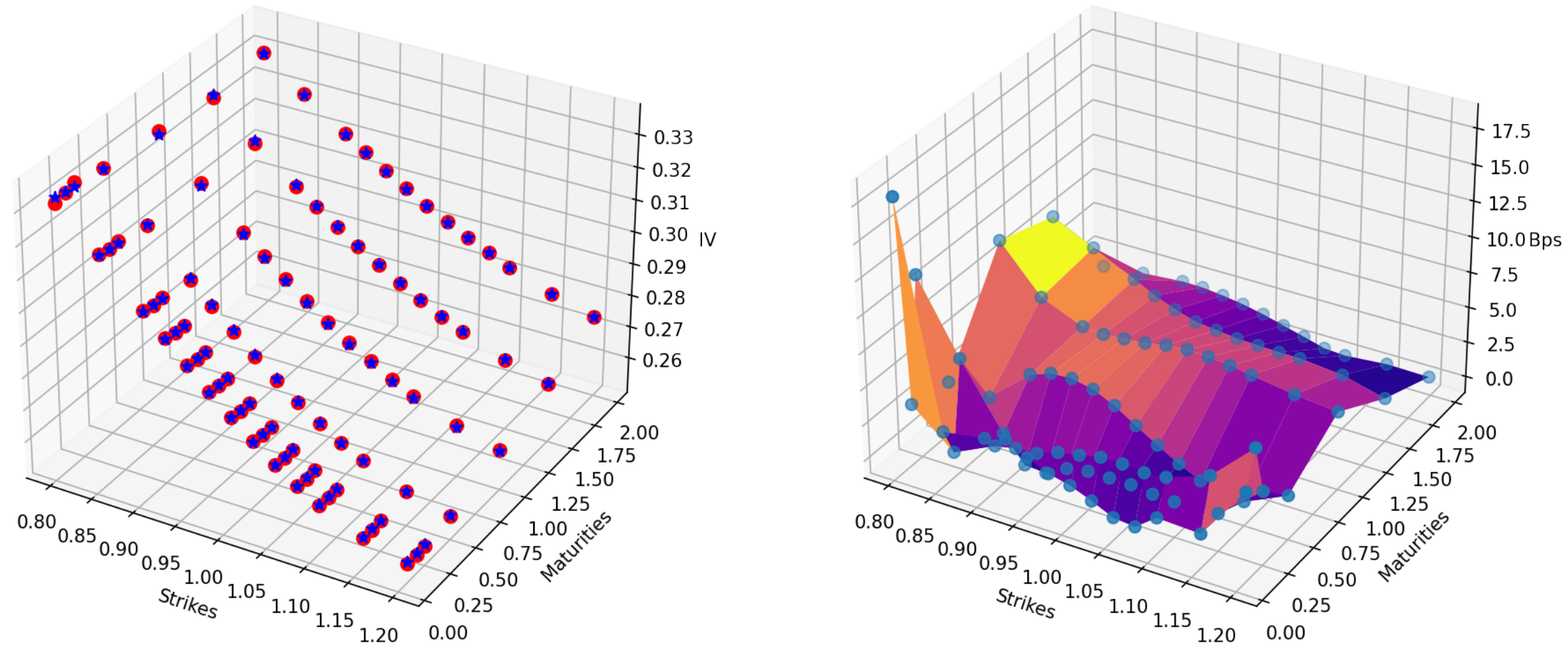}}
    \caption{On the left: blue stars correspond to the implied volatilities of the Heston models, red dots denote the calibrated implied volatilities of $S_{n}(\ell)$ with $n=3$ (13 estimated parameters). On the right: absolute errors between the two surfaces expressed in basis points (Bps).}
    \label{fig:calibration_mc_he1}
\end{figure}
\end{center}

We stress that these  calibrations to Heston generated implied volatility surfaces can take between 4 to 15 minutes  on a standard machine.

Let us now turn to  market data.
We consider the trading day 17/03/2021 for call options written on the S$\&$P 500 index. Our dataset provided by Bloomberg consists of 7 maturities $(T_{k})_{k=1}^{7}$, ranging from 30 days to 2 years, and 9 strike prices $(K_{j})_{j=1}^{9}$ for each maturity  which vary between 80$\%$ and 120$\%$ of the spot price. Again, the truncation parameter is fixed to  $n=3$ and the Monte Carlo's parameter to $N_{MC}=10^6$. The results are displayed in Figure~\ref{fig:calibraition_mc_constant2}.

\begin{center}
\begin{figure}[H]
    \centering
    \captionsetup{justification=centering}
    \centerline{\includegraphics[width=0.9\textwidth]{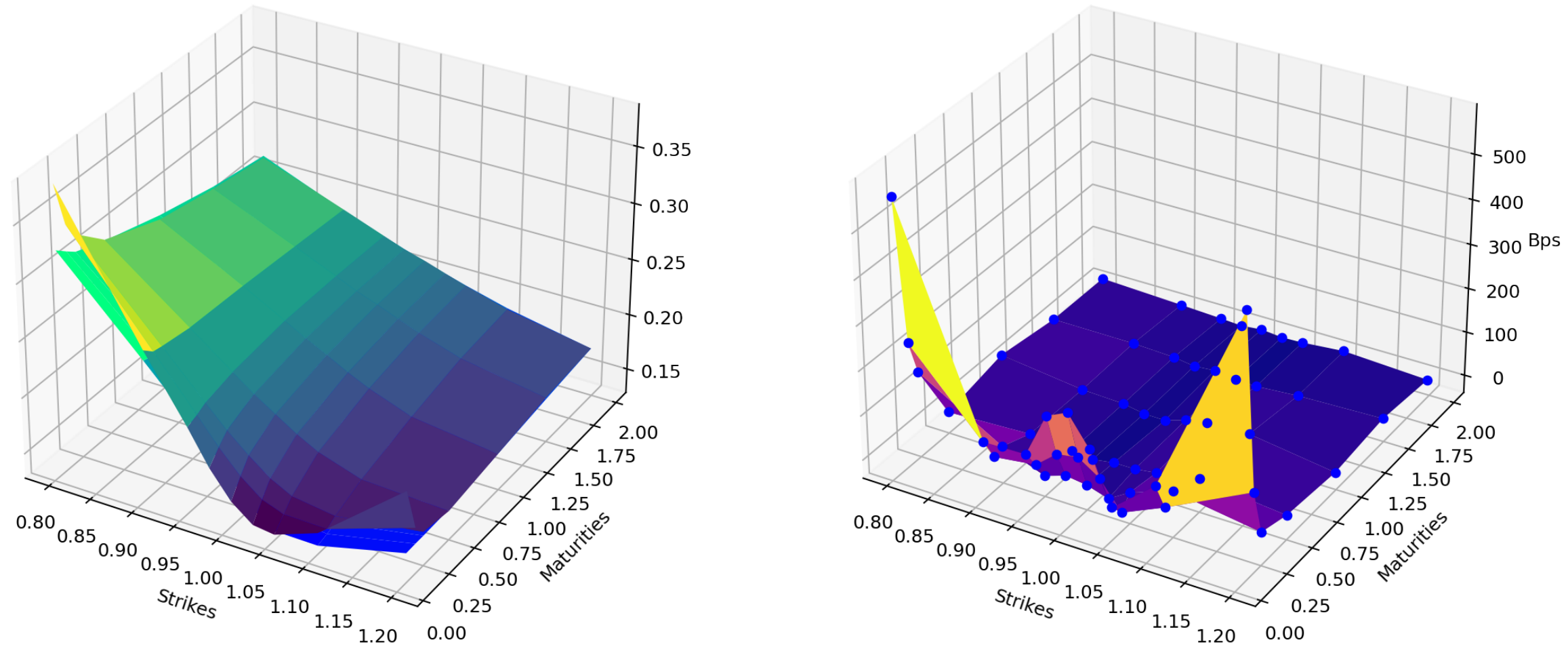}}
    \caption{On the left: the upper surface represents the implied volatility of the S$\&$P 500 index as of 17-03-2021,  the lower one is the calibrated implied volatility of $S_{n}(\ell^*)$ with $n=3$ (13 parameters). On the right: absolute error between the two surfaces (Bps).}
    \label{fig:calibraition_mc_constant2}
\end{figure}
\end{center}

From Figure \ref{fig:calibraition_mc_constant2} we notice that volatility smiles for short maturities have not been captured. In particular, as for many continuous models, the most problematic part consists in fitting the shortest maturities (see Chapter 3 and Chapter 7 of \cite{G:11}). 

To investigate the ability of the model to reproduce short maturity smiles we calibrate it using different loss functions penalizing outliers more severely. Precisely, we first calibrate the model using the procedure described above and we denote by $w_i$ the absolute error between the target implied volatility and the approximated one for maturity $T_i$ and strike $K_i$. Then, in spirit of generative-adversarial distances as for instance considered in \cite{CKT:20}, we define a new loss function
\begin{equation}\label{eqn13}
    L_{p,\alpha}(\ell)=\sum_{i=1}^{N}(\gamma_{i}+\alpha w_i)|C^{\ast}(T_{i},K_{i})-C^{\text{model}}(T_{i},K_{i},\ell)|^{p},
    \end{equation}
    depending on parameters $p$ and $\alpha$ that need to be chosen. By taking high values for $p$ and $\alpha$ we can approximate the
    sup-distance between the two price surfaces, i.e.,
    $$L_{\infty}(\ell)=\sup_{i=1,\ldots,N}|C^{\ast}(T_{i},K_{i})-C^{\text{model}}(T_{i},K_{i},\ell)|,$$
    without compromising differentiability with respect to $\ell$. 
    The result for different choices of the parameters $\alpha$ and $p$ but also the truncation level $n$ is displayed in Figure~\ref{fig:calibraition_mc_constantnew}. As can be guessed from the figure, although the maximal absolute error for maturities larger than 60 days is (almost) acceptable (96 Bps for $n=3$, $p=1000$, and $\alpha=500$ and 
    34 Bps for $n=4$ and $p=1000$, and $\alpha=500$), the absolute error for the shortest maturity and the far in and out of the money strikes are still above $270$ Bps and $395$ Bps respectively. Observe that the performance for $n=4$ is the best for every maturity larger than 60 days as well as for the at-the-money region of the shortest maturity.
    \begin{center}
\begin{figure}[ht]
    \centering
    \captionsetup{justification=centering}
    \centerline{\includegraphics[width=1.05\textwidth]{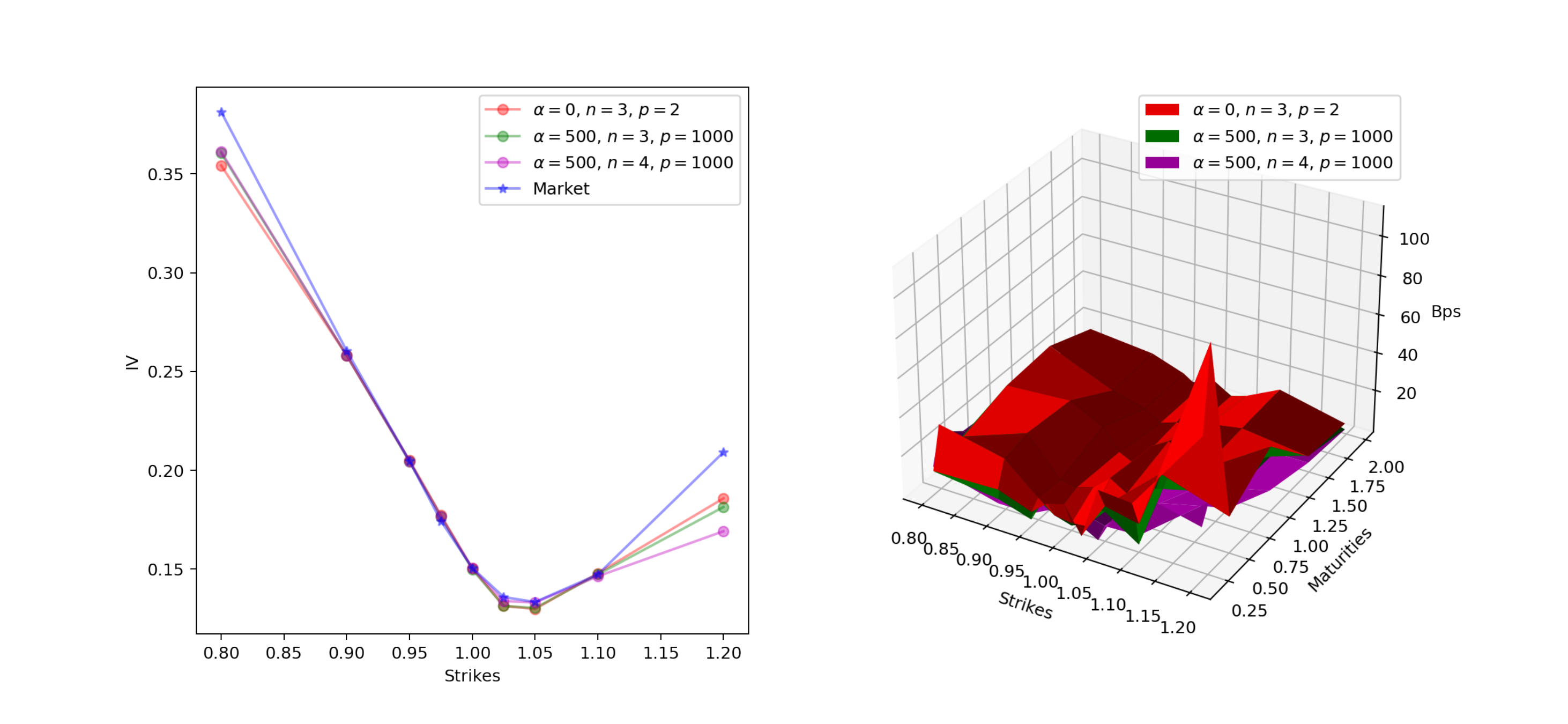}}
    \caption{Comparison between the calibrated implied volatility smiles for different parameters and the S$\&$P 500 index as of 17-03-2021 smile (in blue) at maturity $T_{1}=30$ days (on the left) and for maturities ranging from 60 days to 2 years (on the right). Calibration has been performed using \eqref{eqn13}.}
    \label{fig:calibraition_mc_constantnew}
\end{figure}
\end{center}
    Finally, as a further experiment, we calibrate the model to the shortest maturity alone and to every other maturity together. The first calibration is thus performed to $T_1=30$ days and the second one  to 6 maturities ranging from 60 days to 2 years. In both cases 9 strikes  ranging from 80$\%$ to 120$\%$ of the spot price are considered. The parameters are fixed to $n=2$, $p=2$, and $\alpha=0$ for the calibration to the first maturity and $n=4$, $p=300$, $\alpha=500$ for the remaining maturities. The result is displayed in Figure~\ref{fig1}. We can see that the absolute error is below 12 Bps for the first maturity and  45 for every other maturity, thus yielding a high accuracy.  The computational time needed for the calibration to the first smile is the range of a few minutes, the calibration to the remaining surface can take longer, which is due to the higher value of $n$ and the joint calibration to all maturities.
\begin{center}
\begin{figure}[H]
    \centering
    \captionsetup{justification=centering}
    \centerline{\includegraphics[width=0.9\textwidth]{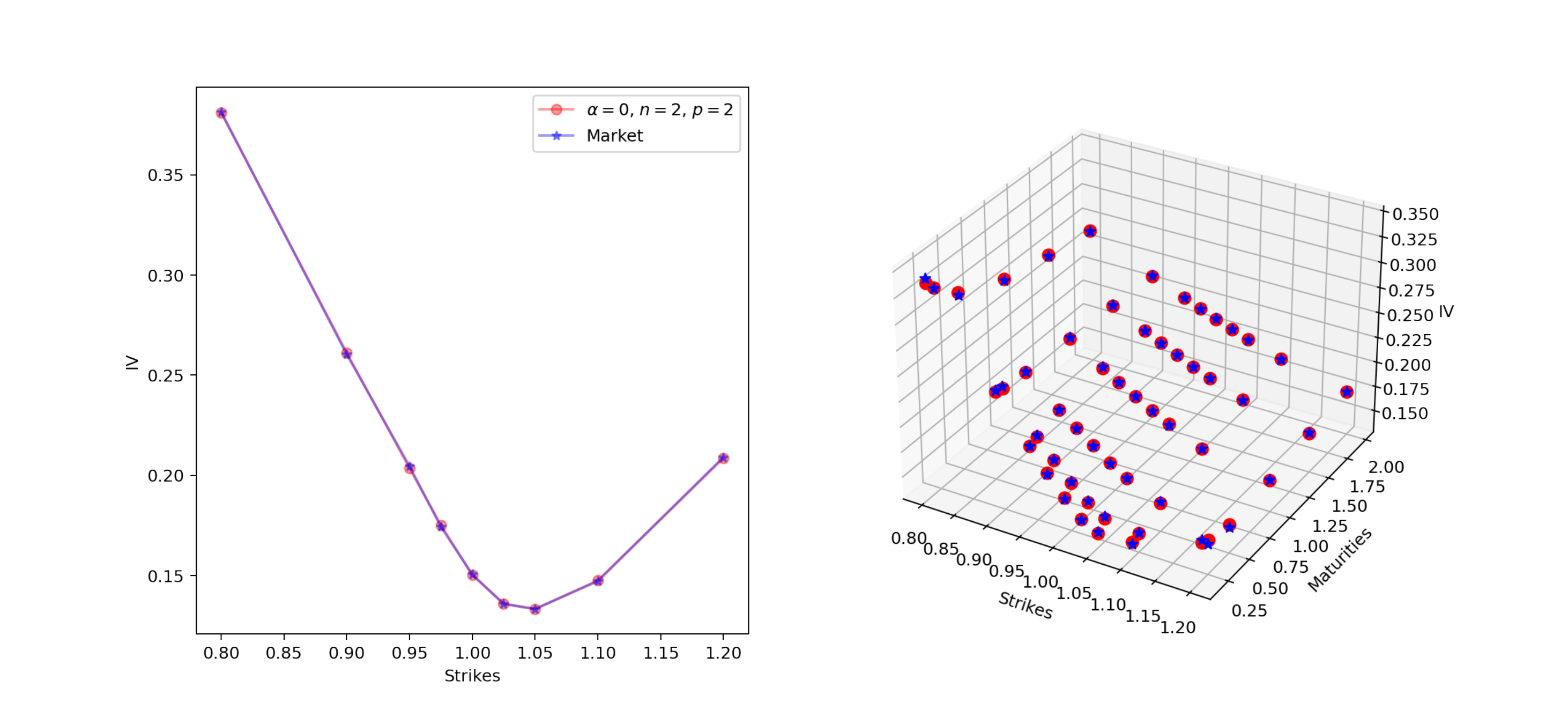}}
    \caption{Comparison between the calibrated implied volatility smiles and the SPX-500 Index as of 17-03-2021 smile (in blue) at maturity $T_{1}=30$ days (on the left) and for maturities ranging from 60 days to 2 years (on the right). Two different calibrations have been performed using \eqref{eqn13}.}
    \label{fig1}
\end{figure}
\end{center}  
These results suggest that introducing maturity-dependent parameters and  performing a slice-wise calibration to the individual smiles as for instance in \cite{CKT:20} or \cite{ GSSSZ:20} can be of interest to obtain both, an excellent accuracy and a low computational time. We investigate this further in Section \ref{sec:timevarying}.

\begin{remark}
Note that a signature model with a two dimensional primary process and $n=4$ as in the Figure~\ref{fig1} (right) has 121 parameters. This model is calibrated to 54 options prices. To investigate  possible over-fitting, we computed values of the implied volatility surface for out-of-sample strikes and maturities. More precisely, we added in total 76 new points. The result is displayed in Figure~\ref{plotdeisogni} where the out-of-sample points are displayed in red and the surface corresponds to the one of Figure~\ref{fig1} (right) for maturities from 60 days to 2 years. This shows that the model exhibits highly promising generalization features and is thus well suited for fast arbitrage-free smile inter- and extrapolation.
\begin{center}
\begin{figure}[H]
    \centering
    \captionsetup{justification=centering}
    \centerline{\includegraphics[width=0.6\textwidth]{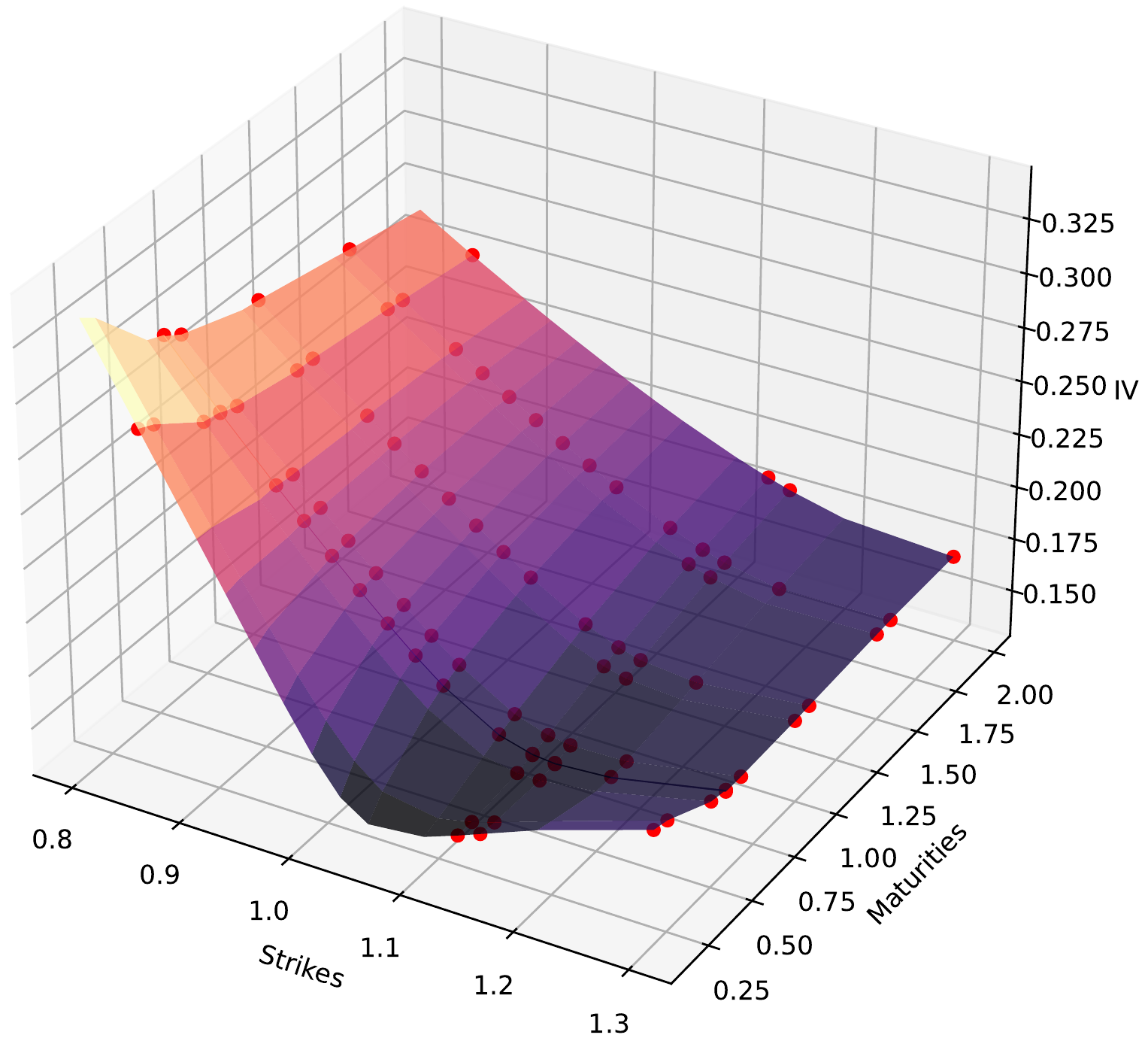}}
    \caption{Implied volatility surface generated by $S_4(\ell^*)$ with $\ell^*$ as in Figure~\ref{fig1} (right). Red dots denote out of sample points.}
    \label{plotdeisogni}
\end{figure}
\end{center}  
\end{remark}
\subsubsection{Time-dependent parameters}\label{sec:timevarying}
Motivated by the above results, 
we propose in the following a  variant of the initial model where the parameters $\ell$ are allowed to depend on time such that we can perform a slice-wise calibration.
 Set $\widehat X$ as in Definition~\ref{def1}\ref{it1}
and let $\{T_{1},\dots,T_{N}\}$ with $N>1$ be a set of fixed times, usually corresponding to the available maturities, and set
\begin{equation}\label{matk-model}
    \overline  S_{n}(\ell,\ell^{\text{corr}})_{t}:=
    S_{n}(\ell)_t+R_{n}(\ell^{\text{corr}})_t,
\end{equation}
where $S_{n}(\ell)$ is given as in Corollary~\ref{cor1} for $D=1$ and 
$$  R_{n}(\ell^{\text{corr}})_t
=\sum_{j=1}^{N-1}\sum_{| I | \le n-1}\ell_{I}^{\text{corr},j}1_{\{t\geq T_j\}}
    \langle \tilde{e}_{I},\widehat{\mathbb{X}}_{t}-\widehat{\mathbb{X}}_{T_{j}}\rangle.$$
    Observe that $R_{n}(\ell^{\text{corr}})_t=0$ for each $t\leq T_1$. This choice is motivated by the observation that $S_{n}(\ell)$ performs very well if calibrated on the first maturity only (see Figure~\ref{fig1}). On every other $T_k$ we have
    $$  R_{n}(\ell^{\text{corr}})_{T_k}
=\sum_{j=1}^{k-1}\sum_{| I | \le n-1}\ell_{I}^{\text{corr},j}
    \langle \tilde{e}_{I},\widehat{\mathbb{X}}_{T_k}-\widehat{\mathbb{X}}_{T_{j}}\rangle.$$
    Moreover, for any $t\geq0$ Lemma~\ref{lem2} yields
\begin{align*}
 \overline S_{n}(\ell,\ell^{\text{corr}})_{t}&=S_0+\int_0^t\sum_{| I | \le n-1}\bigg(\ell_{I}+\sum_{j=1}^{N-1}\ell_{I}^{\text{corr},j}1_{\{s\ge T_{j}\}}\bigg) \langle  e_I, \widehat{\mathbb{X}}_s \rangle \d X_s^1,
\end{align*}
showing that $ \overline S_{n}(\ell,\ell^{\text{corr}})$ is still continuous. We can also see that if $X^1$ is a local martingale the same is true for $\overline S_{n}(\ell,\ell^{\text{corr}})$.

\begin{remark}
The added correction term can be interpreted to take a similar role as the so-called leverage function  in local stochastic volatility models going back to \cite{L:02,RMQ:07}. 
Indeed, the leverage function is an adjustment to stochastic volatility models that
depends  on time (and state) and has the purpose to perfectly  fit the observed smiles, which by the underlying stochastic volatility model can usually not be achieved at the desired accuracy.
\end{remark}

Let us now describe the calibration procedure  for the adjusted model with time-dependent parameters.
 Suppose (for simplicity) that for each maturity $T_1,\ldots, T_N$ the price of call options with strikes $K_1,\ldots,K_M$ is known. The calibration is then performed recursively, adapting the loss function to the already calibrated parameters.
Precisely, the loss function used to calibrate $\ell$ is given by 
\begin{equation}\label{eqn14}
    L_{\text{options}}(\ell)=\sum_{j=1}^{M}\gamma_{1,j}\Big(C^{\ast}(T_{1},K_{j})-C^{\text{model}}(T_{1},K_{j},\ell)\Big)^{2},
\end{equation}
which is a version of \eqref{calibration:options}. Next, fix $k\in \{1,\ldots, N-1\}$ and let $\ell^{< k}:=(\ell, \ell^{\text{corr},1},\ldots,\ell^{\text{corr}, k-1})$ be the already calibrated coefficients.
The loss function used to calibrate $\ell^{\text{corr}, k}$ is given by 
\begin{align}\label{first-opt}
     L_{\text{options},k}(\ell^{\text{corr}, k})=\sum_{j=1}^{M}\gamma_{k,j}\Big(C^{\ast}(T_{k},K_{j})-C^{\text{model}}(T_{k},K_{j},\ell^{ <k},\ell^{\text{corr}, k})\Big)^{2}.
\end{align}
In both cases $\gamma_{k,j}$ denotes the Vega weight corresponding to maturity $T_k$ and strike $K_j$ and $C^{\text{model}}$ is computed using Monte Carlo  as outlined in Section~\ref{mc-pricing}, namely
$$C^{\text{model}}(T_{k},K_{j},\ell^{ <k},\ell^{\text{corr}, k})\approx\frac{1}{N_{MC}}\sum_{i=1}^{N_{MC}}(\overline S_{n}(\ell^{ <k},\ell^{\text{corr}, k})_{T_{k}}(\omega_i)-K_{j})^{+}.$$

To test this alternative approach we consider again the trading day 17/03/2021 for call options written on the S$\&$P 500 index, 7 maturities ranging from 30 days to 2 years, and 9 strike prices for each maturity  which vary between 80$\%$ and 120$\%$ of the spot price. The underlying process is chosen to be $X=(B,W)$ for two Brownian motions $B$ and $W$ with correlation $\rho=-0.5$. The truncation parameter is fixed to  $n=2$ and the number of Monte Carlo samples to $N_{MC}=10^6$.

In Figure~\ref{fig:comparison-ivs2} we report the results after minimizing \eqref{eqn14} and \eqref{first-opt} for all $k=1,\dots,6$. This shows that the signature model with time-dependent parameters perfectly fits each volatility smile. 
As illustrated in Figure~\ref{fig_skew} the model also manages to replicate
the observed term structure of the at-the-money (ATM) volatility skew, defined via
$ \psi(T):=\big\lvert\frac{\partial\sigma(T,K)}{\partial K} \big\lvert_{K=S_{0}},
$ for any maturity $T>0$,
very accurately. This is a feature
which can also be reproduced by rough volatility models, but classical stochastic volatility models rather generate a term structure of
the ATM skew that is constant for small maturities (see \cite{GJR:18}).

Finally, we stress the fact that  the calibration procedure is also very fast, as for each smile it only took about approximately 1 minute and 30 seconds. These results suggest that the signature model with time-dependent coefficients qualifies for very fast and highly accurate calibration fits.

In Figure \ref{fig:calibrated_parameters} we plot the values of the 91 calibrated parameters $\ell,\ell^{\text{corr}, 1},\ldots,\ell^{\text{corr}, 6}$. One can see that the corrections parameters get smaller and smaller. Note that we could have worked with a smaller number of corrections  leading to a similar performance in terms of accuracy. As shown at the beginning of the section one set of parameters is indeed enough to capture long maturities (e.g.~maturities longer than one year). 
\begin{center}
\begin{figure}[H]
    \centering
    \captionsetup{justification=centering}
    \centerline{\includegraphics[width=0.9\textwidth]{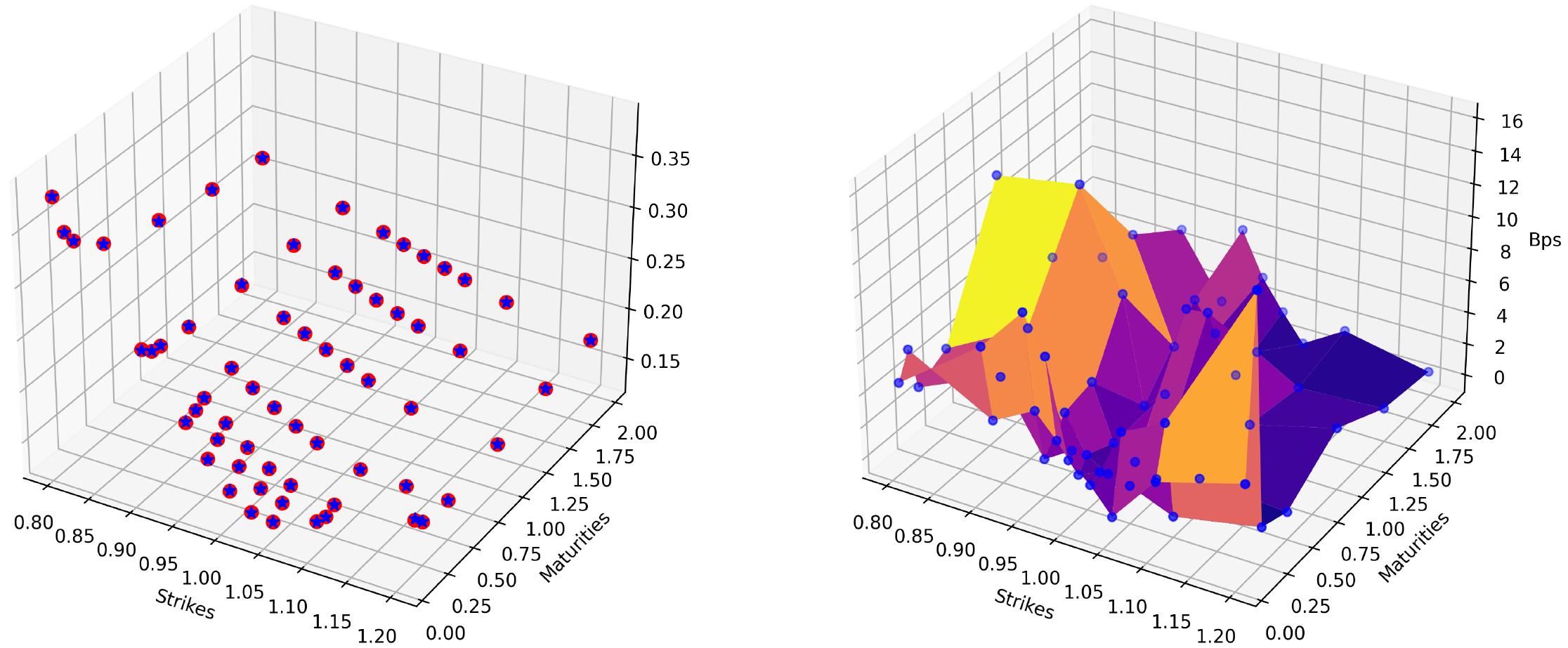}}
    \caption{
    On the left: blue stars denotes the implied volatilities on the market as of 17-03-21 on S$\&$P 500 index and the red dots denote the calibrated implied volatilities  with a signature model as in \eqref{matk-model} with $n=2$. On the right: absolute error between the two surfaces (Bps).}
    \label{fig:comparison-ivs2}
\end{figure}
\end{center}

\begin{center}
\begin{figure}[H]
    \centering
    \captionsetup{justification=centering}
    \centerline{\includegraphics[width=0.7\textwidth]{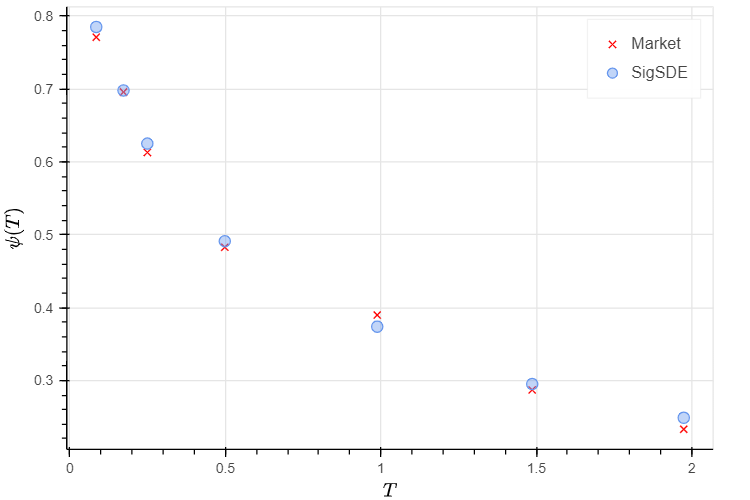}}
    \caption{Comparison between market term structure  of the at-the-money (red crosses) implied volatility skew $\psi$ and the calibrated one from the SigSDE (azure circles) minimizing \eqref{eqn14} and \eqref{first-opt}.}\label{fig_skew}
\end{figure}
\end{center}

\begin{center}
\begin{figure}[H]
    \centering
    \captionsetup{justification=centering}
    \centerline{\includegraphics[width=0.9\textwidth]{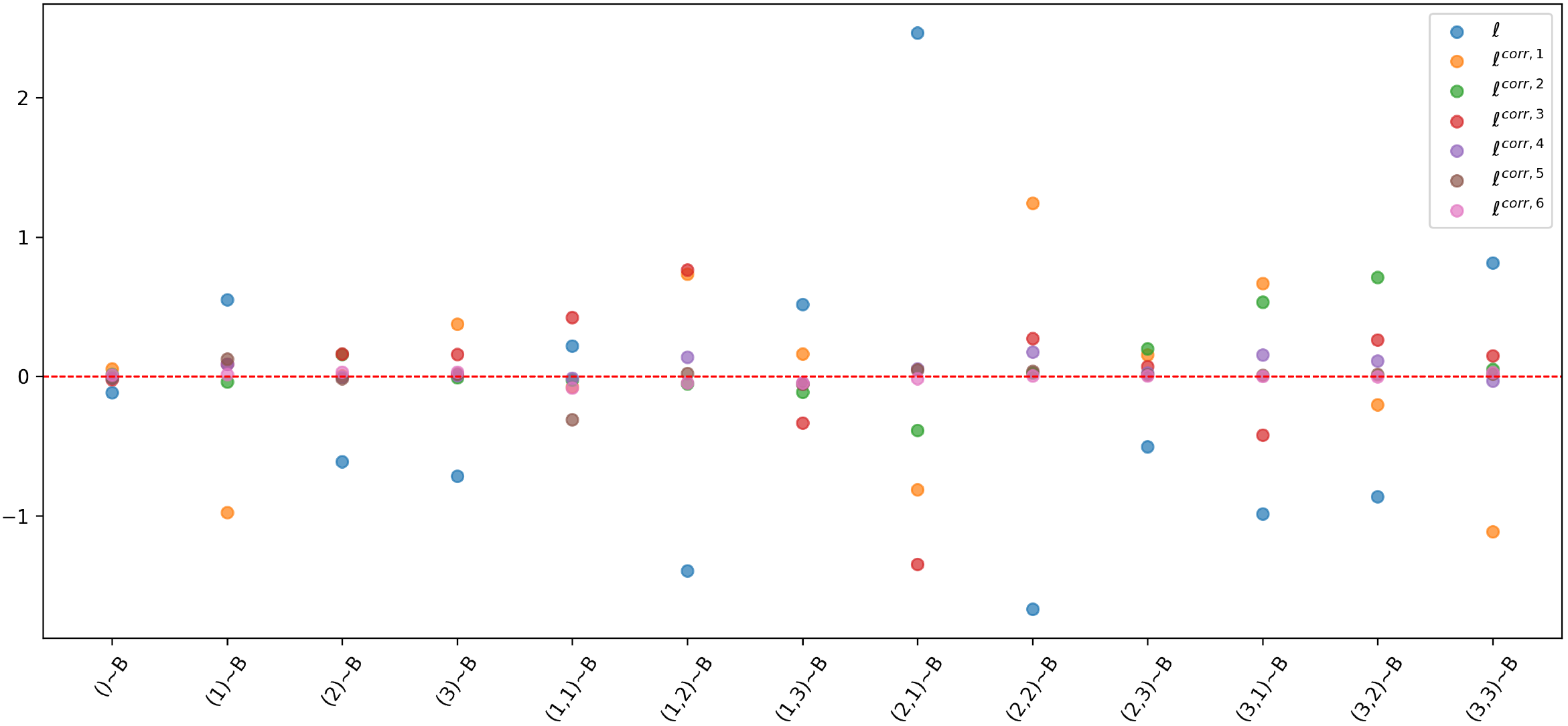}}
    \caption{Optimal parameters $\ell,\ell^{\text{corr}, 1},\ldots,\ell^{\text{corr}, 6}$ (consisting of 13 parameters each) found calibrating  to the volatility surface of  the market as of 17-03-21 on S$\&$P 500 index.}
    \label{fig:calibrated_parameters}
\end{figure}
\end{center}

\subsection{Joint calibration to time-series data and option prices}
In this section we combine the promising results obtained in Section \ref{sec:calibration_ts} and \ref{sec:calibration_options} to solve the following joint calibration problem. The main focus is to find a signature model that fits at the same time  both, a path realization coming from time series data as well as a volatility smile implied from option prices.
Let $\widehat X$ be as in Definition~\ref{def1}\ref{it1}, let $\{T_{1},\dots,T_{N}\}$ with $N>1$ be fixed times corresponding to the options' maturities and consider a signature model $S_{n}(\ell)$ as in Corollary~\ref{cor1}. Suppose that for each maturity $T_1,\ldots, T_N$ the prices of call options are available for several strikes,  and that we have time series data for the primary process and for the target model on a given time grid $\Pi$ at our disposal. Consider now the loss function used to calibrate $\ell$  given by 
\begin{equation*}
    L_{\text{joint}}(\ell)=\lambda L_{\text{options}}(\ell)+(1-\lambda)L_{\text{price},0}(\ell)
\end{equation*}
where $L_{\text{price},0}$ is given by  \eqref{calibration:price1} with $\alpha=0$ and times $t_i \in \Pi$,  $L_{\text{options}}$ is given by \eqref{eqn14}, and $\lambda\in[0,1]$.

\begin{remark}
When performing this joint calibration we implicitly assume to have a common primary process $X$ for both tasks.
We here suppose that $X$ is given by \eqref{eq:bm_estimation}, i.e.~we work with  Brownian motions under the local martingale measure $\Q$ specified in Example~\ref{ex3}. Note that under this measure $\Q$ the volatility\slash variance process $V$ is necessarily a local martingale.
Below we shall use a Heston model to 
generate both
the price trajectory  (under $\P$) and the option prices (under $\Q$). For the latter we specify the parameters to guarantee that $V$ is a (local) martingale under $\Q$.
One can of course also work with different local martingale measures as outlined  in Example \ref{ex3}.
\end{remark}
We consider an example where the traded asset $S$ follows Heston dynamics as described in \eqref{eqn30}, with the following parameters.
\begin{center}
\begin{tabular}{||c c c c c c c||} 
 \hline
 $\mu$ & $\kappa$ & $\theta$ & $\sigma$ &  $\rho$&$V_{0}$ & $S_{0}$\\ 
 \hline\hline
 0.001 & 0.8 & 0.1 & 0.55 & -0.5 & 0.12 & 1 \\ 
 \hline
\end{tabular}
\end{center}
under the physical measure $\mathbb{P}$. We fix $N=2$ and generate option prices at maturity $T_{1}=$3 months and $T_{2}=$1 year, under the set of parameters
\begin{center}
\begin{tabular}{||c c c c c c c||} 
 \hline
 $\mu$ & $\kappa$ & $\theta$ & $\sigma$ &  $\rho$&$V_{0}$ & $S_{0}$\\ 
 \hline\hline
 0 & 0 & 0 & 0.55 & -0.5 & 0.12 & 1 \\ 
 \hline
\end{tabular}
\end{center}
which ensures $S$ and $V$ to be local martingales under $\mathbb{Q}$. We consider additionally equi-spaced strike prices $\{K_{1},\dots,K_{9}\}$ with moneyness ranging between $70-130\%$ of the spot price. In Figure  \ref{fig:awesome_image3} we report the fit of the implied volatility smiles for $T_{1},T_{2}$  respectively, with less than 25 bps of absolute error. 
For the path calibration we extract daily Brownian motions from a Heston trajectory which is assumed to be observable with a frequency of 3 seconds for 8 hours during 9 months. The training sample consists of 3 months, namely 91 points and the out-of-sample performance in Figure \ref{fig:path_1mat}, is tested on 6 months. As hyper-parameters we choose $n=2$,  $\lambda=0.9$, and $N_{MC}=10^{6}$ trajectories for the Monte Carlo pricing. These plots  indicate that the model is also able to jointly fit time-series and option price data.

\begin{figure}[H]
    \centering
 \includegraphics[width=0.5\linewidth]{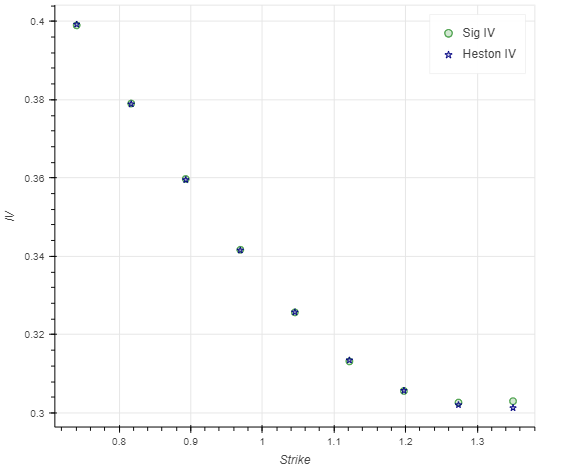}\includegraphics[width=0.5\linewidth]{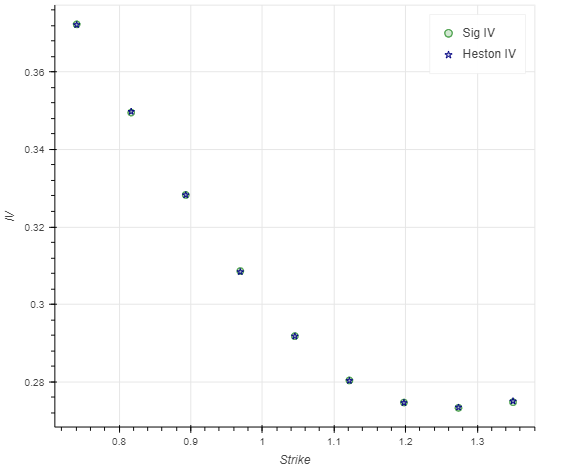}
  \caption{Comparison of implied volatilities for $T_1=$3 months and $T_{2}=$1 year.}
    \label{fig:awesome_image3}
\end{figure}
\begin{figure}[H]
    \centering
       \includegraphics[width=0.6\linewidth]{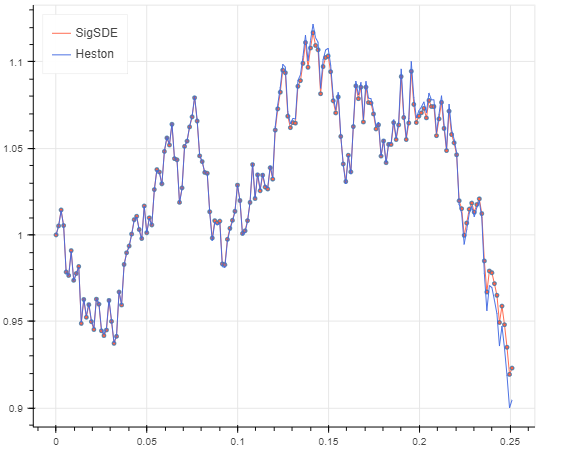}
    \caption{Out-of-sample path calibration performance, with MSE of $1.497\cdot 10^{-5}$.}
    \label{fig:path_1mat}
\end{figure}

\appendix

\section{Expected signature of correlated Brownian motions}\label{secBM}

A possible choice for the primary process $\Xx$ is given by a $d$-dimensional correlated Brownian motion $W$ with correlation matrix $\rho$. Since Assumption~\ref{ass1} is satisfied, we follow Definition~\ref{def1}\ref{it1} and set $\widehat W_t:=(t,W_t)$. The signature of $\widehat W$ up to time $t$ is denoted by $\widehat\W_t$.

 We explain now how to compute the expected signature of the extended Brownian motion $\widehat W$, i.e.~how to compute $\E[\langle e_I,\widehat\W_t\rangle]$ for each multi-index $I$. Several representations of related quantities including the signature of (independent) Brownian motions can be found in the literature, see for instance \cite{FW:03}, \cite{LV:04}, \cite{LN:15}, \cite{BDZN:21}. Observe that Theorem \ref{thm2} below can be seen as a particular case of the expected signature of a $d+1$-dimensional L\'evy process without  jumps (but with correlated Brownian motions), see \cite{FS:17} or of a multidimensional Gaussian process, see \cite{RFC:22}. Our approach is based on polynomial processes as introduced in \cite{CKT:12} and \cite{FL:16}. We   also refer to Section~4.5 in \cite{CS:21} for related computations for the signature of independent Brownian motions.
\begin{theorem}\label{thm2}
Consider a multi-index $I$ admitting the representation 
\begin{equation}\label{word}
	e_{I}=e_{0}^{\otimes k_{0}}\otimes e_{J_{1}}\otimes e_{0}^{\otimes k_{1}}\otimes e_{J_{2}}\otimes\cdots\otimes e_{0}^{\otimes k_{m}},
\end{equation}
for some $J_i\in\{1,\dots,d\}^{2h_{i}}$ and $h_i,k_i\in \N_0$. Then
\begin{equation*}
	\mathbb{E}[\langle e_{I},\widehat{\mathbb{W}}_{t}\rangle]=\frac{t^{\sum_{i=0}^{m}k_{i}+\sum_{i=1}^{m}h_{i}}}{\left(\sum_{i=0}^{m}k_{i}+\sum_{i=1}^{m}h_{i}\right)!}\bigg(\frac{1}{2}\bigg)^{\sum_{i=1}^{m}h_{i}}\prod_{i=1}^{m} \rho(J_{i}),
\end{equation*}
 where $\rho(J):=\prod_{k=1}^{|J|/2} \rho_{j_{2k-1},j_{2k}}$. Moreover, if $I$ does not admit representation \eqref{word} then $\E[\langle e_I,\widehat\W_t\rangle]=0$.
\end{theorem}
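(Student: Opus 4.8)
The plan is to compute $f_I(t):=\E[\langle e_I,\widehat\W_t\rangle]$ recursively in the length $|I|$, reducing the whole statement to a triangular system of linear ODEs, and then to verify the closed form by induction on $|I|$.

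First I would derive the recursion. Writing the indices of $\widehat W$ as $0,1,\dots,d$ (with $0$ the time component) and $I=(i_1,\dots,i_n)$, $I'=(i_1,\dots,i_{n-1})$, $I''=(i_1,\dots,i_{n-2})$, I convert the defining Stratonovich integral $\langle e_I,\widehat\W_t\rangle=\int_0^t\langle e_{I'},\widehat\W_s\rangle\circ\d\widehat W_s^{i_n}$ into It\^o form. If $i_n=0$ there is no correction term, so $f_I(t)=\int_0^t f_{I'}(s)\,\d s$. If $i_n\in\{1,\dots,d\}$, the It\^o correction is $\tfrac12[\langle e_{I'},\widehat\W\rangle,W^{i_n}]_t$; this vanishes when $n=1$ or $i_{n-1}=0$ (the process $\langle e_{I'},\widehat\W\rangle$ is then constant or of finite variation), whereas for $i_{n-1}\in\{1,\dots,d\}$ the definition of the signature together with $\d[W^{i_{n-1}},W^{i_n}]_s=\rho_{i_{n-1}i_n}\,\d s$ gives $[\langle e_{I'},\widehat\W\rangle,W^{i_n}]_t=\rho_{i_{n-1}i_n}\int_0^t\langle e_{I''},\widehat\W_s\rangle\,\d s$. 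Since the signature components of $\widehat W$ lie in $L^p$ on $[0,T]$ for every $p$ (e.g.\ by converting to It\^o form and applying Burkholder--Davis--Gundy inductively), the It\^o-integral parts are true martingales and drop out under $\E$. This leaves
\begin{align*}
f_\emptyset(t)&=1, & f_I(t)&=\int_0^t f_{I'}(s)\,\d s\ \ (i_n=0),\\
f_I(t)&=0\ \ (i_n\neq0,\ n=1\text{ or }i_{n-1}=0), & f_I(t)&=\tfrac12\rho_{i_{n-1}i_n}\int_0^t f_{I''}(s)\,\d s\ \ (i_n,i_{n-1}\neq0).
\end{align*}

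Next I would carry out the induction on $|I|$, proving simultaneously that $f_I\equiv0$ when $I$ is not of the form \eqref{word} and that $f_I$ equals the claimed expression otherwise. The key reformulation is that $I$ admits \eqref{word} precisely when every maximal run of consecutive non-zero letters of $I$ has even length; in that case the exponent $\sum_ik_i+\sum_ih_i$ equals $\#\{j:i_j=0\}+\tfrac12\#\{j:i_j\neq0\}$, while $\sum_ih_i=\tfrac12\#\{j:i_j\neq0\}$ and $\prod_i\rho(J_i)$ pairs the non-zero letters consecutively inside each run. With this dictionary the three recursion cases match the formula transparently: appending a $0$ raises the exponent by one (consistent with $\int_0^t s^N/N!\,\d s=t^{N+1}/(N+1)!$) and changes nothing else; appending a non-zero letter immediately after a $0$ (or at the start) creates an odd run, hence a non-admissible word, and kills $f_I$; appending two non-zero letters extends (or opens) the final run by one pair, raising both $\sum_ih_i$ and the exponent by one and multiplying the $\rho$-product by $\rho_{i_{n-1}i_n}$ — exactly the factor $\tfrac12\rho_{i_{n-1}i_n}$ together with the integration produced by the recursion. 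One also checks that non-admissibility of $I''$ forces non-admissibility of $I$ in this last case, so that $f_{I''}\equiv0$ propagates to $f_I\equiv0$.

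The real content is the recursion; the remainder is bookkeeping, so the main (mild) obstacle will be getting the combinatorics of the run decomposition and the consecutive $\rho$-pairing to line up cleanly with the two-step reduction $I\mapsto I''$, together with the small analytic point that the signature components of Brownian motion are genuine rather than merely local martingales. For completeness I note that the statement also follows from the polynomial-process framework of \cite{CKT:12,FL:16}: the $T((\R^{d+1}))$-valued signature of $\widehat W$ is a polynomial process, its expected value solves a linear ODE on $T((\R^{d+1}))$ whose restriction to each tensor degree is finite dimensional, and solving this ODE degree by degree reproduces exactly the recursion above.
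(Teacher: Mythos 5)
Your proof is correct, and the first half of it coincides with what the paper does: the recursion you derive,
\[
f_I'(t)=1_{\{i_n=0\}}f_{I'}(t)+\tfrac{1}{2}\rho_{i_{n-1}i_n}1_{\{i_n\neq0,\,i_{n-1}\neq 0\}}f_{I''}(t),
\]
is exactly the action of the generator $\Acal\langle e_I,\fdot\rangle=\langle 1_{\{i_{|I|}=0\}}e_{I'}+\frac{\rho_{i_{|I|},i_{|I|-1}}}{2}1_{\{i_{|I|},i_{|I|-1}\neq0\}}e_{I''},\fdot\rangle$ computed in the paper. Where you diverge is in how the resulting triangular linear system is solved. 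The paper packages $\widehat\W^N$ as a polynomial process in the sense of Filipovi\'c--Larsson and invokes their moment formula to get the closed form $\E[\langle e_I,\widehat\W_t\rangle]=\sum_{k}\frac{t^k}{k!}\langle e_I,Q^{\otimes k}\rangle$ with $Q=e_0+\frac12\sum_{i,j}\rho_{i,j}e_i\otimes e_j$, and then reads off the combinatorics from the nonvanishing of $\langle e_I,Q^{\otimes k}\rangle$, i.e.\ from the decompositions of $e_I$ into single $e_0$'s and adjacent nonzero pairs; you instead integrate the system degree by degree and verify the closed form by induction on $|I|$ via the run decomposition of the word. The two combinatorial bookkeepings are the same in substance; your route is more elementary and self-contained (you even supply the true-martingale justification via BDG, which the paper delegates to the cited moment formula), while the paper's route is shorter once the polynomial-process machinery is accepted and fits the broader theme of the article. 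One small phrasing point in your last induction step: what the induction actually needs is that non-admissibility of $I$ forces non-admissibility of $I''$ when $i_n,i_{n-1}\neq0$ (so that $f_{I''}\equiv0$ can be invoked); you state the converse implication, but since appending two nonzero letters neither creates nor destroys an odd run, admissibility of $I$ and of $I''$ are equivalent in this case and both directions hold, so nothing is lost.
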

\begin{proof}
We first prove that the truncated signature $\widehat \W_t^N$ is a polynomial process in the sense of Definition~2.1 in \cite{FL:16}. By Lemma~2.2 in \cite{FL:16}, it suffices to show that for each $|I|\leq N$ it holds
\begin{equation}\label{eqn4}
\d\langle e_I,\widehat \W_t^N\rangle=b(\widehat \W_t^N)\d t+\sigma(\widehat \W_t^N)\d W_t
\end{equation}
for some linear maps $b,\sigma$. Observe that
\begin{align*}
	\langle e_{I}, \widehat \W^N_t \rangle&=\int_0^t\langle e_{I'}, \widehat \W^N_s \rangle\circ {\d}\langle e_{i_{|I|}}, \widehat W_s\rangle\\
	&=\int_0^t\langle e_{I'}, \widehat \W^N_s \rangle {\d}\langle e_{i_{|I|}}, \widehat W_{s}\rangle+\frac{\rho_{i_{{|I|}},i_{{|I|}-1}}}{2}1_{\{i_{{|I|}},i_{{|I|}-1}\neq 0\}}\int_0^t\langle e_{I''}, \widehat \W^N_s \rangle {\d}s\\
	&=\int_0^t\langle  1_{\{i_{{|I|}}=0\}}e_{I'} +\frac{\rho_{i_{{|I|}},i_{{|I|}-1}}}{2}1_{\{i_{{|I|}},i_{{|I|}-1}\neq 0\}} e_{I''}, \widehat \W^N_s \rangle \d s\\&\qquad+\int_0^t \langle e_{I'}, \widehat \W^N_s \rangle1_{\{i_{{|I|}}\neq 0\}} {\d}W_{s}^{i_{|I|}}.
\end{align*}
Since $|I'|,|I''|\leq N$ we can conclude that $\widehat\W^N$ satisfies \eqref{eqn4} and is thus a polynomial process.

Our next step consists in applying Theorem~3.1 in \cite{FL:16}. Observe that by  It\^o's formula we have that the generator $\Acal$ of $\widehat \W^N$ satisfies
$$\Acal\langle e_I,\fdot\rangle
:=\langle  1_{\{i_{|I|}=0\}}e_{I'} +\frac{\rho_{i_{|I|},i_{|I|-1}}}{2}1_{\{i_{|I|},i_{|I|-1}\neq 0\}} e_{I''},\fdot\rangle
=\sum_{e_{I_1}\otimes e_{I_2}=e_I}\langle e_{I_1},Q\rangle\langle e_{I_2},\fdot\rangle,
$$
for $Q:=e_0+\frac 1 2 \sum_{i,j=1}^d \rho_{i,j}e_i\otimes e_j$. Shortly, this can be written as
$\Acal\langle e_I,\cdot\rangle=\langle e_I,Q\otimes (\cdot)\rangle.$
Theorem~3.1 in \cite{FL:16} yields then that
\begin{align*}
    \E[\langle e_I,\widehat \W_t\rangle]
    &= \sum_{k=0}^\infty \frac 1 {k!}(t\Acal)^k\langle e_I,\fdot\rangle(\widehat \W_0)
    =\sum_{k=0}^\infty \frac {t^k} {k!}\langle e_I,Q^{\otimes k}\otimes \widehat \W_0\rangle
    =\sum_{k=0}^{|I|} \frac {t^k} {k!}\langle e_I,Q^{\otimes k}\rangle.
\end{align*}
Observe that $\langle e_I,Q^{\otimes k}\rangle=\sum_{e_{I_1}\otimes\cdots\otimes e_{I_k}=e_I}\prod_{i=1}^k\langle e_{I_i},Q\rangle$ is nonzero if and only if $I$ can be decomposed in multi-indices of length 1 whose elements are all 0, and multi-indices of length 2, whose elements are all strictly larger than 0, i.e.~if and only if $I$ satisfies \eqref{word}. If this is the case, by definition of $Q$ it holds
$$\E[\langle e_I,\widehat \W_t\rangle]
=\frac {t^{\sum_{i=0}^{m}k_{i}+\sum_{i=1}^{m}h_{i}}} {\left(\sum_{i=0}^{m}k_{i}+\sum_{i=1}^{m}h_{i}\right)!}\langle e_0,Q\rangle^{\sum_{i=0}^m k_i}\prod_{i=1}^m\langle e_{J_i},Q^{\otimes h_i}\rangle, $$
and since $\langle e_0,Q\rangle=1$ and $\langle e_{J_i},Q^{\otimes h_i}\rangle=(\frac 1 2 )^{h_1}\prod_{k=1}^{h_i}\rho_{j_{2k-1},j_{2k}}$ the claim follows.
\end{proof}
A combination of the result of Theorem~\ref{thm2} with Chen's identity (Lemma~\ref{lem5}) yields the following expression for the conditional expected signature.
\begin{lemma}
For each $s\leq t$ it holds
$$\mathbb{E}[\langle e_{I}, \widehat{\mathbb{W}}_{s+t}\rangle|\Fcal_s]
	=\sum_{e_{I_1}\otimes e_{I_2}=e_I}\langle e_{I_1}, \widehat{\mathbb{W}}_{s}\rangle
	\mathbb{E}[
	\langle e_{I_2}, \widehat{\mathbb{W}}_{t}\rangle].$$
\end{lemma}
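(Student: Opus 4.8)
The plan is to combine Chen's identity (Lemma~\ref{lem5}) with the independence and stationarity of Brownian increments. First I would apply Chen's identity on the triple $0\le s\le s+t$ to write
\[
\langle e_{I}, \widehat{\mathbb{W}}_{s+t}\rangle = \sum_{e_{I_1}\otimes e_{I_2}=e_I}\langle e_{I_1}, \widehat{\mathbb{W}}_{s}\rangle\,\langle e_{I_2}, \widehat{\mathbb{W}}_{s,s+t}\rangle,
\]
where $\widehat{\mathbb{W}}_{s,s+t}$ denotes the signature of $\widehat{W}$ restricted to the interval $[s,s+t]$. Since each factor $\langle e_{I_1}, \widehat{\mathbb{W}}_{s}\rangle$ is $\Fcal_s$-measurable, taking conditional expectations given $\Fcal_s$ and pulling these factors out of the expectation reduces the claim to the identity
\[
\mathbb{E}\big[\langle e_{I_2}, \widehat{\mathbb{W}}_{s,s+t}\rangle \,\big|\, \Fcal_s\big] = \mathbb{E}\big[\langle e_{I_2}, \widehat{\mathbb{W}}_{t}\rangle\big]
\]
for every multi-index $I_2$.

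To establish this, I would observe that the process $r \mapsto \widehat{W}_{s+r} - \widehat{W}_s = (r,\, W_{s+r}-W_s)$ is again a time-extended correlated Brownian motion with correlation matrix $\rho$, and that it is independent of $\Fcal_s$ by the independence of Brownian increments. Recalling that the signature is invariant under translations of the underlying path, and that the recursive Stratonovich integrals defining $\widehat{\mathbb{W}}_{s,s+t}$ transform, under the time shift $r\mapsto s+r$, precisely into those defining the signature of $r\mapsto \widehat{W}_{s+r}-\widehat{W}_s$ on $[0,t]$, one gets simultaneously that $\widehat{\mathbb{W}}_{s,s+t}$ is measurable with respect to $\sigma(W_{s+r}-W_s : r\in[0,t])$ --- hence independent of $\Fcal_s$ --- and that it has the same law as $\widehat{\mathbb{W}}_{0,t}=\widehat{\mathbb{W}}_{t}$. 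Its conditional expectation given $\Fcal_s$ therefore coincides with its unconditional expectation $\mathbb{E}[\langle e_{I_2}, \widehat{\mathbb{W}}_{t}\rangle]$, and the lemma follows. Alternatively, one may bypass the distributional argument altogether by invoking the explicit formula of Theorem~\ref{thm2}, which shows directly that $\mathbb{E}[\langle e_{I_2},\widehat{\mathbb{W}}_{s,s+t}\rangle\mid\Fcal_s]$ equals the deterministic quantity given there, with $t$ playing the role of the interval length.

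I do not expect a serious obstacle here: the only point requiring care is the bookkeeping in the second step, namely making the shift-invariance and stationary-increments argument precise for the time-augmented process --- in particular keeping track of the deterministic time component separately from the Brownian component when performing the substitution $r\mapsto s+r$ in the iterated Stratonovich integrals.
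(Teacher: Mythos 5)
Your proposal is correct and follows essentially the same route as the paper: Chen's identity over $0\le s\le s+t$, pulling out the $\Fcal_s$-measurable factors, and then identifying $\widehat{\mathbb{W}}_{s,s+t}$ with the signature of the shifted time-extended Brownian motion $r\mapsto \widehat W_{s+r}-\widehat W_s$, which is independent of $\Fcal_s$ and equal in law to $\widehat{\mathbb{W}}_t$. The paper makes the bookkeeping in your second step precise via the substitution $\int_s^{s+u}H_r\,\d \widehat W_r^i=\int_0^{u}H_{s+r}\,\d \widehat B_{r}^i$ with $\widehat B_r:=\widehat W_{s+r}-\widehat W_s$, exactly as you anticipate.
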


\begin{proof}
Observe that applying Chen's identity and the definition of the tensor product we can compute
 \begin{align*}
	\mathbb{E}[\langle e_{I}, \widehat{\mathbb{W}}_{s+t}\rangle|\Fcal_s]
	&=
	\mathbb{E}[\langle e_{I}, \widehat{\mathbb{W}}_{s}\otimes\widehat{\mathbb{W}}_{s,s+t}\rangle|\Fcal_s]\\
		&=\sum_{e_{I_1}\otimes e_{I_2}=e_I}
	\mathbb{E}[\langle e_{I_1}, \widehat{\mathbb{W}}_{s}\rangle
	\langle e_{I_2}, \widehat{\mathbb{W}}_{s,s+t}\rangle|\Fcal_s]\\
			&=\sum_{e_{I_1}\otimes e_{I_2}=e_I}\langle e_{I_1}, \widehat{\mathbb{W}}_{s}\rangle
	\mathbb{E}[
	\langle e_{I_2}, \widehat{\mathbb{W}}_{s,s+t}\rangle|\Fcal_s].
	\end{align*}
Note that for each suitable integrand $H$ and each $u\in[0,t]$ it holds 
$$\int_s^{s+u}H_r\d \widehat W_r^i=\int_0^{u}H_{s+r}\d \widehat B_{r}^i,$$
where $\widehat B$ is the time extended Brownian motion given by $\widehat B_r:=\widehat W_{s+r}-\widehat W_s$. Since $B$ is independent of $\Fcal_s$ we can conclude that $\mathbb{E}[
	\langle e_{I_2}, \widehat{\mathbb{W}}_{s,s+t}\rangle|\Fcal_s]
	=\mathbb{E}[
	\langle e_{I_2}, \widehat{\mathbb{W}}_{t}\rangle]
	$, and the claim follows.
\end{proof}

\section{Calibration using sig-payoffs}\label{sec:calib-sigpayoffs}
We now discuss the procedure proposed in \cite{PSS:20} which  consists in calibrating the model to sig-payoffs that are intended to approximate the true payoffs. 
For general payoff functions $F$,  $$F:(\widehat\Ss_t)_{t\in[0,T]}\mapsto F\Big((\widehat\Ss_t)_{t\in[0,T]}\Big),$$
the first step consists in approximating $F$
with a sig-payoff such that
\begin{align}\label{eq:payoffapprox}
    F\Big((\widehat\Ss_t)_{t\in[0,T]}\Big)\approx f_\emptyset+\sum_{0<|J|\leq m}f_J\langle e_J,\widehat \S_{T}\rangle,
\end{align}
for some $m\geq 0$ and $f_\emptyset,f_J\in \R$. Here,  $\widehat \S_{T}$ denotes the signature of $\widehat \Ss=(t,S_t)$ at time $T$.
For our purposes, the approximation of the call payoffs  has to be accurate at least with high probability on the set of models
$$\{\widehat S_{n}(\ell)\colon \ell \text{ enters in the optimisation}\}.$$
Under the hypothesis that this can be achieved and proceeding as in Section~\ref{sig-payoffs},  the model call prices $C^{\text{model}}(T,K,\ell)$ (appearing in \eqref{calibration:options}) can then be (approximately) computed via
 $$C^{\text{model}}(T,K,\ell)=\E_\Q[(S_{n}(\ell)_T-K)^+]\approx f_\emptyset +\sum_{0<|J|\leq m}f_J \widetilde P_J(\ell,\E_\Q[\widehat\X_T]).$$
 The benefits from a computational perspective are immediate.

To achieve the approximation of the payoff functions, in \cite{LNP:20} and \cite{P:20} it is suggested to use an auxiliary stochastic process $(Y_t)_{t\geq0}$ playing the role of a market generator under $\Q$. A standard example for $Y$ is given by the Black-Scholes model 
$$\d
Y_t=\sigma Y_t \d W_t,\qquad Y_0=S_0,$$
where $W$ denotes a Brownian motion and $\sigma$ is sampled uniformly in the range of the implied volatility surface of the market, or as in \cite{P:20} between 5$\%$ and 40$\%$ of the spot price. We set again $\widehat Y_t:=(t,Y_t)$ and denote by $\widehat \Y$ the signature of $\widehat Y$.  The procedure to find the coefficients $f$ of \eqref{eq:payoffapprox} consists in a linear regression, fitting $N_{MC}>0$ realizations of the simulated sig-payoffs
 to the corresponding realizations of the simulated payoff. The corresponding loss function is given by
\begin{equation*}
L_{\text{payoff}}(f):= \sum_{j=1}^{N_{MC}}\bigg((Y_T(\omega_j)-K)^+-\Big(f_\emptyset+\sum_{0<|J|\leq m}f_{J}\langle e_{J}, \widehat\Y_{T}(\omega_{j})\rangle\Big)\bigg)^{2}.
\end{equation*}
In \cite{P:20} the author suggests to choose $N_{MC}=10^{4}$, $m=5$ and to sample $\sigma\sim \mathcal{U}([0.05,0.4])$.

We applied this procedure and then compared the call prices obtained under the Black Scholes model with $\sigma=0.25$ with the the approximation via sig-payoffs denoted by $F(T,K)$. For $T=0.5$, $Y_{0}=100$, and $K=120$, we obtain the following results:

\begin{center}
\renewcommand{\arraystretch}{2}
\begin{tabular}{||c |c|c| c||} 
 \hline
 $\mathbb{E}_{\mathbb{Q}}[(S_T-K)^+]$ & $\frac{1}{N_{MC}}\sum_{i=1}^{N_{MC}}(S_T(\omega_{i})-K)^+$ & $\mathbb{E}_{\mathbb{Q}}[F(T,K)]$ & $\frac{1}{N_{MC}}\sum_{i=1}^{N_{MC}}F(T,K)(\omega_{i})$\\ 
 \hline\hline
 1.5155 & 1.5145 & 1.6567 &  1.6550 \\ 
 \hline
\end{tabular}
\end{center}

Here, $\mathbb{E}_{\Q}[(S_T-K)^+]$ was computed via the Black-Scholes formula and $\mathbb{E}_{\Q}[F(T,K)]$ using the analytic expression of the expected signature of the time extended Black-Scholes model. We also compare the corresponding Monte Carlo prices using $N_{MC}=10^6$. We observed that the absolute percentage error between the Black-Scholes price and the sig-payoff option price is greater than $8\%$. The difference can become even more significant if the Black-Scholes model is replaced by  $S_{n}(\ell)$ for $\ell$ in a certain parameter range over which we optimize. So 
even if the model can provide an accurate fit to the sig-payoffs,
this does not mean that the true
implied volatility surface is well approximated, as the gap between call option prices and the approximate sig-option prices is simply too large. Therefore we opted for the procedure outlined in Section \ref{sec:calibration_options}.

\end{document}